\documentclass[review]{elsarticle}

\usepackage{graphicx}
\usepackage{amsfonts,amsmath,amssymb,amsthm}
\usepackage{microtype}
\usepackage{array}
\usepackage{booktabs}
\usepackage{longtable}
\usepackage{multirow}
\usepackage{makecell}

\usepackage{tikz}
\usepackage{pgfplots}
\pgfplotsset{compat=1.18}
\usetikzlibrary{arrows.meta,calc,positioning}

\usepackage{xcolor}
\definecolor{darkblue}{RGB}{0,70,140}

\usepackage{hyperref}
\hypersetup{
  colorlinks=true,
  linkcolor=darkblue,
  citecolor=darkblue,
  urlcolor=darkblue
}

\journal{Insurance: Mathematics and Economics}

\theoremstyle{plain}
\newtheorem{theorem}{Theorem}[section]
\newtheorem{proposition}[theorem]{Proposition}
\newtheorem{lemma}[theorem]{Lemma}
\newtheorem{corollary}[theorem]{Corollary}

\theoremstyle{definition}
\newtheorem{definition}[theorem]{Definition}

\theoremstyle{remark}

\theoremstyle{plain}
\newtheorem{assumption}{Assumption}

\newcommand{\Var}{\operatorname{Var}}
\newcommand{\Cov}{\operatorname{Cov}}
\newcommand{\SMD}{\mathrm{SMD}}
\newcommand{\FMD}{\mathrm{FMD}}
\newcommand{\E}{\mathbb{E}}

\newcommand{\bX}{\boldsymbol{X}}
\newcommand{\bx}{\boldsymbol{x}}
\newcommand{\bbeta}{\boldsymbol{\beta}}
\newcommand{\bgamma}{\boldsymbol{\gamma}}
\newcommand{\btheta}{\boldsymbol{\vartheta}}
\newcommand{\bmu}{\boldsymbol{\mu}}
\newcommand{\bSigma}{\boldsymbol{\Sigma}}

\begin{document}

\begin{frontmatter}

\title{Direct and Indirect Discrimination\\in Generalized Linear Models}

\author[milliman]{Bertille Tierny}
\ead{bertille.tierny@milliman.com}

\author[uqam,kyoto]{Arthur Charpentier}
\ead{charpentier.arthur@uqam.ca}

\author[milliman,lyon]{François Hu}
\ead{francois.hu@milliman.com}

\address[milliman]{Milliman R\&D, Paris, France}
\address[uqam]{Université du Québec à Montréal, Canada}
\address[kyoto]{Kyoto University, Japan}
\address[lyon]{LSAF, Université Claude Bernard Lyon 1, France}

\begin{abstract}
Generalized linear models are central to actuarial modelling of binary risk, claim frequency, utilization, and cost-related outcomes. Yet fairness diagnostics often rely on linear-model intuitions, although GLM predictions are obtained by transporting a latent score through a nonlinear inverse link. We develop a moment-based decomposition framework for diagnosing group disparities in fitted GLM predictions. In an exact linear-Gaussian benchmark, the Wasserstein barycentric criterion for distributional demographic-parity violation reduces to a two-moment criterion and decomposes into direct mean, indirect mean, interaction, and structural components. For GLMs, we distinguish the empirical output-scale criterion $U_2(f)$, a within-group proxy $\widetilde U_2(f)$, and a leading decomposition $D_1(f)$. This leading term preserves the four linear channels and adds two curvature components induced by the inverse link: curvature coupling and curvature amplification. We derive explicit formulas for logistic, Poisson, and Tweedie specifications and illustrate the diagnostic on medical-expenditure survey data. The framework is not a legal test of discrimination, nor a full characterization of distributional parity outside the linear-Gaussian case. It is a tractable actuarial diagnostic for identifying whether fitted prediction disparities arise from explicit sensitive effects, proxy-mediated covariate profiles, covariance-structure differences, or nonlinear link effects.
\end{abstract}

\begin{keyword}
Discrimination \sep Generalized linear models \sep Insurance pricing \sep Fairness
\MSC[2020] 62J12 \sep 91G70
\end{keyword}

\end{frontmatter}

\section{Introduction and Motivation}
\label{sec:introduction}

Generalized linear models (GLMs) remain among the main statistical workhorses of actuarial modelling. In insurance, health applications, and risk classification, the quantities of interest are rarely Gaussian on their raw scale.
Claim occurrence is binary, claim frequency is discrete, and utilization or loss-related outcomes are often skewed, zero-inflated, and heterogeneous.
GLMs address these features by combining interpretable predictors, response distributions adapted to non-Gaussian outcomes, and link functions that map a latent score to a prediction on the scale used for actuarial decisions \cite{mccullagh1989generalized,de2008generalized,kaas2008modern,frees2009regression,ohlsson2010non,frees2014predictive,frees2016predictive,dunn2018generalized,denuit2019generalized}. Logistic regression, Poisson regression, and Tweedie-type models are therefore standard tools for modelling binary risk, frequency, utilization, and cost-related outcomes. Generalized additive models (GAMs) provide a closely related extension when nonlinear effects must remain interpretable \cite{hastie1990generalized,wood2017generalized}.

Fairness questions are particularly delicate in insurance because classification and segmentation are intrinsic to actuarial practice. Risk-based pricing and risk scoring differentiate policyholders according to predictive variables, but some of these variables may also be correlated with protected or socially sensitive characteristics. This creates a tension between actuarial relevance, legal acceptability, and ethical legitimacy \cite{baumann2023fairnessrisk,lindholm2022discrimination,cote2025fairprice,charpentier2024insurance}. Removing an explicitly sensitive variable from a tariff is therefore not enough. Ordinary rating factors may still carry sensitive information, and disparities may persist through differences in covariate profiles, in covariance structures, or in the way a nonlinear link transforms latent scores into predictions.

This paper develops a decomposition framework for diagnosing such disparities in actuarial GLMs. The objective is not to propose a legal test of discrimination, nor to replace model validation or actuarial judgement. The objective is narrower and operational: given a fitted model, identify which statistical channels explain differences in fitted predictions across sensitive groups. This is relevant for pricing, segmentation, compliance reviews, model audits, and governance discussions, because different channels call for different responses. An explicit sensitive coefficient, a proxy-mediated covariate profile effect, a structural dispersion effect, and a link-induced curvature effect do not have the same interpretation.

Throughout the paper, the terms direct and indirect are used in this statistical diagnostic sense. A direct component refers to the explicit fitted contribution of the sensitive attribute in the score. An indirect component refers to disparities transmitted through the distribution of non-sensitive covariates. These terms are related to familiar discussions about explicit and proxy-driven disparities, but they do not by themselves establish a legal finding of discrimination.

The starting point is an exact linear benchmark. Under the identity link, the latent score and the fitted prediction coincide. If the group-conditional score distributions are Gaussian, the Wasserstein barycentric measure of distributional demographic-parity violation reduces exactly to a two-moment criterion. This criterion decomposes into four channels: a direct mean component, an indirect mean component, an interaction component, and an indirect structural component. The interaction term records whether the direct and indirect mean channels reinforce or offset one another. The structural term captures differences in within-group score dispersion induced by differences in the covariance structure of the covariates.

Actuarial GLMs require an additional step. Their predictions are read on the mean scale,
\[
  f(\bx,s)=h\{\eta(\bx,s)\},
\]
where $h$ is the inverse link and $\eta$ is the latent score. A disparity that
is additive on the latent scale may be compressed, amplified, or reshaped after
the nonlinear transformation. For this reason, the linear decomposition cannot
be transferred mechanically to GLMs. The link function itself becomes part of
the diagnostic.

The GLM extension separates three objects. The first is the empirical
output-scale criterion
\[
  U_2(f),
\]
which measures between-group differences in the conditional means and
standard deviations of fitted predictions. The second is a within-group proxy
\[
  \widetilde U_2(f),
\]
obtained by transporting the first two latent-score moments through a first-order
delta approximation. The third is a leading decomposition
\[
  D_1(f),
\]
obtained by expanding the proxy criterion across sensitive groups around the
global latent mean. This yields the identity
\[
  U_2(f)
  =
  D_1(f)
  +
  \Delta_{\mathrm{within}}(f)
  +
  \Delta_{\mathrm{across}}(f),
\]
where the two error terms make the approximation explicit. The leading decomposition $D_1(f)$ preserves the four linear channels and adds two curvature terms induced by the inverse link: a curvature coupling term, which links group-wise latent means and score dispersions, and a curvature amplification term, which converts latent mean heterogeneity into additional prediction-scale dispersion.

This distinction is important for empirical work. Outside the linear-Gaussian benchmark, the proposed criterion is not a full characterization of distributional demographic parity. It tracks differences in the first two conditional moments of the fitted predictions, not equality of the full conditional prediction laws. Moreover, the GLM decomposition is first-order.
Its accuracy depends on within-group score dispersion, separation of group-wise latent means, and curvature of the inverse link over the fitted score range. For this reason, the empirical analysis reports the exact criterion $U_2(f)$, the proxy $\widetilde U_2(f)$, the leading decomposition $D_1(f)$, and the approximation gaps.

The paper also studies a latent moment-aligning post-processing. This correction is applied to the latent score, not directly to the prediction scale.
It aligns the first two conditional moments of the fitted score across sensitive groups and therefore eliminates the first-order proxy disparity. In nonlinear GLMs, however, residual output-scale disparities may remain because the inverse link transforms higher-order differences in the conditional score distributions. The correction is therefore used as a diagnostic and comparative device, not as a normative definition of a fair tariff.

The empirical illustrations use the Medical Expenditure Panel Survey Household Component (MEPS-HC) data. We consider two sensitive comparisons: Hispanic versus non-Hispanic individuals, and uninsured versus insured individuals. The first comparison reflects an ethnicity-related disparity, while
the second reflects a coverage-related disparity linked to access to care,
utilization, and socioeconomic vulnerability. The same empirical design is used
across model classes to show how the diagnostic changes when the modelling
target and link function change.

The remainder of the paper is organized as follows.
Section~\ref{sec:lm-decomposition} presents the exact linear benchmark, the
four-way decomposition, and the latent moment-aligning correction in the
identity-link case. Section~\ref{sec:glm-decomposition} develops the general
GLM decomposition, distinguishes $U_2(f)$, $\widetilde U_2(f)$, and
$D_1(f)$, formalizes the approximation errors, and extends the construction
to GAMs. Section~\ref{sec:logistic} specializes the framework to logistic
regression and illustrates the decomposition on hospitalization risk.
Section~\ref{sec:poisson} treats Poisson regression and applies the diagnostic
to office-visit frequency. Section~\ref{sec:tweedie} studies Tweedie GLMs,
distinguishing canonical and log-link specifications; the empirical application
uses a log-link variance-power model for office-visit utilization, followed by a discussion of aggregate losses and pure-premium models.
Section~\ref{sec:conclusion} discusses the scope, limitations, and actuarial interpretation of the proposed diagnostic. The appendix reports the empirical design and the computation of the diagnostic quantities.
\section{Linear Benchmark}
\label{sec:lm-decomposition}

The generalized linear model decompositions developed later in the paper are
built from an exact linear benchmark. Under the identity link, the latent score
and the prediction coincide, so disparities in predictions can be decomposed
without approximation. This benchmark follows the decomposition logic developed for linear models in
\cite{tierny2025decomposing}, while adapting it here to the notation and
actuarial interpretation needed for the GLM extension. This section
introduces the notation, connects the construction to Oaxaca--Blinder-type
decompositions, derives the exact four-way decomposition, and illustrates the
resulting diagnostic on medical expenditure data.

Throughout the paper, the terms direct and indirect are used in a statistical
diagnostic sense. A direct component refers to the explicit fitted contribution
of the sensitive attribute in the score. An indirect component refers to
disparities transmitted through the distribution of the non-sensitive covariates.
These terms do not, by themselves, constitute a legal characterization of
discrimination.

\subsection{Prediction disparities under a single fitted model}
\label{subsec:lm-prediction-disparities}

Classical Kitagawa--Oaxaca--Blinder decompositions explain an observed outcome
gap between groups by separating differences in characteristics from differences
in coefficients \cite{kitagawa1955components,oaxaca1973male,blinder1973wage};
see also \cite{jann2008blinder,fortin2011decomposition}. Our object is
different. We do not compare two group-specific outcome regressions. We study
the disparities generated by a single fitted predictor when it is applied to
different sensitive groups. The analogy with Oaxaca--Blinder is therefore
structural: the aim is to decompose a prediction disparity into interpretable
channels, not to decompose an observed outcome gap. Related nonlinear and
distributional extensions exist \cite{fairlie2005extension}. In the fairness
literature, linear models under demographic-parity constraints have also been
studied from an optimization and minimax perspective
\cite{fukuchi2023demographic}. The present paper takes a complementary route:
it uses the linear case as an exact diagnostic benchmark for decomposing
prediction disparities, before transporting the same logic to actuarial GLMs.

Let $\bX\in\mathbb R^k$ denote the vector of non-sensitive covariates, $S\in[M]:=\{1,\ldots,M\}$ a discrete sensitive attribute, and $Y$ the response.
We write
\[
  \pi_s:=\mathbb P(S=s),
  \qquad s\in[M],
\]
and assume $\pi_s>0$. For each group $s$, define
\[
  \bmu^{(s)}:=\E[\bX\mid S=s],
  \qquad
  \bSigma^{(s)}:=\Cov(\bX\mid S=s).
\]
For any collection $(z_s)_{s\in[M]}$, define
\[
  \E_S[z_S]:=\sum_{s=1}^M\pi_s z_s,
  \qquad
  \Var_S(z_S):=\sum_{s=1}^M\pi_s
  \bigl(z_s-\E_S[z_S]\bigr)^2,
\]
and
\[
  \Cov_S(z_S,z'_S)
  :=
  \sum_{s=1}^M\pi_s
  \bigl(z_s-\E_S[z_S]\bigr)
  \bigl(z'_s-\E_S[z'_S]\bigr).
\]

Consider the linear predictor
\[
  f_{\mathrm{lin}}(\bx,s)
  =
  \langle \bx,\bbeta\rangle+\gamma_s+\beta_0,
\]
where $\bbeta\in\mathbb R^k$, $\beta_0\in\mathbb R$, and $\gamma_s$ is the group-specific fitted contribution of the sensitive attribute. The usual identifiability constraint on the collection $(\gamma_s)_{s\in[M]}$ is immaterial for the decomposition below, because the terms involve only between-group variances and covariances.

The group-wise moments of the fitted prediction are
\begin{equation}
\label{eq:vs-bs-linear}
\begin{cases}
m_s
:=
\E[f_{\mathrm{lin}}(\bX,S)\mid S=s]
=
\langle \bmu^{(s)},\bbeta\rangle+\gamma_s+\beta_0,
\\[0.3em]
v_s
:=
\Var(f_{\mathrm{lin}}(\bX,S)\mid S=s)
=
\bbeta^\top\bSigma^{(s)}\bbeta.
\end{cases}
\end{equation}
We set
\[
  b_s:=\sqrt{v_s},
  \qquad
  M:=m_S,
  \qquad
  B:=b_S,
  \qquad
  \bar m:=\E_S[M],
  \qquad
  \bar b:=\E_S[B].
\]

\subsection{Distributional parity and Gaussian reduction}
\label{subsec:lm-dp}

Distributional demographic parity requires the conditional prediction law to be the same across sensitive groups:
\[
  \nu_{f_{\mathrm{lin}}\mid S=s}
  =
  \nu_{f_{\mathrm{lin}}\mid S=t},
  \qquad s,t\in[M],
\]
where $\nu_{f_{\mathrm{lin}}\mid S=s}$ denotes the law of $f_{\mathrm{lin}}(\bX,S)$ conditional on $S=s$. We measure departures from this condition by the Wasserstein barycentric criterion
\begin{equation}
\label{eq:w2-unfairness-linear}
  U_{W_2}(f_{\mathrm{lin}})
  :=
  \min_{\nu\in\mathcal P_2(\mathbb R)}
  \sum_{s=1}^M
  \pi_s W_2^2\bigl(\nu_{f_{\mathrm{lin}}\mid S=s},\nu\bigr),
\end{equation}
where $\mathcal P_2(\mathbb R)$ is the set of probability measures on $\mathbb R$ with finite second moment.

\begin{assumption}[Gaussian group-conditional score distributions]
\label{ass:gaussian-linear}
For each $s\in[M]$,
\[
  f_{\mathrm{lin}}(\bX,S)\mid S=s
  \sim
  \mathcal N(m_s,b_s^2).
\]
A sufficient condition is that $\bX\mid S=s$ be Gaussian.
\end{assumption}

Under Assumption~\ref{ass:gaussian-linear}, the Wasserstein criterion has an exact two-moment form. This is the only point in the paper where the moment-based criterion coincides exactly with a full distributional parity criterion.

\begin{definition}[Linear moment-based disparity]
\label{def:U2-linear}
Define
\[
  U_2(f_{\mathrm{lin}})
  :=
  \Var_S(m_S)+\Var_S(b_S).
\]
We write
\[
  U_2(f_{\mathrm{lin}})
  =
  \underbrace{\Var_S(m_S)}_{\mathrm{FMD}}
  +
  \underbrace{\Var_S(b_S)}_{\mathrm{SMD}},
\]
where $\mathrm{FMD}$ denotes first-moment disparity and $\mathrm{SMD}$
second-moment disparity.
\end{definition}

\begin{proposition}[Exact Gaussian reduction]
\label{prop:gaussian-reduction-linear}
Under Assumption~\ref{ass:gaussian-linear},
\[
  U_{W_2}(f_{\mathrm{lin}})
  =
  U_2(f_{\mathrm{lin}})
  =
  \Var_S(m_S)+\Var_S(b_S).
\]
\end{proposition}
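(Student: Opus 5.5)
The plan is to use the quantile representation of the one-dimensional Wasserstein distance. For probability measures $\mu,\nu\in\mathcal P_2(\mathbb R)$ with quantile functions $F_\mu^{-1},F_\nu^{-1}$,
\[
  W_2^2(\mu,\nu)=\int_0^1\bigl(F_\mu^{-1}(u)-F_\nu^{-1}(u)\bigr)^2\,du .
\]
The map $\nu\mapsto F_\nu^{-1}$ is therefore an isometry from $(\mathcal P_2(\mathbb R),W_2)$ onto the closed convex cone $\mathcal Q$ of nondecreasing, left-continuous functions in $L^2(0,1)$. The barycentric problem \eqref{eq:w2-unfairness-linear} thus becomes the minimisation of $\sum_s\pi_s\|q_s-q\|_{L^2}^2$ over $q\in\mathcal Q$, where $q_s:=F^{-1}_{\nu_{f_{\mathrm{lin}}\mid S=s}}$.

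Next I minimise this over all of $L^2(0,1)$, not just $\mathcal Q$. The unconstrained minimiser is the weighted average $\bar q=\sum_s\pi_s q_s$. It is nondecreasing because it is a convex combination of nondecreasing functions, so $\bar q\in\mathcal Q$ and it is also the constrained minimiser. Hence
\[
  U_{W_2}(f_{\mathrm{lin}})=\sum_s\pi_s\|q_s-\bar q\|^2_{L^2}.
\]
This identity holds in general, without Assumption~\ref{ass:gaussian-linear}.

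Now I bring in Assumption~\ref{ass:gaussian-linear}. Under it, $q_s(u)=m_s+b_s\Phi^{-1}(u)$, with $\Phi$ the standard normal distribution function. This holds for $b_s\ge 0$, and the degenerate case $b_s=0$ is a Dirac mass, consistent with the same formula. It follows that $\bar q(u)=\bar m+\bar b\,\Phi^{-1}(u)$, i.e.\ the barycenter is $\mathcal N(\bar m,\bar b^2)$. Expanding the square gives
\[
  \|q_s-\bar q\|^2=(m_s-\bar m)^2+2(m_s-\bar m)(b_s-\bar b)\int_0^1\Phi^{-1}+(b_s-\bar b)^2\int_0^1(\Phi^{-1})^2 .
\]
Here $\int_0^1\Phi^{-1}(u)\,du=\E[Z]=0$ and $\int_0^1\Phi^{-1}(u)^2\,du=\E[Z^2]=1$ for $Z\sim\mathcal N(0,1)$. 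Summing with weights $\pi_s$ yields $\Var_S(m_S)+\Var_S(b_S)=U_2(f_{\mathrm{lin}})$, which is Definition~\ref{def:U2-linear}.

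The main point needing care is the step that justifies the barycenter: passing from the minimum over measures to the minimum over $L^2$ and checking that $\bar q$ is an admissible quantile function. Monotonicity is immediate from the convex combination. The only other concern is the measure-zero ambiguity at jumps, which does not affect $L^2$ norms, so I expect this step to be short. Alternatively, one can avoid the general theory entirely. Restrict first to Gaussian candidates to get the upper bound, then obtain the lower bound from the general inequality $\sum_s\pi_s\|q_s-q\|^2\ge\sum_s\pi_s\|q_s-\bar q\|^2$, which is the Pythagorean identity in $L^2$.
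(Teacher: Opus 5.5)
Your proposal is correct and follows essentially the same route as the paper. Both arguments write $W_2^2$ in one dimension as the $L^2(0,1)$ distance between quantile functions, identify the barycenter quantile as $\E_S[Q_S(u)]=\bar m+\bar b\,\Phi^{-1}(u)$, and integrate using $\int_0^1\Phi^{-1}(u)\,du=0$ and $\int_0^1\{\Phi^{-1}(u)\}^2\,du=1$. The one addition is that you justify the barycenter step: the weighted average of quantile functions is nondecreasing, so the unconstrained $L^2$ minimiser is admissible and the minimum is attained. The paper simply invokes this as a known fact.
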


\begin{proof}
Let $Q_s$ be the quantile function of
$f_{\mathrm{lin}}(\bX,S)\mid S=s$. Under
Assumption~\ref{ass:gaussian-linear},
\[
  Q_s(u)=m_s+b_s\Phi^{-1}(u),
  \qquad 0<u<1,
\]
where $\Phi$ is the standard normal distribution function. In one dimension, the squared Wasserstein distance can be written as the squared $L^2$ distance between quantile functions, and the Wasserstein barycenter has quantile function
\[
  Q_\star(u)
  =
  \E_S[Q_S(u)]
  =
  \bar m+\bar b\,\Phi^{-1}(u).
\]
Hence
\begin{align*}
  U_{W_2}(f_{\mathrm{lin}})
  &=
  \int_0^1
  \Var_S\bigl(Q_S(u)\bigr)\,du
  \\
  &=
  \int_0^1
  \Var_S\bigl(m_S+b_S\Phi^{-1}(u)\bigr)\,du.
\end{align*}
Expanding the variance gives
\[
  \Var_S(m_S)
  +
  2\Phi^{-1}(u)\Cov_S(m_S,b_S)
  +
  \{\Phi^{-1}(u)\}^2\Var_S(b_S).
\]
Since
\[
  \int_0^1 \Phi^{-1}(u)\,du=0,
  \qquad
  \int_0^1 \{\Phi^{-1}(u)\}^2\,du=1,
\]
we obtain
\[
  U_{W_2}(f_{\mathrm{lin}})
  =
  \Var_S(m_S)+\Var_S(b_S).
\]
\end{proof}

Thus, in the linear-Gaussian benchmark, $U_2$ is not merely a surrogate for \eqref{eq:w2-unfairness-linear}. It is exactly the Wasserstein barycentric measure of distributional demographic-parity violation.

\subsection{Exact four-way decomposition}
\label{subsec:lm-exact-decomposition}

The first-moment path decomposes directly. Since
\[
  M
  =
  m_S
  =
  \langle \bmu^{(S)},\bbeta\rangle+\gamma_S+\beta_0,
\]
and $\beta_0$ is constant across groups,
\begin{equation}
\label{eq:mean-path-decomposition-linear}
\Var_S(M)
=
\Var_S(\gamma_S)
+
\Var_S\bigl(\langle \bmu^{(S)},\bbeta\rangle\bigr)
+
2\,\Cov_S\bigl(\gamma_S,\langle \bmu^{(S)},\bbeta\rangle\bigr).
\end{equation}

\begin{theorem}[Exact four-way decomposition in linear models]
\label{thm:linear-four-way}
The linear moment-based disparity satisfies
\begin{align}
U_2(f_{\mathrm{lin}})
&=
\underbrace{\Var_S(\gamma_S)}_{\textnormal{direct mean}}
+
\underbrace{\Var_S\bigl(\langle \bmu^{(S)},\bbeta\rangle\bigr)}
_{\textnormal{indirect mean}}
\notag\\
&\quad+
\underbrace{2\,\Cov_S\bigl(\gamma_S,
\langle \bmu^{(S)},\bbeta\rangle\bigr)}_{\textnormal{interaction}}
+
\underbrace{\Var_S(B)}_{\textnormal{indirect structural}}.
\label{eq:linear-four-way-theorem}
\end{align}
Equivalently,
\[
U_2(f_{\mathrm{lin}})
=
\Var_S(\gamma_S)
+
\Var_S\bigl(\langle \bmu^{(S)},\bbeta\rangle\bigr)
+
2\,\Cov_S\bigl(\gamma_S,\langle \bmu^{(S)},\bbeta\rangle\bigr)
+
\Var_S\left(\sqrt{\bbeta^\top\bSigma^{(S)}\bbeta}\right).
\]
\end{theorem}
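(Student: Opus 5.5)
The argument is a direct algebraic unpacking of Definition~\ref{def:U2-linear}, with no distributional assumption needed. I would start from $U_2(f_{\mathrm{lin}})=\Var_S(m_S)+\Var_S(b_S)$ and treat the two summands separately. By \eqref{eq:vs-bs-linear}, $m_s=\langle \bmu^{(s)},\bbeta\rangle+\gamma_s+\beta_0$ for every $s\in[M]$. Hence $M=m_S$ is the sum of three collections indexed by $s$: the direct collection $(\gamma_s)$, the indirect collection $(\langle\bmu^{(s)},\bbeta\rangle)$, and the constant $\beta_0$.

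The first step is to check that the between-group operators $\Var_S$ and $\Cov_S$ behave like ordinary variance and covariance under the discrete law $(\pi_s)$ on $[M]$. From their definitions, $\Var_S$ is invariant under adding a constant collection, since centering removes $\beta_0$. Also $\Var_S(z_S+z'_S)=\Var_S(z_S)+\Var_S(z'_S)+2\Cov_S(z_S,z'_S)$, which follows by expanding the square inside $\sum_s\pi_s(\cdot)^2$ and using linearity of $\E_S$. Applying these two facts with $z_s=\gamma_s$ and $z'_s=\langle\bmu^{(s)},\bbeta\rangle$ gives exactly the mean-path identity \eqref{eq:mean-path-decomposition-linear}. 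That identity supplies the direct mean, indirect mean and interaction terms.

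The second step handles the structural term. By definition $B=b_S$ with $b_s=\sqrt{v_s}=\sqrt{\bbeta^\top\bSigma^{(s)}\bbeta}$, again by \eqref{eq:vs-bs-linear}. So $\Var_S(b_S)=\Var_S(B)=\Var_S\bigl(\sqrt{\bbeta^\top\bSigma^{(S)}\bbeta}\bigr)$, and the square root is well defined because each $\bSigma^{(s)}$ is positive semidefinite. Adding this to the mean-path identity yields \eqref{eq:linear-four-way-theorem} and the stated equivalent form.

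No step is a genuine obstacle. The only point needing care is the remark that the identifiability constraint on $(\gamma_s)$ does not matter. A common shift $\gamma_s\mapsto\gamma_s+c$ can be absorbed into $\beta_0$, and every term in the decomposition is a centered between-group quantity, so each term is unchanged separately, not merely their sum. I would state this explicitly so that the four components are seen to be individually well defined. I would also note that Assumption~\ref{ass:gaussian-linear} is not used; it matters only for identifying $U_2$ with $U_{W_2}$ via Proposition~\ref{prop:gaussian-reduction-linear}.
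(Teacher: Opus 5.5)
Your proposal is correct and follows the paper's proof: split $U_2(f_{\mathrm{lin}})$ into $\Var_S(m_S)+\Var_S(b_S)$, apply the mean-path identity \eqref{eq:mean-path-decomposition-linear} to the first term, and substitute $b_s=\sqrt{\bbeta^\top\bSigma^{(s)}\bbeta}$ in the second. Your derivation of the variance-of-a-sum identity from the definitions of $\Var_S$ and $\Cov_S$, and your remarks on the identifiability constraint and on not needing Assumption~\ref{ass:gaussian-linear}, make explicit what the paper only states in passing.
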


\begin{proof}
By Definition~\ref{def:U2-linear},
\[
  U_2(f_{\mathrm{lin}})
  =
  \Var_S(m_S)+\Var_S(b_S).
\]
Apply \eqref{eq:mean-path-decomposition-linear} to the first term. The second term is
\[
  \Var_S(b_S)
  =
  \Var_S\left(\sqrt{\bbeta^\top\bSigma^{(S)}\bbeta}\right),
\]
by \eqref{eq:vs-bs-linear}. Combining the two identities gives the result.
\end{proof}

The four terms have distinct diagnostic meanings. The direct mean component is generated by the explicit group contribution $\gamma_S$. The indirect mean component is generated by differences in average non-sensitive covariate profiles, projected through the fitted coefficient vector $\bbeta$. The interaction term records whether the direct and indirect mean channels reinforce or offset each other. A positive interaction means that groups with a larger direct fitted contribution also tend to have larger covariate-induced fitted scores. A negative interaction means that the two channels partially offset each other; it should not be interpreted as a normative correction, but only as a statistical compensation in the fitted score. Finally, the indirect structural component comes from differences in the within-group covariance structure of the covariates entering the score. Hence, two groups may have similar average predictions but different predictive dispersion.

Figure~\ref{fig:schema:LM} summarizes the two mean channels and the structural dispersion channel in the linear benchmark.

\begin{figure}[!htbp]
    \centering
    \begin{tikzpicture}[
  >=Latex,
  scale=.9,
  every node/.style={font=\small},
  tealA/.style={color=teal!65!black},
  orangeB/.style={color=orange!85!black}
]

\def\betap{0.70}
\def\betaZero{0.45}
\def\gammaA{0.00}
\def\gammaB{1.25}

\def\xA{1.8}
\def\xB{5.0}

\pgfmathsetmacro{\mA}{\betaZero+\gammaA+\betap*\xA}
\pgfmathsetmacro{\mB}{\betaZero+\gammaB+\betap*\xB}
\pgfmathsetmacro{\mAatB}{\betaZero+\gammaA+\betap*\xB}

\def\bA{0.35}
\def\bB{0.65}

\def\xd{6.25}
\def\denswidth{0.45}

\draw[->] (0,0) -- (7.2,0) node[right] {$x$};
\draw[->] (0,0) -- (0,6.4) node[above] {$f_{\rm lin}(x,s)$};

\draw[teal!45, dashed] (\xA,0) -- (\xA,6.1);
\draw[orange!45, dashed] (\xB,0) -- (\xB,6.1);

\draw[teal!45, dashed] (0,\mA) -- (6.9,\mA);
\draw[orange!45, dashed] (0,\mB) -- (6.9,\mB);

\node[teal!65!black, below] at (\xA,0) {$\bar x_A$};
\node[orange!85!black, below] at (\xB,0) {$\bar x_B$};

\node[teal!65!black, left] at (0,\mA) {$m_A$};
\node[orange!85!black, left] at (0,\mB) {$m_B$};

\draw[teal!65!black, very thick]
  plot[domain=0.2:6.4, samples=2]
  (\x,{\betaZero+\gammaA+\betap*\x});

\draw[orange!85!black, very thick]
  plot[domain=0.2:6.4, samples=2]
  (\x,{\betaZero+\gammaB+\betap*\x});

\foreach \x/\y in {
1.1/1.0,1.3/1.2,1.4/1.35,1.5/1.5,1.6/1.25,
1.7/1.7,1.8/1.55,1.9/1.85,2.0/1.65,2.1/1.95,
2.2/2.05,2.3/1.75,2.4/2.15,2.5/2.30,1.2/0.85,
1.45/1.15,1.65/1.35,1.95/1.75,2.15/1.90
}{
  \fill[teal!55, opacity=.28] (\x,\y) circle (1.8pt);
}

\foreach \x/\y in {
4.5/4.6,4.6/4.85,4.7/5.05,4.8/4.55,4.9/5.15,
5.0/4.95,5.1/5.35,5.2/5.10,5.3/5.55,5.4/5.25,
5.5/5.65,5.6/5.35,5.7/5.85,5.8/5.70,4.65/4.75,
4.85/5.00,5.05/5.20,5.25/5.40,5.45/5.55
}{
  \fill[orange!65, opacity=.30] (\x,\y) circle (1.8pt);
}

\fill[black] (\xA,\mA) circle (2.3pt);
\fill[black] (\xB,\mAatB) circle (2.3pt);
\fill[black] (\xB,\mB) circle (2.3pt);

\draw[black, very thick, ->]
  (\xA,\mA) -- (\xB,\mAatB);

\node[teal!65!black] at (3.15,3.25) {indirect};

\draw[black, very thick, <->]
  (\xB,\mAatB) -- (\xB,\mB);

\node[orange!85!black, right] at (\xB,{(\mAatB+\mB)/2+.2}) {direct};

\draw[teal!65!black, thin]
  plot[domain={\mA-4*\bA}:{\mA+4*\bA}, samples=100, variable=\y]
  ({\xd + 2*\denswidth*exp(-((\y-\mA)^2)/(2*\bA*\bA))}, {\y});

\draw[orange!85!black, thin]
  plot[domain={\mB-4*\bB}:{\mB+4*\bB}, samples=100, variable=\y]
  ({\xd + 2*\denswidth*exp(-((\y-\mB)^2)/(2*\bB*\bB))}, {\y});

\draw[black, thick, <->]
  ({\xd+\denswidth+.10},{\mA-\bA}) --
  ({\xd+\denswidth+.10},{\mA+\bA});
\node[teal!65!black, right] at ({\xd+\denswidth+.16*3},\mA) {$b_A$};

\draw[black, thick, <->]
  ({\xd+\denswidth+.10},{\mB-\bB}) --
  ({\xd+\denswidth+.10},{\mB+\bB});
\node[orange!85!black, right] at ({\xd+\denswidth+.16*3},\mB) {$b_B$};

\node[font=\bfseries] at (3.6,7.25) {Linear benchmark};
\node[font=\small] at (3.6,6.75)
  {$f_{\rm lin}(x,s)=\beta_0+\gamma_s+\beta x$};

\end{tikzpicture}
    \caption{Schematic linear decomposition. The group-average score gap is read through an indirect mean channel, driven by differences in average non-sensitive covariates, and a direct mean channel, driven by the explicit group effect. The small Gaussian shapes summarize group-specific predictive dispersion through $b_A$ and $b_B$. The interaction term corresponds to the between-group covariance between the direct and indirect mean channels.}
    \label{fig:schema:LM}
\end{figure}

\subsection{Moment-aligning correction in the linear benchmark}
\label{subsec:lm-fair-correction}

The decomposition above is a diagnostic. It can be paired with a simple moment-aligning post-processing, which is useful for visualization and for quantifying the predictive cost of enforcing parity at the level of the first two score moments. Assume in this subsection that $b_s>0$ for all $s$. For each group $s$, define
\begin{equation}
\label{eq:linear-fair-correction}
  f^{\mathrm{fair}}_{\mathrm{lin}}(\bx,s)
  =
  \bar m+\frac{\bar b}{b_s}
  \{f_{\mathrm{lin}}(\bx,s)-m_s\}.
\end{equation}
Equivalently,
\[
  f^{\mathrm{fair}}_{\mathrm{lin}}(\bx,s)
  =
  \alpha_s f_{\mathrm{lin}}(\bx,s)+\delta_s,
  \qquad
  \alpha_s:=\frac{\bar b}{b_s},
  \qquad
  \delta_s:=\bar m-\alpha_s m_s.
\]

\begin{proposition}[Moment alignment]
\label{prop:linear-moment-alignment}
For every $s\in[M]$,
\[
  \E[f^{\mathrm{fair}}_{\mathrm{lin}}(\bX,S)\mid S=s]
  =
  \bar m,
  \qquad
  \sqrt{
  \Var(f^{\mathrm{fair}}_{\mathrm{lin}}(\bX,S)\mid S=s)}
  =
  \bar b.
\]
Consequently,
\[
  U_2(f^{\mathrm{fair}}_{\mathrm{lin}})=0.
\]
\end{proposition}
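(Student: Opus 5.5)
The argument is a direct computation of the first two conditional moments, followed by substitution into Definition~\ref{def:U2-linear}. Conditionally on $S=s$, the corrected predictor in \eqref{eq:linear-fair-correction} is an affine transformation of $f_{\mathrm{lin}}(\bX,S)$. Its coefficients $\alpha_s=\bar b/b_s$ and $\delta_s=\bar m-\alpha_s m_s$ depend only on $s$, so they are deterministic once we condition on the group. These coefficients are well defined because $b_s>0$ is assumed in this subsection.

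\emph{Mean.} By linearity of conditional expectation and the definition of $m_s$ in \eqref{eq:vs-bs-linear},
\[
\E[f^{\mathrm{fair}}_{\mathrm{lin}}(\bX,S)\mid S=s]=\bar m+\frac{\bar b}{b_s}\bigl(\E[f_{\mathrm{lin}}(\bX,S)\mid S=s]-m_s\bigr)=\bar m .
\]

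\emph{Spread.} The additive constants do not affect the variance, so
\[
\Var(f^{\mathrm{fair}}_{\mathrm{lin}}(\bX,S)\mid S=s)=\alpha_s^2 v_s=\frac{\bar b^2}{b_s^2}\,b_s^2=\bar b^2 .
\]
Since $\bar b=\E_S[b_S]\ge 0$, taking the nonnegative square root gives conditional standard deviation exactly $\bar b$.

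\emph{Conclusion.} Applying Definition~\ref{def:U2-linear} to $f^{\mathrm{fair}}_{\mathrm{lin}}$, with its own group-wise means and standard deviations, both collections are constant in $s$: the means all equal $\bar m$ and the standard deviations all equal $\bar b$. Hence $\Var_S(\cdot)$ of each vanishes, and $U_2(f^{\mathrm{fair}}_{\mathrm{lin}})=0$.

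There is no real obstacle. The one point needing care is that $U_2$ for the corrected predictor must be computed from the new group moments, not from the original $m_s$ and $b_s$.
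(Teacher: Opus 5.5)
Your proposal is correct and follows the paper's proof essentially line by line. Both compute the conditional mean and variance of the group-wise affine correction, observe that they equal $\bar m$ and $\bar b^2$ for every $s$, and conclude that both between-group variances in $U_2$ vanish. Your remarks that $\bar b\ge 0$ justifies taking the square root, and that $U_2$ must be evaluated with the corrected predictor's own group moments, are small clarifications the paper leaves implicit.
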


\begin{proof}
Conditionally on $S=s$,
\[
  \E[f^{\mathrm{fair}}_{\mathrm{lin}}(\bX,S)\mid S=s]
  =
  \bar m+\frac{\bar b}{b_s}(m_s-m_s)
  =
  \bar m,
\]
and
\[
  \Var(f^{\mathrm{fair}}_{\mathrm{lin}}(\bX,S)\mid S=s)
  =
  \left(\frac{\bar b}{b_s}\right)^2 b_s^2
  =
  \bar b^2.
\]
Both the conditional mean and the conditional standard deviation are therefore constant across groups, so $U_2(f^{\mathrm{fair}}_{\mathrm{lin}})=0$.
\end{proof}

On the coefficient scale, \eqref{eq:linear-fair-correction} gives, for group
$s$,
\[
  f^{\mathrm{fair}}_{\mathrm{lin}}(\bx,s)
  =
  \langle \bx,\alpha_s\bbeta\rangle
  +
  \alpha_s(\beta_0+\gamma_s)
  +
  \delta_s.
\]
Thus the non-sensitive slopes become group-specific:
\[
  \bbeta^{\mathrm{fair},s}=\alpha_s\bbeta,
\]
and the displayed shift for covariate $j$ is
\[
  \beta^{\mathrm{fair},s}_j-\beta_j
  =
  (\alpha_s-1)\beta_j.
\]
The group effect is absorbed into the group-specific intercept. The coefficient shift plot should therefore not be read as a new unconstrained refit, nor as a variable-importance plot. It shows how the original fitted score is contracted or expanded within each group in order to align the first two moments of the score distribution.

The same latent-scale correction will be used for GLMs in
Section~\ref{sec:glm-decomposition}. In that case the correction is applied to the latent score before the inverse-link transformation. Exact moment alignment then holds on the latent scale, while residual disparities may remain on the mean scale because of the nonlinear map.

\subsection{A Gaussian insurance illustration}
\label{subsec:lm-gaussian-illustration}

We illustrate the linear benchmark using the Medical Expenditure Panel Survey
Household Component (MEPS-HC), HC-192 file, corresponding to the 2016 Full
Year Consolidated Data File; details are reported in \ref{app:rep}.
The response is
\[
  y=\log(1+\texttt{TOTEXP16}),
\]
restricted here to individuals with positive annual health-care expenditures.
This restriction is specific to the identity-link benchmark, where the aim is to
obtain a simple Gaussian-style illustration. The GLM sections below consider
outcomes on their natural probability, count, or cost scales.

We consider two binary sensitive comparisons. The first is insurance coverage status, distinguishing insured from uninsured individuals. In a medical expenditure application, this variable is substantively relevant because it is linked to access to care and utilization, while also being correlated with socioeconomic vulnerability. The second is ethnicity, focusing on individuals identifying as Hispanic versus non-Hispanic individuals. These two comparisons allow us to distinguish a coverage-related disparity from an ethnicity-related disparity within the same empirical design.

Figure~\ref{fig:glm:identity:coeff} displays the coefficient-level effect of the
moment-aligning correction.

\begin{figure}[!htbp]
    \centering
    \includegraphics[width=\linewidth]{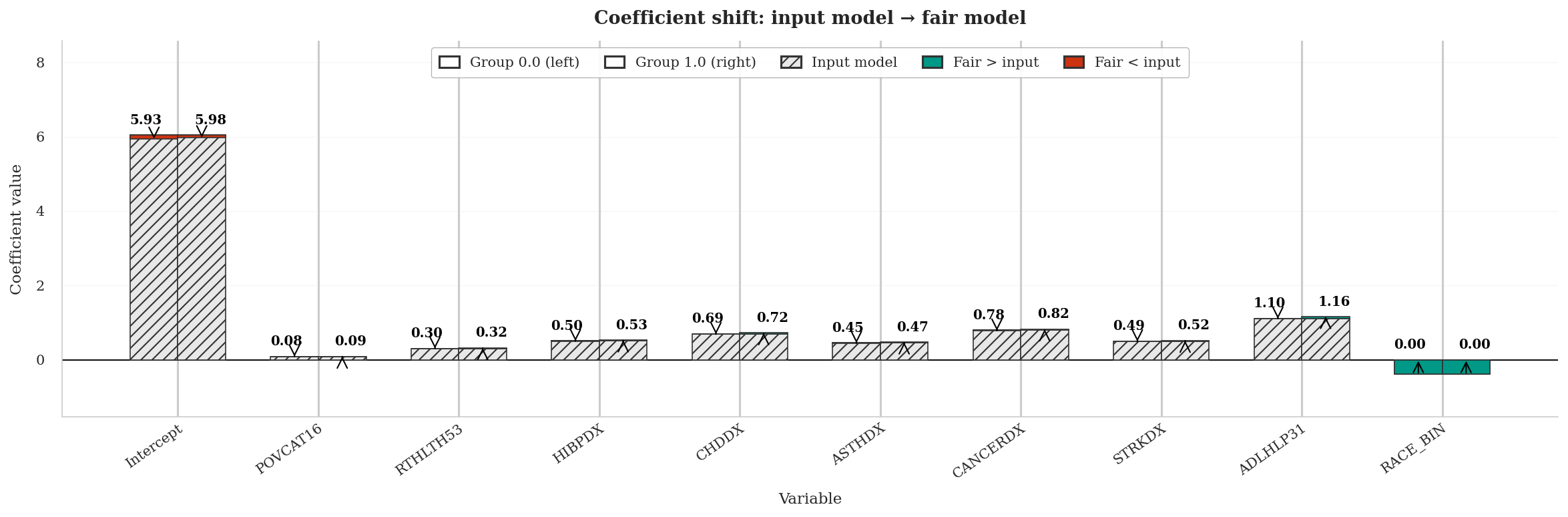}
    \includegraphics[width=\linewidth]{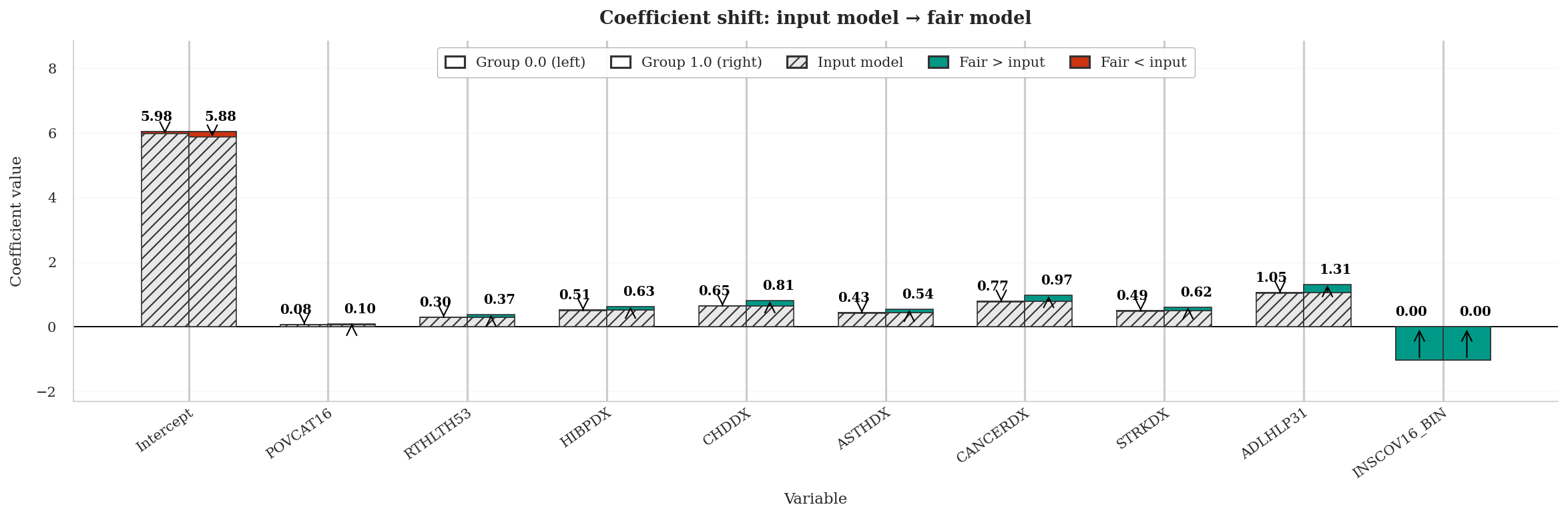}
    \caption{Linear model: coefficient shifts from the base model to the
    moment-aligned model. Hatched bars show input coefficients; colored
    extensions show the post-processing correction. The sensitive-attribute
    coefficient is absorbed into group-specific intercepts. The top panel uses
    the Hispanic/non-Hispanic comparison; the bottom panel uses the
    uninsured/insured comparison.}
    \label{fig:glm:identity:coeff}
\end{figure}

Table~\ref{tab:linear_decomposition} reports predictive performance, the value of $U_2$ before and after moment alignment, and the exact decomposition of the base-model disparity. The moment-aligning correction nearly eliminates $U_2$, as predicted by Proposition~\ref{prop:linear-moment-alignment}. The cost in predictive performance is moderate: the $R^2$ decreases and the RMSE increases, but the changes remain small relative to the baseline scale.

For both sensitive comparisons, the dominant channel is the direct mean component. The indirect mean component is small, which indicates that average differences in the retained non-sensitive covariates explain only a limited part of the prediction disparity. The interaction term is positive in both cases:
the direct group effect and the covariate-induced mean score differences reinforce each other. The structural component is negligible in this illustration, so most of the disparity is carried by mean shifts rather than by differences in within-group predictive dispersion.

\begin{table}[!htbp]
\centering
\footnotesize
\begin{tabular}{lcc}
\toprule
 & \textbf{Hispanic vs.} & \textbf{Uninsured} \\
& \textbf{non-Hispanic} & \textbf{vs.\ insured} \\
\midrule
\multicolumn{3}{l}{\textit{Predictive performance}} \\
\midrule
Base model: $R^2$
& 0.196 \,\scriptsize{(0.004)}
& 0.211 \,\scriptsize{(0.005)} \\
Moment-aligned model: $R^2$
& 0.178 \,\scriptsize{(0.005)}
& 0.172 \,\scriptsize{(0.005)} \\
Base model: RMSE
& 1.576 \,\scriptsize{(0.008)}
& 1.561 \,\scriptsize{(0.008)} \\
Moment-aligned model: RMSE
& 1.593 \,\scriptsize{(0.008)}
& 1.599 \,\scriptsize{(0.008)} \\
\midrule
\multicolumn{3}{l}{\textit{Moment-based disparity}} \\
\midrule
Base model: $U_2(f_{\mathrm{lin}})$
& 0.057 \,\scriptsize{(0.006)}
& 0.117 \,\scriptsize{(0.010)} \\
Moment-aligned model: $U_2(f_{\mathrm{lin}}^{\mathrm{fair}})$
& $< 0.001$
& $< 0.001$ \\
Relative reduction in $U_2$
& $\approx 100\%$
& $\approx 100\%$ \\
\midrule
\multicolumn{3}{l}{\textit{Decomposition of $U_2(f_{\mathrm{lin}})$}} \\
\midrule
Direct mean:
$\operatorname{Var}_S(\gamma_S)$
& 0.035 \,\scriptsize{(0.004)}
& 0.081 \,\scriptsize{(0.008)} \\
Indirect mean:
$\operatorname{Var}_S(\langle \bmu^{(S)},\bbeta\rangle)$
& 0.003 \,\scriptsize{(0.001)}
& 0.003 \,\scriptsize{(0.001)} \\
Interaction:
$2\operatorname{Cov}_S(\gamma_S,\langle \bmu^{(S)},\bbeta\rangle)$
& 0.019 \,\scriptsize{(0.002)}
& 0.032 \,\scriptsize{(0.003)} \\
Indirect structural:
$\operatorname{Var}_S(B)$
& 0.001 \,\scriptsize{(0.000)}
& 0.002 \,\scriptsize{(0.000)} \\
\midrule
Total decomposed disparity
& 0.057 \,\scriptsize{(0.006)}
& 0.117 \,\scriptsize{(0.010)} \\
\bottomrule
\end{tabular}
\caption{Linear regression on the logarithm of annual health-care expenditures,
$y=\log(1+\texttt{TOTEXP16})$, restricted to individuals with positive
expenditures. The table reports predictive performance of the base and
moment-aligned models, the moment-based disparity $U_2$, and its exact
decomposition into direct mean, indirect mean, interaction, and indirect
structural components. Entries are means over $30$ bootstrap replications, with
standard deviations in parentheses.}
\label{tab:linear_decomposition}
\end{table}

\subsection{Estimation and uncertainty}
\label{subsec:lm-estimation}

The decomposition is stated at the population level. In empirical applications, it is evaluated by plug-in. Let
\[
  \btheta:=(\bbeta,\bgamma,\beta_0)
\]
and consider a penalized least-squares estimator
\[
\widehat\btheta_\lambda
\in
\arg\min_{\btheta}
\left\{
\frac1n\sum_{i=1}^n
\bigl(y_i-\langle \bx_i,\bbeta\rangle-\gamma_{S_i}-\beta_0\bigr)^2
+
\lambda\mathcal P(\bbeta)
\right\},
\]
where $\mathcal P$ may be a ridge, lasso, or elastic-net penalty. The fitted predictor remains affine,
\[
\widehat f_{\lambda,\mathrm{lin}}(\bx,s)
=
\langle \bx,\widehat\bbeta_\lambda\rangle
+
\widehat\gamma_{\lambda,s}
+
\widehat\beta_{0,\lambda}.
\]
The empirical decomposition is obtained by replacing $\pi_s$, $\bmu^{(s)}$, $\bSigma^{(s)}$, $\bbeta$, and $\gamma_s$ by their empirical or fitted counterparts:
\[
\widehat U_2(\widehat f_{\lambda,\mathrm{lin}})
=
\Var_S(\widehat\gamma_{\lambda,S})
+
\Var_S\bigl(\langle \widehat\bmu^{(S)},\widehat\bbeta_\lambda\rangle\bigr)
+
2\,\Cov_S\bigl(\widehat\gamma_{\lambda,S},
\langle \widehat\bmu^{(S)},\widehat\bbeta_\lambda\rangle\bigr)
+
\Var_S(\widehat b_{\lambda,S}),
\]
with
\[
\widehat b_{\lambda,s}
=
\sqrt{
\widehat\bbeta_\lambda^\top
\widehat\bSigma^{(s)}
\widehat\bbeta_\lambda
}.
\]
Regularization changes the fitted coefficients and therefore the numerical size of the components, but not the algebraic form of the decomposition. Sampling uncertainty can be assessed by a nonparametric bootstrap: refit the model on each bootstrap sample, recompute the group-wise moments and decomposition terms, and report standard deviations, percentile intervals, or confidence bands
for the main components.

\section{Decomposition in Generalized Linear Models}
\label{sec:glm-decomposition}

We now move from the identity-link benchmark to generalized linear models.
The latent score remains additive in the covariates and in the sensitive attribute, but the actuarial prediction is the conditional mean obtained after applying an inverse link. Thus, the score-scale decomposition of Section~\ref{sec:lm-decomposition} is no longer exact on the prediction scale. This section derives the corresponding GLM diagnostic and makes explicit the two approximation layers involved.

The first layer maps the conditional moments of the latent score $\eta$ to
approximate conditional moments of $h(\eta)$ within each sensitive group. The
second layer expands the between-group dispersion of these proxy moments
around the global latent mean $\bar m$. We keep both errors visible. This
distinction is important in applications: the leading decomposition identifies
interpretable mechanisms, whereas the empirical criterion $U_2(f)$ remains the
quantity to report when approximation errors are non-negligible.

\subsection{Output-scale criterion and latent moments}
\label{subsec:glm-output-criterion}

We keep the notation of Section~\ref{sec:lm-decomposition}. Thus,
$\bX\in\mathbb R^k$ denotes the vector of non-sensitive covariates,
$S\in[M]$ the sensitive attribute, and
\[
  \bmu^{(s)}=\E[\bX\mid S=s],
  \qquad
  \bSigma^{(s)}=\Cov(\bX\mid S=s).
\]
Consider a GLM with conditional mean
\[
  \mu(\bx,s)=\E[Y\mid \bX=\bx,S=s],
\]
link function $g$, and inverse link $h=g^{-1}$. The prediction on the mean
scale is
\[
  f(\bx,s)=\mu(\bx,s)=h(\eta),
  \qquad
  \eta=\langle \bx,\bbeta\rangle+\gamma_s+\beta_0.
\]
The group-wise latent moments are
\begin{equation}
\label{eq:glm-latent-moments}
\begin{cases}
m_s
:=
\E[\eta\mid S=s]
=
\langle \bmu^{(s)},\bbeta\rangle+\gamma_s+\beta_0,
\\[0.3em]
v_s
:=
\Var(\eta\mid S=s)
=
\bbeta^\top\bSigma^{(s)}\bbeta.
\end{cases}
\end{equation}
We set
\[
  b_s:=\sqrt{v_s},
  \qquad
  M:=m_S,
  \qquad
  B:=b_S,
  \qquad
  \bar m:=\E_S[M],
  \qquad
  \bar b:=\E_S[B].
\]

Full distributional demographic parity would require equality of the conditional
prediction laws across sensitive groups. Outside the linear-Gaussian benchmark,
this condition is generally not reducible to finitely many moments. We therefore
use the following two-moment diagnostic on the prediction scale.

\begin{definition}[Output-scale moment disparity]
\label{def:glm-U2}
For any predictor $f$, define
\[
  \mu_f(s):=\E[f(\bX,S)\mid S=s],
  \qquad
  \sigma_f(s):=
  \sqrt{\Var(f(\bX,S)\mid S=s)}.
\]
The output-scale moment disparity of $f$ is
\begin{equation}
\label{eq:glm-U2}
  U_2(f)
  :=
  \Var_S\bigl(\mu_f(S)\bigr)
  +
  \Var_S\bigl(\sigma_f(S)\bigr).
\end{equation}
\end{definition}

The variance function of the GLM family does not enter
Definition~\ref{def:glm-U2}. The diagnostic is based on the distribution of the
fitted predictions $f(\bX,S)$, and its leading decomposition depends on the
inverse link $h$ and on the first two conditional moments of the latent score.
This is why Poisson and log-link Tweedie models have the same link-induced
transport, even though they correspond to different response distributions and
different actuarial quantities.

The latent mean path remains exactly decomposable:
\begin{equation}
\label{eq:glm-latent-mean-path}
\Var_S(M)
=
\underbrace{\Var_S(\gamma_S)}_{\textnormal{direct mean}}
+
\underbrace{\Var_S\bigl(\langle \bmu^{(S)},\bbeta\rangle\bigr)}
_{\textnormal{indirect mean}}
+
\underbrace{2\,\Cov_S\bigl(\gamma_S,
\langle \bmu^{(S)},\bbeta\rangle\bigr)}_{\textnormal{interaction}}.
\end{equation}
As in the linear case, the interaction term may be positive or negative. A
negative value means that the explicit group contribution and the covariate-driven
mean score differences partially offset each other. It is a statistical offset in
the fitted score, not a normative correction.

\subsection{Within-group transport to the mean scale}
\label{subsec:glm-within-group}

We first approximate the conditional moments of $h(\eta)$ within each group.

\begin{assumption}[Smooth inverse link and local remainder bounds]
\label{ass:glm-smoothness}
The inverse link $h$ is twice continuously differentiable on an open interval
$\mathcal D_h$, with $h'(t)>0$ on $\mathcal D_h$. For each $s\in[M]$,
$m_s\in\mathcal D_h$, and the conditional score $\eta\mid S=s$ takes values
in an interval $I_s\subset\mathcal D_h$ containing $m_s$. Writing
$\delta_s:=\eta-m_s$ conditionally on $S=s$,
assume
\[
  \rho_s:=\E[|\delta_s|^3\mid S=s]<\infty,
  \qquad
  \kappa_s:=\E[\delta_s^4\mid S=s]<\infty,
\]
and
\[
  L_{2,s}:=\sup_{t\in I_s}|h''(t)|<\infty.
\]
\end{assumption}

Assumption~\ref{ass:glm-smoothness} is stated as a sufficient condition for
explicit deterministic bounds. In empirical applications, the quantities
$U_2(f)$ and $\widetilde U_2(f)$ below are computed directly on the fitted
sample. The approximation should therefore be assessed empirically even when
the theoretical support of the score is unbounded, as with log-link models.

\begin{proposition}[Within-group delta approximation]
\label{prop:glm-delta}
Under Assumption~\ref{ass:glm-smoothness}, for each $s\in[M]$,
\begin{equation}
\label{eq:glm-mean-delta}
  \mu_f(s)
  =
  h(m_s)+e_\mu(s),
  \qquad
  |e_\mu(s)|\le \frac12 L_{2,s}v_s,
\end{equation}
and
\begin{equation}
\label{eq:glm-sd-delta}
  \sigma_f(s)
  =
  h'(m_s)b_s+e_\sigma(s),
  \qquad
  |e_\sigma(s)|
  \le
  \left(
  h'(m_s)L_{2,s}\rho_s
  +
  \frac14 L_{2,s}^2\kappa_s
  \right)^{1/2}.
\end{equation}
\end{proposition}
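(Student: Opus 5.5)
The plan is a second-order Taylor expansion of $h$ around the group latent mean $m_s$ with a Lagrange remainder, carried out conditionally on $S=s$. Fix $s\in[M]$ and write $\delta_s=\eta-m_s$ conditionally on $S=s$, so that $\E[\delta_s\mid S=s]=0$ and $\E[\delta_s^2\mid S=s]=v_s$ by \eqref{eq:glm-latent-moments}. Since $\eta$ and $m_s$ both lie in the interval $I_s\subset\mathcal D_h$ of Assumption~\ref{ass:glm-smoothness}, Taylor's theorem gives
\[
  h(\eta)=h(m_s)+h'(m_s)\delta_s+R_s,
  \qquad
  |R_s|\le \tfrac12 L_{2,s}\,\delta_s^2 .
\]
The point-wise bound on $R_s$ drives everything else. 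The moment conditions $\rho_s,\kappa_s<\infty$ make all the expectations below finite.

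\emph{Mean.} Take conditional expectations in the expansion. The linear term vanishes because $\E[\delta_s\mid S=s]=0$, so $\mu_f(s)=h(m_s)+e_\mu(s)$ with $e_\mu(s)=\E[R_s\mid S=s]$. Then
\[
  |e_\mu(s)|\le \E[|R_s|\mid S=s]\le \tfrac12 L_{2,s}v_s,
\]
which is \eqref{eq:glm-mean-delta}.

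\emph{Standard deviation.} First compare the variances, then pass to square roots. From the expansion,
\[
  \sigma_f(s)^2=\Var(h'(m_s)\delta_s+R_s\mid S=s)
  =h'(m_s)^2v_s+2h'(m_s)\Cov(\delta_s,R_s\mid S=s)+\Var(R_s\mid S=s).
\]
The two extra terms are bounded as follows.
\begin{itemize}
\item Since $\delta_s$ is centred, $\Cov(\delta_s,R_s\mid S=s)=\E[\delta_sR_s\mid S=s]$. The remainder bound then gives $|\E[\delta_sR_s\mid S=s]|\le \tfrac12L_{2,s}\rho_s$.
\item For the variance of the remainder, $\Var(R_s\mid S=s)\le \E[R_s^2\mid S=s]\le \tfrac14L_{2,s}^2\kappa_s$.
\end{itemize}
Using $h'(m_s)>0$, these combine into
\[
  \bigl|\sigma_f(s)^2-h'(m_s)^2b_s^2\bigr|\le h'(m_s)L_{2,s}\rho_s+\tfrac14L_{2,s}^2\kappa_s .
\]

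The remaining step, converting the variance bound into a bound on standard deviations, is where the square-root form of \eqref{eq:glm-sd-delta} comes from. I will use the elementary inequality $|a-c|\le\sqrt{|a^2-c^2|}$ for $a,c\ge0$. It holds because $|a-c|^2\le|a-c|(a+c)=|a^2-c^2|$. Apply it with $a=\sigma_f(s)$ and $c=h'(m_s)b_s$, both nonnegative, and set $e_\sigma(s)=\sigma_f(s)-h'(m_s)b_s$. This yields exactly the stated bound.

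The main subtlety is not to route this step through a Minkowski-type bound $|\sigma_f(s)-h'(m_s)b_s|\le\sqrt{\Var(R_s\mid S=s)}$. That route would give the cleaner bound $\tfrac12L_{2,s}\kappa_s^{1/2}$, without the cross term $h'(m_s)L_{2,s}\rho_s$ that appears in the statement. To reproduce the stated form one must instead control the absolute difference of the variances, because the sign of the covariance term is unknown, and then take the square root.
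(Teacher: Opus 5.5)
Your proof is correct and follows the paper's argument step for step. Like the paper, you use a Lagrange-remainder Taylor expansion around $m_s$ and the variance identity, bounding the cross term and the remainder variance by $\tfrac12 L_{2,s}\rho_s$ and $\tfrac14 L_{2,s}^2\kappa_s$ before applying $|\sqrt u-\sqrt v|\le\sqrt{|u-v|}$. One correction to your closing remark: the Minkowski route is not something to avoid, because its bound $\tfrac12 L_{2,s}\kappa_s^{1/2}$ is sharper and already implies the stated one (since $h'(m_s)L_{2,s}\rho_s\ge 0$), so either route proves the proposition.
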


\begin{proof}
Fix $s$ and work conditionally on $S=s$. Write $\delta=\eta-m_s$. Taylor's
theorem gives
\[
  h(\eta)
  =
  h(m_s)+h'(m_s)\delta+\frac12h''(\xi)\delta^2,
\]
where $\xi$ lies between $m_s$ and $\eta$. Since
$\E[\delta\mid S=s]=0$,
\[
  \mu_f(s)-h(m_s)
  =
  \frac12\E[h''(\xi)\delta^2\mid S=s],
\]
which gives \eqref{eq:glm-mean-delta}.

For the standard deviation, write
\[
  h(\eta)
  =
  h(m_s)+h'(m_s)\delta+r,
  \qquad
  r:=\frac12h''(\xi)\delta^2.
\]
Then
\[
  \Var(h(\eta)\mid S=s)
  =
  \{h'(m_s)\}^2v_s
  +
  2h'(m_s)\Cov(\delta,r\mid S=s)
  +
  \Var(r\mid S=s).
\]
Since
\[
  |r|\le \frac12L_{2,s}\delta^2,
\]
we have
\[
  |\Cov(\delta,r\mid S=s)|
  \le
  \frac12L_{2,s}\rho_s,
  \qquad
  \Var(r\mid S=s)
  \le
  \frac14L_{2,s}^2\kappa_s.
\]
Thus
\[
\left|
  \Var(h(\eta)\mid S=s)
  -
  \{h'(m_s)\}^2v_s
\right|
\le
h'(m_s)L_{2,s}\rho_s
+
\frac14L_{2,s}^2\kappa_s.
\]
Using
\[
  |\sqrt u-\sqrt v|\le \sqrt{|u-v|},
  \qquad u,v\ge 0,
\]
gives \eqref{eq:glm-sd-delta}.
\end{proof}

This motivates the proxy moments
\begin{equation}
\label{eq:glm-proxy-moments}
  \widetilde\mu_f(s):=h(m_s),
  \qquad
  \widetilde\sigma_f(s):=h'(m_s)b_s,
\end{equation}
and the associated proxy criterion
\begin{equation}
\label{eq:glm-U2-tilde}
  \widetilde U_2(f)
  :=
  \Var_S\bigl(\widetilde\mu_f(S)\bigr)
  +
  \Var_S\bigl(\widetilde\sigma_f(S)\bigr).
\end{equation}
We also write
\[
  \widetilde U_2(f)
  =
  \widetilde{\FMD}(f)
  +
  \widetilde{\SMD}(f),
\]
where
\[
  \widetilde{\FMD}(f):=\Var_S\bigl(h(m_S)\bigr),
  \qquad
  \widetilde{\SMD}(f):=\Var_S\bigl(h'(m_S)b_S\bigr).
\]

The first approximation layer is the difference between $U_2(f)$ and
$\widetilde U_2(f)$.

\begin{proposition}[Within-group approximation error]
\label{prop:glm-within-error}
Let
\[
  e_\mu(s):=\mu_f(s)-\widetilde\mu_f(s),
  \qquad
  e_\sigma(s):=\sigma_f(s)-\widetilde\sigma_f(s).
\]
Then
\begin{align}
  U_2(f)
  &=
  \widetilde U_2(f)
  +
  \Delta_{\mathrm{within}}(f),
  \label{eq:glm-within-error}
\end{align}
where
\begin{align}
  \Delta_{\mathrm{within}}(f)
  &:=
  2\,\Cov_S\bigl(\widetilde\mu_f(S),e_\mu(S)\bigr)
  +
  \Var_S(e_\mu(S))
  \notag\\
  &\quad+
  2\,\Cov_S\bigl(\widetilde\sigma_f(S),e_\sigma(S)\bigr)
  +
  \Var_S(e_\sigma(S)).
  \label{eq:glm-delta-within-definition}
\end{align}
\end{proposition}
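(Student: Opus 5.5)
The identity is purely algebraic, so the plan is to expand each of the two variance terms in $U_2(f)$ using the bilinearity of $\Var_S$ and $\Cov_S$. No analytic input is needed. In particular, neither Assumption~\ref{ass:glm-smoothness} nor the bounds of Proposition~\ref{prop:glm-delta} enter. The only facts used are the definitions $\mu_f(s)=\widetilde\mu_f(s)+e_\mu(s)$ and $\sigma_f(s)=\widetilde\sigma_f(s)+e_\sigma(s)$, which hold by construction of $e_\mu$ and $e_\sigma$. The errors are finite whenever $\mu_f(s)$ and $\sigma_f(s)$ exist, since $\widetilde\mu_f(s)=h(m_s)$ and $\widetilde\sigma_f(s)=h'(m_s)b_s$ are finite numbers.

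\emph{First moment.} For any collections $(z_s)$ and $(z'_s)$ indexed by $s\in[M]$, the between-group operators satisfy
\[
\Var_S(z_S+z'_S)=\Var_S(z_S)+2\,\Cov_S(z_S,z'_S)+\Var_S(z'_S).
\]
This follows by expanding $\sum_s\pi_s\{(z_s-\E_S[z_S])+(z'_s-\E_S[z'_S])\}^2$ and using linearity of $\E_S$. Applying it with $z_s=\widetilde\mu_f(s)$ and $z'_s=e_\mu(s)$ gives
\[
\Var_S\bigl(\mu_f(S)\bigr)=\Var_S\bigl(\widetilde\mu_f(S)\bigr)+2\,\Cov_S\bigl(\widetilde\mu_f(S),e_\mu(S)\bigr)+\Var_S(e_\mu(S)).
\]

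\emph{Second moment.} Applying the same identity with $z_s=\widetilde\sigma_f(s)$ and $z'_s=e_\sigma(s)$ gives the analogous expansion of $\Var_S(\sigma_f(S))$.

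\emph{Combination.} Adding the two expansions, the leading terms $\Var_S(\widetilde\mu_f(S))+\Var_S(\widetilde\sigma_f(S))$ equal $\widetilde U_2(f)$ by \eqref{eq:glm-U2-tilde}. The remaining four terms are exactly $\Delta_{\mathrm{within}}(f)$ as defined in \eqref{eq:glm-delta-within-definition}. This yields \eqref{eq:glm-within-error}.

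There is no genuine obstacle in this argument. The only point worth checking is that the displayed variance expansion holds for the weighted finite-population operators $\Var_S$ and $\Cov_S$ defined in Section~\ref{subsec:lm-prediction-disparities}, and it does because $\E_S$ is linear.
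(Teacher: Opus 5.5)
Your proof is correct and follows the paper's own argument: substitute $\mu_f(S)=\widetilde\mu_f(S)+e_\mu(S)$ and $\sigma_f(S)=\widetilde\sigma_f(S)+e_\sigma(S)$ into Definition~\ref{def:glm-U2} and expand the two between-group variances by bilinearity. Your remark that the identity is purely algebraic, and so needs neither Assumption~\ref{ass:glm-smoothness} nor the bounds of Proposition~\ref{prop:glm-delta}, is also accurate.
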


\begin{proof}
Use
\[
  \mu_f(S)=\widetilde\mu_f(S)+e_\mu(S),
  \qquad
  \sigma_f(S)=\widetilde\sigma_f(S)+e_\sigma(S),
\]
and expand the two between-group variances in Definition~\ref{def:glm-U2}.
\end{proof}

\subsection{Across-group expansion and leading decomposition}
\label{subsec:glm-across-groups}

The proxy criterion $\widetilde U_2(f)$ is still nonlinear in the latent group
means $m_s$. We now expand it around $\bar m$.

\begin{assumption}[Across-group smoothness]
\label{ass:glm-across-smoothness}
The inverse link satisfies $h\in\mathcal C^3(J)$ on an open interval $J$
containing $\bar m$ and all group means $m_s$, $s\in[M]$.
\end{assumption}

For each group, define
\[
  \Delta_s:=m_s-\bar m,
  \qquad
  \Gamma_s:=b_s-\bar b.
\]
The next proposition isolates the leading terms and the across-group remainder.

\begin{proposition}[Across-group expansion of the proxy criterion]
\label{prop:glm-across}
Under Assumption~\ref{ass:glm-across-smoothness},
\begin{equation}
\label{eq:glm-FMD-expansion}
  \widetilde{\FMD}(f)
  =
  \{h'(\bar m)\}^2\Var_S(M)
  +
  R_{\FMD},
\end{equation}
and
\begin{align}
\label{eq:glm-SMD-expansion}
  \widetilde{\SMD}(f)
  &=
  \{h'(\bar m)\}^2\Var_S(B)
  +
  2h'(\bar m)h''(\bar m)\bar b\,\Cov_S(M,B)
  \notag\\
  &\quad+
  \{h''(\bar m)\}^2\bar b^2\Var_S(M)
  +
  R_{\SMD}.
\end{align}
\end{proposition}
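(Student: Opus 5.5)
Since $R_{\FMD}$ and $R_{\SMD}$ are not defined beforehand, the statement is essentially a definition of the remainders. The real content is to exhibit them explicitly and show they are of higher order in the group deviations $\Delta_s=m_s-\bar m$, $\Gamma_s=b_s-\bar b$. The plan is a Taylor expansion of $h$ and $h'$ around $\bar m$, using Assumption~\ref{ass:glm-across-smoothness}.

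\emph{First moment.} For each $s$, Taylor's theorem with Lagrange remainder gives
\[
h(m_s)=h(\bar m)+h'(\bar m)\Delta_s+\tfrac12 h''(\xi_s)\Delta_s^2,
\]
with $\xi_s$ between $m_s$ and $\bar m$, all in $J$. Write $r_s:=\tfrac12 h''(\xi_s)\Delta_s^2$. The constant $h(\bar m)$ does not affect $\Var_S$, and $\Var_S(\Delta_S)=\Var_S(M)$. Hence
\[
R_{\FMD}=2h'(\bar m)\Cov_S(M,r_S)+\Var_S(r_S).
\]
With $L:=\sup_J|h''|$ (finite on the compact hull of the $m_s$ and $\bar m$), this gives $|R_{\FMD}|\le h'(\bar m)L\,\E_S|\Delta_S|^3+\tfrac14L^2\E_S[\Delta_S^4]$.

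\emph{Second moment.} Expand
\[
h'(m_s)=h'(\bar m)+h''(\bar m)\Delta_s+\tfrac12h'''(\zeta_s)\Delta_s^2
\]
and multiply by $b_s=\bar b+\Gamma_s$. The product splits as
\[
h'(m_s)b_s=h'(\bar m)\bar b+\bigl[h'(\bar m)\Gamma_s+h''(\bar m)\bar b\,\Delta_s\bigr]+q_s,
\qquad q_s:=h''(\bar m)\Delta_s\Gamma_s+\tfrac12h'''(\zeta_s)\Delta_s^2 b_s.
\]
The first bracket is the linear part $A_s$. Its between-group variance is
\[
\{h'(\bar m)\}^2\Var_S(B)+2h'(\bar m)h''(\bar m)\bar b\,\Cov_S(M,B)+\{h''(\bar m)\}^2\bar b^2\Var_S(M),
\]
which is exactly the displayed leading part of \eqref{eq:glm-SMD-expansion}. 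This uses $\Var_S(\Gamma_S)=\Var_S(B)$ and $\Cov_S(\Gamma_S,\Delta_S)=\Cov_S(M,B)$. Consequently
\[
R_{\SMD}:=2\Cov_S(A_S,q_S)+\Var_S(q_S).
\]

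\emph{Main obstacle.} The identities are mechanical. The delicate part is giving $R_{\SMD}$ a meaningful order. It contains the mixed term $h''(\bar m)\Delta_s\Gamma_s$, which is of second order only jointly in $(\Delta,\Gamma)$, together with a $b_s$-weighted third-derivative term. I would bound it via Cauchy--Schwarz as $O\bigl(\|(\Delta,\Gamma)\|^3\bigr)$, with constants $\sup_J|h''|$, $\sup_J|h'''|$ and $\max_s b_s$. In this bound, $\|(\Delta,\Gamma)\|$ denotes the size of the group deviations, for example $\max_s(|\Delta_s|+|\Gamma_s|)$.
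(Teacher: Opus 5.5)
Your argument is the paper's proof: the same Lagrange expansions of $h$ and $h'$ around $\bar m$, with your $q_s$ equal to the paper's $E_s$ (since $\varepsilon_s\bar b+\varepsilon_s\Gamma_s=\varepsilon_s b_s$), so you get the same remainders $R_{\FMD}$ and $R_{\SMD}$. The order bounds you add go beyond what the paper states; for them, take the suprema of $|h''|$ and $|h'''|$ over the compact interval $[\min_s m_s,\max_s m_s]\subset J$ rather than over the open interval $J$, where they may be infinite.
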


\begin{proof}
A second-order expansion gives
\[
  h(m_s)
  =
  h(\bar m)+h'(\bar m)\Delta_s+r_s,
  \qquad
  r_s=\frac12h''(\xi_s)\Delta_s^2,
\]
where $\xi_s$ lies between $m_s$ and $\bar m$. Since constants vanish under
$\Var_S$,
\[
  \widetilde{\FMD}(f)
  =
  \Var_S\bigl(h'(\bar m)\Delta_S+r_S\bigr),
\]
and therefore
\[
  R_{\FMD}
  =
  2h'(\bar m)\Cov_S(\Delta_S,r_S)
  +
  \Var_S(r_S).
\]

For the second term, expand
\[
  h'(m_s)
  =
  h'(\bar m)+h''(\bar m)\Delta_s+\varepsilon_s,
  \qquad
  \varepsilon_s=\frac12h'''(\zeta_s)\Delta_s^2,
\]
where $\zeta_s$ lies between $m_s$ and $\bar m$. Then
\[
  h'(m_s)b_s
  =
  h'(\bar m)\bar b
  +
  L_s
  +
  E_s,
\]
with
\[
  L_s
  :=
  h'(\bar m)\Gamma_s
  +
  h''(\bar m)\bar b\,\Delta_s,
\]
and
\[
  E_s
  :=
  h''(\bar m)\Delta_s\Gamma_s
  +
  \varepsilon_s\bar b
  +
  \varepsilon_s\Gamma_s.
\]
Hence
\[
  \widetilde{\SMD}(f)
  =
  \Var_S(L_S)
  +
  2\Cov_S(L_S,E_S)
  +
  \Var_S(E_S).
\]
Expanding $\Var_S(L_S)$ gives the three leading terms in
\eqref{eq:glm-SMD-expansion}, and
\[
  R_{\SMD}
  =
  2\Cov_S(L_S,E_S)
  +
  \Var_S(E_S).
\]
\end{proof}

Define the linear latent disparity
\begin{equation}
\label{eq:glm-latent-linear-disparity}
  D_{\mathrm{lin}}
  :=
  \Var_S(\gamma_S)
  +
  \Var_S\bigl(\langle \bmu^{(S)},\bbeta\rangle\bigr)
  +
  2\,\Cov_S\bigl(\gamma_S,\langle \bmu^{(S)},\bbeta\rangle\bigr)
  +
  \Var_S(B).
\end{equation}
This is exactly the four-way linear quantity from
Theorem~\ref{thm:linear-four-way}, evaluated on the latent GLM score.

\begin{definition}[Leading GLM decomposition]
\label{def:glm-leading-decomposition}
The leading GLM decomposition is
\begin{align}
  D_1(f)
  &:=
  \{h'(\bar m)\}^2
  \Big[
  \underbrace{\Var_S(\gamma_S)}_{\textnormal{direct mean}}
  +
  \underbrace{\Var_S\bigl(\langle \bmu^{(S)},\bbeta\rangle\bigr)}
  _{\textnormal{indirect mean}}
  \notag\\
  &\qquad\qquad+
  \underbrace{2\,\Cov_S\bigl(\gamma_S,
  \langle \bmu^{(S)},\bbeta\rangle\bigr)}_{\textnormal{interaction}}
  +
  \underbrace{\Var_S(B)}_{\textnormal{indirect structural}}
  \Big]
  \notag\\
  &\quad+
  \underbrace{
  2h'(\bar m)h''(\bar m)\bar b\,\Cov_S(M,B)}
  _{\textnormal{curvature coupling}}
  +
  \underbrace{
  \{h''(\bar m)\}^2\bar b^2\Var_S(M)}
  _{\textnormal{curvature amplification}}.
  \label{eq:glm-D1}
\end{align}
\end{definition}

The bracketed term is the linear latent decomposition, rescaled by the local
slope of the inverse link. The last two terms are link-induced. The curvature
coupling term records whether group differences in average latent scores and
group differences in latent score dispersion move together after the nonlinear
transport. The curvature amplification term is nonnegative and measures the
additional output-scale dispersion created by curvature when latent group means
differ.

\begin{theorem}[Two-layer GLM decomposition]
\label{thm:glm-two-layer}
Under Assumptions~\ref{ass:glm-smoothness}
and~\ref{ass:glm-across-smoothness},
\begin{equation}
\label{eq:glm-two-layer}
  U_2(f)
  =
  D_1(f)
  +
  \Delta_{\mathrm{within}}(f)
  +
  \Delta_{\mathrm{across}}(f),
\end{equation}
where
\[
  \Delta_{\mathrm{within}}(f)
  =
  U_2(f)-\widetilde U_2(f),
\]
with the exact expression given in
\eqref{eq:glm-delta-within-definition}, and
\[
  \Delta_{\mathrm{across}}(f)
  =
  R_{\FMD}+R_{\SMD}
  =
  \widetilde U_2(f)-D_1(f).
\]
\end{theorem}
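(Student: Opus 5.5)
The identity \eqref{eq:glm-two-layer} is essentially a telescoping of the two approximation layers already isolated in Propositions~\ref{prop:glm-within-error} and~\ref{prop:glm-across}. I would write
\[
  U_2(f)
  =
  \widetilde U_2(f)
  +
  \bigl\{U_2(f)-\widetilde U_2(f)\bigr\},
  \qquad
  \widetilde U_2(f)
  =
  D_1(f)
  +
  \bigl\{\widetilde U_2(f)-D_1(f)\bigr\},
\]
and then identify the two bracketed differences with $\Delta_{\mathrm{within}}(f)$ and $\Delta_{\mathrm{across}}(f)$.

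\textbf{First layer.} Proposition~\ref{prop:glm-within-error} gives $U_2(f)-\widetilde U_2(f)=\Delta_{\mathrm{within}}(f)$, with the explicit covariance/variance expression \eqref{eq:glm-delta-within-definition}. This is a purely algebraic expansion of the between-group variances in Definition~\ref{def:glm-U2}, so it needs no smoothness. Assumption~\ref{ass:glm-smoothness} is invoked only to guarantee that $\mu_f(s)$, $\sigma_f(s)$ and $h'(m_s)$ are well defined and finite. It also makes $e_\mu$ and $e_\sigma$ controlled, via Proposition~\ref{prop:glm-delta}, although the bounds are not needed for the identity itself.

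\textbf{Second layer.} Summing \eqref{eq:glm-FMD-expansion} and \eqref{eq:glm-SMD-expansion} from Proposition~\ref{prop:glm-across} gives
\[
  \widetilde U_2(f)
  =
  \{h'(\bar m)\}^2\bigl[\Var_S(M)+\Var_S(B)\bigr]
  +
  2h'(\bar m)h''(\bar m)\bar b\,\Cov_S(M,B)
  +
  \{h''(\bar m)\}^2\bar b^2\Var_S(M)
  +
  R_{\FMD}+R_{\SMD}.
\]
Next I would substitute the exact latent mean-path decomposition \eqref{eq:glm-latent-mean-path} for $\Var_S(M)$ inside the first bracket only. The $\Var_S(M)$ appearing in the curvature amplification term stays as it is, matching Definition~\ref{def:glm-leading-decomposition}. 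The bracket then becomes exactly $D_{\mathrm{lin}}$ in \eqref{eq:glm-latent-linear-disparity}. Comparing term by term with \eqref{eq:glm-D1} shows that the non-remainder part equals $D_1(f)$. Hence $\widetilde U_2(f)-D_1(f)=R_{\FMD}+R_{\SMD}=\Delta_{\mathrm{across}}(f)$.

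Adding the two layers yields \eqref{eq:glm-two-layer}. There is no real analytic obstacle; the step needing the most care is bookkeeping in the second layer. One must check that the $\Var_S(M)$ multiplied by $\{h'(\bar m)\}^2$ is the one decomposed into the three mean channels. One must also check that the remainders $R_{\FMD}$ and $R_{\SMD}$ defined in the proof of Proposition~\ref{prop:glm-across} are exactly what is left over, with no leading-order cross terms from $\Var_S(L_S)$ omitted or double counted. This holds because $\Var_S(L_S)$ expands into precisely $\{h'(\bar m)\}^2\Var_S(B)$, the coupling term and the amplification term.

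The regularity needed for this second layer is that the Taylor points $\xi_s,\zeta_s$ lie in $J$. This is guaranteed by Assumption~\ref{ass:glm-across-smoothness}, since $J$ is an interval containing $\bar m$ and every $m_s$.
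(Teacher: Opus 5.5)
Your proposal is correct and matches the paper's proof: you combine Proposition~\ref{prop:glm-within-error}, which gives $U_2(f)=\widetilde U_2(f)+\Delta_{\mathrm{within}}(f)$, with Proposition~\ref{prop:glm-across} and the mean-path identity \eqref{eq:glm-latent-mean-path}, which give $\widetilde U_2(f)=D_1(f)+R_{\FMD}+R_{\SMD}$, and then add the two identities. Your extra bookkeeping checks are accurate and go beyond what the paper spells out: only the $\Var_S(M)$ multiplying $\{h'(\bar m)\}^2$ is split into the three mean channels, and the Taylor points stay in $J$.
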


\begin{proof}
By Proposition~\ref{prop:glm-within-error},
\[
  U_2(f)
  =
  \widetilde U_2(f)
  +
  \Delta_{\mathrm{within}}(f).
\]
By Proposition~\ref{prop:glm-across} and
\eqref{eq:glm-latent-mean-path},
\[
  \widetilde U_2(f)
  =
  D_1(f)
  +
  R_{\FMD}
  +
  R_{\SMD}.
\]
Combining the two identities gives \eqref{eq:glm-two-layer}.
\end{proof}

\begin{corollary}[Identity-link benchmark]
\label{cor:glm-identity}
If $h(t)=t$, then
\[
  U_2(f)=\widetilde U_2(f)=D_1(f),
\]
all curvature and remainder terms vanish, and
Theorem~\ref{thm:glm-two-layer} reduces to the exact four-way linear
decomposition of Theorem~\ref{thm:linear-four-way}.
\end{corollary}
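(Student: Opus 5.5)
The plan is to substitute $h(t)=t$ directly into each layer of the construction and check that every approximation error vanishes identically. With $h'\equiv 1$ and $h''\equiv h'''\equiv 0$, Assumptions~\ref{ass:glm-smoothness} and~\ref{ass:glm-across-smoothness} hold trivially on $\mathcal D_h=J=\mathbb R$, with $L_{2,s}=0$. The finite-moment requirements $\rho_s,\kappa_s<\infty$ only enter the error bounds. When $L_{2,s}=0$ they are not needed, because we will verify exactness directly rather than through the bounds. Since $f(\bX,S)=\eta$, the output-scale moments are simply the latent moments: $\mu_f(s)=m_s$ and $\sigma_f(s)=b_s$.

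First, the within layer. The proxy moments \eqref{eq:glm-proxy-moments} are $\widetilde\mu_f(s)=h(m_s)=m_s$ and $\widetilde\sigma_f(s)=h'(m_s)b_s=b_s$. Hence $e_\mu\equiv 0$ and $e_\sigma\equiv 0$. This is consistent with the bounds of Proposition~\ref{prop:glm-delta}, which also equal zero since $L_{2,s}=0$. Every term of \eqref{eq:glm-delta-within-definition} then vanishes, so $\Delta_{\mathrm{within}}(f)=0$ and $U_2(f)=\widetilde U_2(f)=\Var_S(M)+\Var_S(B)$.

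Second, the across layer. In the proof of Proposition~\ref{prop:glm-across}, the Taylor remainders are $r_s=\tfrac12 h''(\xi_s)\Delta_s^2=0$ and $\varepsilon_s=\tfrac12 h'''(\zeta_s)\Delta_s^2=0$. The cross term $h''(\bar m)\Delta_s\Gamma_s$ in $E_s$ is also zero, so $E_s\equiv0$ and therefore $R_{\FMD}=R_{\SMD}=0$. It follows that $\Delta_{\mathrm{across}}(f)=0$. In \eqref{eq:glm-D1}, the curvature coupling and curvature amplification terms both carry a factor $h''(\bar m)=0$ and so drop out. The prefactor $\{h'(\bar m)\}^2$ equals $1$. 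Thus $D_1(f)=D_{\mathrm{lin}}$.

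Finally, we match this with the linear benchmark. Applying \eqref{eq:glm-latent-mean-path} gives $\Var_S(M)+\Var_S(B)=D_{\mathrm{lin}}$. With $h$ the identity, $f$ coincides with $f_{\mathrm{lin}}$, and $D_{\mathrm{lin}}$ is exactly the right-hand side of \eqref{eq:linear-four-way-theorem}. The two-layer identity \eqref{eq:glm-two-layer} therefore collapses to Theorem~\ref{thm:linear-four-way}. The proof is pure bookkeeping with no real obstacle. The only point requiring care is the term $h''(\bar m)\Delta_s\Gamma_s$ inside $E_s$. It is not a Taylor remainder but a product term, so its vanishing needs a separate note rather than following from $r_s=\varepsilon_s=0$.
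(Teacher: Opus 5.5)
Your proposal is correct and follows the same route as the paper: substitute $h'\equiv 1$ and $h''\equiv 0$, so that the proxy moments coincide with the exact moments and both approximation errors and both curvature terms vanish. You spell out what the paper's two-line proof leaves implicit, namely that $R_{\FMD}=R_{\SMD}=0$ (including the $h''(\bar m)\Delta_s\Gamma_s$ cross term in $E_s$) and that \eqref{eq:glm-latent-mean-path} turns $\Var_S(M)+\Var_S(B)$ into \eqref{eq:linear-four-way-theorem}.
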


\begin{proof}
For $h(t)=t$, one has $h'(t)=1$ and $h''(t)=0$. Hence
\[
  \widetilde\mu_f(s)=m_s,
  \qquad
  \widetilde\sigma_f(s)=b_s,
\]
and the two approximation errors vanish.
\end{proof}

\begin{table}[!htbp]
\centering
\caption{Diagnostic interpretation of the leading terms in
Definition~\ref{def:glm-leading-decomposition}.}
\label{tab:glm-diagnostic-terms}
\fontsize{9}{11}\selectfont
\begin{tabular}{p{3.6cm}p{9.2cm}}
\toprule
\textbf{Term} & \textbf{Diagnostic reading} \\
\midrule
Direct mean
& Explicit fitted contribution of the sensitive attribute through $\gamma_S$. \\
Indirect mean
& Group differences in average non-sensitive covariate profiles, projected
through $\bbeta$. \\
Interaction
& Reinforcement or offset between the direct and indirect mean channels. \\
Indirect structural
& Group differences in within-group latent score dispersion, induced by
$\bSigma^{(S)}$. \\
Curvature coupling
& Interaction between group-wise latent score means and latent score
standard deviations through $h''(\bar m)$. \\
Curvature amplification
& Additional output-scale disparity generated by curvature and
between-group latent mean heterogeneity. \\
\bottomrule
\end{tabular}
\end{table}

Figure~\ref{fig:schema:GLM} summarizes the transport from the latent scale to
the mean scale. The tangent used in the first-order mapping is
\[
  T_{\bar m}(t)
  =
  h(\bar m)+h'(\bar m)(t-\bar m).
\]
The leading decomposition is reliable when this local linearization captures the
relevant score range. When the score is highly dispersed, when group means are
far from $\bar m$, or when $h$ has strong curvature on the fitted range, the
terms $\Delta_{\mathrm{within}}(f)$ and $\Delta_{\mathrm{across}}(f)$ should be
reported and interpreted.

\begin{figure}[!htbp]
    \centering
    \begin{tikzpicture}[
  >=Latex,
  scale=.9,
  every node/.style={font=\small},
  tealA/.style={color=teal!65!black},
  orangeB/.style={color=orange!85!black}
]

\def\betap{0.45}
\def\betaZero{0.20}
\def\gammaA{0.00}
\def\gammaB{1}

\def\xA{1.5}
\def\xB{3.8}

\pgfmathsetmacro{\mA}{\betaZero+\gammaA+\betap*\xA}
\pgfmathsetmacro{\mB}{\betaZero+\gammaB+\betap*\xB}
\pgfmathsetmacro{\mAatB}{\betaZero+\gammaA+\betap*\xB}

\def\bA{0.18}
\def\bB{0.28}

\pgfmathsetmacro{\muA}{exp(\mA)}
\pgfmathsetmacro{\muB}{exp(\mB)}
\pgfmathsetmacro{\muAatB}{exp(\mAatB)}

\pgfmathsetmacro{\mbar}{0.5*(\mA+\mB)}
\pgfmathsetmacro{\mubar}{exp(\mbar)}

\def\xdL{5.45}
\def\denswidthL{0.36}

\def\shiftR{8.2}


\draw[->] (0,0) -- (6.3,0) node[right] {$x$};
\draw[->] (0,0) -- (0,4.15) node[above] {$\eta$};

\draw[teal!45, dashed] (\xA,0) -- (\xA,3.95);
\draw[orange!45, dashed] (\xB,0) -- (\xB,3.95);

\draw[teal!45, dashed] (0,\mA) -- (6.0,\mA);
\draw[orange!45, dashed] (0,\mB) -- (6.0,\mB);

\node[teal!65!black, below] at (\xA,0) {$\bar x_A$};
\node[orange!85!black, below] at (\xB,0) {$\bar x_B$};

\node[teal!65!black, left] at (0,\mA) {$m_A$};
\node[orange!85!black, left] at (0,\mB) {$m_B$};

\draw[teal!65!black, very thick]
  plot[domain=0.15:5.2, samples=2]
  (\x,{\betaZero+\gammaA+\betap*\x});

\draw[orange!85!black, very thick]
  plot[domain=0.15:5.2, samples=2]
  (\x,{\betaZero+\gammaB+\betap*\x});

\def\hya{-0.3}
\def\hyb{0.10}

\foreach \x/\y in {
0.9/0.75,1.0/0.90,1.1/1.05,1.2/0.95,1.3/1.15,
1.4/1.00,1.5/1.18,1.6/1.30,1.7/1.22,1.8/1.42,
1.2/1.20,1.35/0.85,1.55/1.10,1.75/1.32,1.95/1.45
}{
  \fill[teal!55, opacity=.28] (\x,{\y+\hya}) circle (1.8pt);
}

\foreach \x/\y in {
3.3/2.25,3.4/2.50,3.5/2.65,3.6/2.75,3.7/2.55,
3.8/2.85,3.9/2.95,4.0/2.70,4.1/3.05,4.2/2.90,
3.45/2.85,3.65/2.40,3.85/2.78,4.05/3.00,4.20/3.15
}{
  \fill[orange!65, opacity=.30] (\x,{\y+\hyb}) circle (1.8pt);
}

\fill[black] (\xA,\mA) circle (2.3pt);
\fill[black] (\xB,\mAatB) circle (2.3pt);
\fill[black] (\xB,\mB) circle (2.3pt);

\draw[black, very thick, ->]
  (\xA,\mA) -- (\xB,\mAatB);
\node[teal!65!black] at (2.75,2.55) {indirect};

\draw[black, very thick, <->]
  (\xB,\mAatB) -- (\xB,\mB);
\node[orange!85!black, right] at (\xB,{(\mAatB+\mB)/2+0.12}) {direct};

\draw[teal!65!black, thin]
  plot[domain={\mA-4*\bA}:{\mA+4*\bA}, samples=100, variable=\y]
  ({\xdL + 2*\denswidthL*exp(-((\y-\mA)^2)/(2*\bA*\bA))}, {\y});

\draw[orange!85!black, thin]
  plot[domain={\mB-4*\bB}:{\mB+4*\bB}, samples=100, variable=\y]
  ({\xdL + 2*\denswidthL*exp(-((\y-\mB)^2)/(2*\bB*\bB))}, {\y});

\draw[black, thick, <->]
  ({\xdL+\denswidthL+.08},{\mA-2*\bA}) --
  ({\xdL+\denswidthL+.08},{\mA+2*\bA});
\node[teal!65!black, right] at ({\xdL+\denswidthL+.35},\mA) {$b_A$};

\draw[black, thick, <->]
  ({\xdL+\denswidthL+.08},{\mB-2*\bB}) --
  ({\xdL+\denswidthL+.08},{\mB+2*\bB});
\node[orange!85!black, right] at ({\xdL+\denswidthL+.35},\mB) {$b_B$};

\node[font=\bfseries] at (3.1,5.25) {Latent scale};
\node[font=\small] at (3.1,4.73)
  {$\eta(x,s)=\beta_0+\gamma_s+\beta x$};

\def\yfacR{0.28}

\def\gammaB{0.50}
\def\xA{1.5}
\def\xB{3.8}

\pgfmathsetmacro{\mA}{\betaZero+\gammaA+\betap*\xA}
\pgfmathsetmacro{\mB}{\betaZero+\gammaB+\betap*\xB}
\pgfmathsetmacro{\mAatB}{\betaZero+\gammaA+\betap*\xB}

\def\bA{0.18}
\def\bB{0.28}

\pgfmathsetmacro{\muA}{exp(\mA)}
\pgfmathsetmacro{\muB}{exp(\mB)}
\pgfmathsetmacro{\muAatB}{exp(\mAatB)}

\pgfmathsetmacro{\mbar}{0.5*(\mA+\mB)}
\pgfmathsetmacro{\mubar}{exp(\mbar)}

\pgfmathsetmacro{\muAplot}{\yfacR*\muA}
\pgfmathsetmacro{\muBplot}{\yfacR*\muB}
\pgfmathsetmacro{\mubarplot}{\yfacR*\mubar}

\pgfmathsetmacro{\sigAplot}{\yfacR*\muA*\bA}
\pgfmathsetmacro{\sigBplot}{\yfacR*\muB*\bB}

\begin{scope}[shift={(\shiftR,0)}]

\draw[->] (0,0) -- (4.1,0) node[right] {$\eta$};
\draw[->] (0,0) -- (0,5.3) node[above] {$\mu$};

\draw[black, very thick]
  plot[domain=0.15:2.85, samples=100]
  (\x,{\yfacR*exp(\x)});

\draw[black!60, dashed, thick]
  plot[domain=0.15:3.0, samples=2]
  (\x,{\yfacR*(\mubar + \mubar*(\x-\mbar))});

\draw[teal!45, dashed] (\mA,0) -- (\mA,4.95);
\draw[orange!45, dashed] (\mB,0) -- (\mB,4.95);
\draw[black!35, dotted] (\mbar,0) -- (\mbar,4.95);

\draw[teal!45, dashed] (0,\muAplot) -- (3.9,\muAplot);
\draw[orange!45, dashed] (0,\muBplot) -- (3.9,\muBplot);

\fill[teal!65!black] (\mA,\muAplot) circle (2.6pt);
\fill[orange!85!black] (\mB,\muBplot) circle (2.6pt);
\fill[black] (\mbar,\mubarplot) circle (2.4pt);

\node[teal!65!black, below] at (\mA,0) {$m_A$};
\node[orange!85!black, below] at (\mB,0) {$m_B$};

\node[teal!65!black, left] at (0,\muAplot) {$\tilde\mu_A$};
\node[orange!85!black, left] at (0,\muBplot) {$\tilde\mu_B$};

\draw[teal!65!black, thick, <->]
  ({\mA+0.10},{\muAplot-3*\sigAplot}) --
  ({\mA+0.10},{\muAplot+3*\sigAplot});
\node[teal!65!black, right] at ({\mA+0.2},{\muAplot-\sigAplot-.2})
  {$\tilde\sigma_A$};

\draw[orange!85!black, thick, <->]
  ({\mB+0.10},{\muBplot-\sigBplot}) --
  ({\mB+0.10},{\muBplot+\sigBplot});
\node[orange!85!black, right] at ({\mB+0.2},{\muBplot-\sigBplot})
  {$\tilde\sigma_B$};

\node[black!70] at (3.50,3.35) {tangent};



\node[font=\bfseries] at (2.05,5.75) {Mean scale};
\node[font=\small] at (2.05,5.25)
  {$h(\eta)=e^\eta$};

\end{scope}

\node[font=\bfseries] at (6.5,6.5)
  {GLM extension: latent to mean-scale transport};

\end{tikzpicture}
    \caption{Schematic extension to generalized linear models. On the latent
    scale, group differences are read through direct and indirect mean channels,
    while $b_A$ and $b_B$ summarize within-group score dispersion. The inverse
    link $h$ transports these latent quantities to the mean scale. The tangent
    $T_{\bar m}(t)=h(\bar m)+h'(\bar m)(t-\bar m)$ illustrates the local
    first-order mapping; curvature generates the additional coupling and
    amplification terms.}
    \label{fig:schema:GLM}
\end{figure}

\subsection{Latent-scale moment alignment for GLMs}
\label{subsec:glm-fair-correction}

The decomposition above is a diagnostic. As in the linear benchmark, it can be
paired with a moment-aligning post-processing. In a GLM, this correction must
be applied to the latent score, not directly to the prediction $h(\eta)$.

Assume that $b_s>0$ for all $s$. For each group $s$, define
\begin{equation}
\label{eq:glm-fair-eta}
  \eta^{\mathrm{fair}}
  =
  \bar m+\frac{\bar b}{b_s}\{\eta-m_s\}
  =
  \alpha_s\eta+\delta_s,
  \qquad
  \alpha_s:=\frac{\bar b}{b_s},
  \qquad
  \delta_s:=\bar m-\alpha_s m_s.
\end{equation}
The corrected GLM prediction is then
\begin{equation}
\label{eq:glm-fair-prediction}
  f^{\mathrm{fair}}(\bx,s)
  =
  h\{\eta^{\mathrm{fair}}(\bx,s)\}.
\end{equation}

\begin{proposition}[Latent moment alignment]
\label{prop:glm-latent-moment-alignment}
For every $s\in[M]$,
\[
  \E[\eta^{\mathrm{fair}}\mid S=s]=\bar m,
  \qquad
  \sqrt{\Var(\eta^{\mathrm{fair}}\mid S=s)}=\bar b.
\]
Consequently, the first-order proxy disparity of the corrected predictor is
zero:
\[
  \widetilde U_2(f^{\mathrm{fair}})=0.
\]
\end{proposition}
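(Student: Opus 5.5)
The argument has two parts: a direct moment computation on the latent scale, and then applying the definition of the proxy criterion to the corrected predictor.

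\textbf{Latent moments.} Work conditionally on $S=s$ and use the affine form \eqref{eq:glm-fair-eta}, namely $\eta^{\mathrm{fair}}=\alpha_s\eta+\delta_s$ with $\alpha_s=\bar b/b_s$ (well defined since $b_s>0$). By linearity of conditional expectation,
\[
\E[\eta^{\mathrm{fair}}\mid S=s]=\bar m+\alpha_s(m_s-m_s)=\bar m .
\]
Since translation does not affect the variance,
\[
\Var(\eta^{\mathrm{fair}}\mid S=s)=\alpha_s^2 v_s=\alpha_s^2 b_s^2=\bar b^2 .
\]
Its square root equals $\bar b$ because $\bar b\ge 0$. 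This mirrors the proof of Proposition~\ref{prop:linear-moment-alignment} verbatim.

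\textbf{Proxy criterion.} The key observation is that the corrected latent score is itself an affine latent score within each group. Within group $s$ it equals $\langle \bx,\alpha_s\bbeta\rangle+\alpha_s(\beta_0+\gamma_s)+\delta_s$. The proxy moments \eqref{eq:glm-proxy-moments} therefore apply to $f^{\mathrm{fair}}=h(\eta^{\mathrm{fair}})$, using its own group-wise latent moments $m^{\mathrm{fair}}_s=\bar m$ and $b^{\mathrm{fair}}_s=\bar b$. Hence
\[
\widetilde\mu_{f^{\mathrm{fair}}}(s)=h(\bar m),
\qquad
\widetilde\sigma_{f^{\mathrm{fair}}}(s)=h'(\bar m)\,\bar b .
\]
Neither depends on $s$. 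Both between-group variances in \eqref{eq:glm-U2-tilde} therefore vanish, which gives $\widetilde U_2(f^{\mathrm{fair}})=0$.

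\textbf{Main point of care.} The only delicate issue is definitional. $\widetilde U_2$ was defined through the latent moments of a single-coefficient GLM score, whereas the corrected score has group-specific slopes $\alpha_s\bbeta$. I would handle this by noting that \eqref{eq:glm-proxy-moments} depends only on the group-wise mean and standard deviation of the latent score fed into $h$, not on any common-slope structure. I would also note that $h(\bar m)$ and $h'(\bar m)$ are well defined: $\bar m$ is a convex combination of the $m_s$, all of which lie in the interval $\mathcal D_h$, so $\bar m\in\mathcal D_h$ as well.

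I would close by remarking that only the proxy vanishes. The exact criterion $U_2(f^{\mathrm{fair}})$ equals $\Delta_{\mathrm{within}}(f^{\mathrm{fair}})$ by Proposition~\ref{prop:glm-within-error}, and this quantity need not be zero.
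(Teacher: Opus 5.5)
Your proof is correct and follows the paper's own argument: compute the conditional mean and variance of $\eta^{\mathrm{fair}}$ from its affine form, then observe that the proxy moments $h(\bar m)$ and $h'(\bar m)\bar b$ do not depend on $s$, so both between-group variances in $\widetilde U_2$ vanish. Your extra remarks make explicit two points the paper leaves implicit: that the proxy moments depend only on the group-wise latent mean and standard deviation of the score fed into $h$, not on a common slope vector, and that $\bar m\in\mathcal D_h$ because it is a weighted average of the $m_s$ in an interval.
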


\begin{proof}
The first two identities follow directly from
\eqref{eq:glm-fair-eta}:
\[
  \E[\eta^{\mathrm{fair}}\mid S=s]
  =
  \bar m+\frac{\bar b}{b_s}(m_s-m_s)
  =
  \bar m,
\]
and
\[
  \Var(\eta^{\mathrm{fair}}\mid S=s)
  =
  \left(\frac{\bar b}{b_s}\right)^2b_s^2
  =
  \bar b^2.
\]
Therefore, for all $s$,
\[
  \widetilde\mu_{f^{\mathrm{fair}}}(s)=h(\bar m),
  \qquad
  \widetilde\sigma_{f^{\mathrm{fair}}}(s)=h'(\bar m)\bar b.
\]
Both proxy moments are constant across groups, so
$\widetilde U_2(f^{\mathrm{fair}})=0$.
\end{proof}

The equality $\widetilde U_2(f^{\mathrm{fair}})=0$ is a first-order statement.
The exact output-scale disparity $U_2(f^{\mathrm{fair}})$ may remain positive
when the conditional latent score distributions differ beyond their first two
moments. If the affine correction equalizes the full conditional latent score
distributions, for instance under Gaussian location-scale score distributions,
then the conditional distributions of $h(\eta^{\mathrm{fair}})$ are also equal
and $U_2(f^{\mathrm{fair}})=0$.

On the coefficient scale, \eqref{eq:glm-fair-eta} gives, for group $s$,
\[
  \eta^{\mathrm{fair}}(\bx,s)
  =
  \langle \bx,\alpha_s\bbeta\rangle
  +
  \alpha_s(\beta_0+\gamma_s)
  +
  \delta_s.
\]
Thus the non-sensitive slopes become group-specific,
\[
  \bbeta^{\mathrm{fair},s}=\alpha_s\bbeta,
\]
and the group-specific intercept absorbs the original sensitive effect and the
recentering term. Coefficient-shift figures should therefore be read as latent
score transformations. Their effect on probabilities, frequencies, or premiums
depends on the slope and curvature of $h$ at the relevant operating point.

\subsection{Assessing the approximation in practice}
\label{subsec:glm-approximation-assessment}

The approximation can be assessed empirically because $U_2(f)$,
$\widetilde U_2(f)$, and $D_1(f)$ are all computable after fitting the model.
We recommend reporting
  $U_2(f)$,
  $\widetilde U_2(f)$,
  $D_1(f)$,
together with the signed approximation errors
\[
  \Delta_{\mathrm{within}}(f)
  :=
  U_2(f)-\widetilde U_2(f),
  \qquad
  \Delta_{\mathrm{across}}(f)
  :=
  \widetilde U_2(f)-D_1(f),
\]
and, when $U_2(f)>0$, the unsigned relative discrepancy
\begin{equation}
\label{eq:r-within}
  r_{\mathrm{within}}(f)
  :=
  \frac{|U_2(f)-\widetilde U_2(f)|}{U_2(f)}.
\end{equation}
The signed error $\Delta_{\mathrm{within}}$ may be positive or negative. The
relative discrepancy $r_{\mathrm{within}}$ is nonnegative by definition.

When $r_{\mathrm{within}}$ is small and $\Delta_{\mathrm{across}}$ is negligible,
the leading components in $D_1(f)$ can be read as a faithful decomposition of
the observed output-scale moment disparity. When these discrepancies are
moderate or large, the components remain useful as directional diagnostics, but
the exact empirical criterion $U_2(f)$ should be reported and interpreted
alongside the leading decomposition.

Large discrepancies are expected when the latent score is highly dispersed
within groups, when group means $m_s$ are far from $\bar m$, or when the
inverse link has strong curvature on the fitted score range. These are precisely
the situations reflected in the bounds of Proposition~\ref{prop:glm-delta} and
in the remainders of Proposition~\ref{prop:glm-across}. The approximation is
therefore not a hidden modelling assumption; it is a diagnostic layer whose
accuracy can be checked.

\subsection{Extension to generalized additive models}
\label{subsec:glm-gam-extension}

The same construction applies when the latent score is additive rather than
linear. Consider a GAM predictor
\begin{equation}
\label{eq:gam-eta}
  \eta
  =
  \sum_{j=1}^k f_j(x_j)+\gamma_s+\beta_0,
  \qquad
  \mu(\bx,s)=h(\eta).
\end{equation}
For each component $j$, define
\[
  m_{j,s}:=\E[f_j(X_j)\mid S=s],
  \qquad
  v_{j,s}:=\Var(f_j(X_j)\mid S=s),
\]
and, for $i\neq j$,
\[
  c_{ij,s}:=\Cov(f_i(X_i),f_j(X_j)\mid S=s).
\]
The component functions are understood under the usual centering or
identifiability constraints for smooth terms
\cite{hastie1990generalized,wood2017generalized}.

\begin{proposition}[Latent moments in a GAM]
\label{prop:gam-component-decomp}
For the latent score \eqref{eq:gam-eta},
\begin{equation}
\label{eq:gam-ms}
  m_s
  =
  \sum_{j=1}^k m_{j,s}
  +
  \gamma_s
  +
  \beta_0,
\end{equation}
and
\begin{equation}
\label{eq:gam-vs}
  v_s
  =
  \sum_{j=1}^k v_{j,s}
  +
  2\sum_{1\le i<j\le k} c_{ij,s}.
\end{equation}
Consequently, if $M_j:=m_{j,S}$, then
\begin{align}
\label{eq:gam-VarM}
  \Var_S(M)
  &=
  \Var_S(\gamma_S)
  +
  \sum_{j=1}^k \Var_S(M_j)
  +
  2\sum_{j=1}^k \Cov_S(\gamma_S,M_j)
  \notag\\
  &\quad+
  2\sum_{1\le i<j\le k}\Cov_S(M_i,M_j).
\end{align}
\end{proposition}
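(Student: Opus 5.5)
The argument uses only linearity of conditional expectation and bilinearity of covariance, applied first within each group and then across groups. No approximation is involved, since the GAM latent score \eqref{eq:gam-eta} is additive, exactly as in the linear case of \eqref{eq:glm-latent-moments}.

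\emph{Step 1 (conditional mean).} Fix $s\in[M]$ and condition on $S=s$. On this event the terms $\gamma_s$ and $\beta_0$ are constants. Taking conditional expectations of \eqref{eq:gam-eta} term by term gives \eqref{eq:gam-ms}. The only requirement is that each $f_j(X_j)$ be conditionally integrable, which I take as implicit in the definition of $m_{j,s}$.

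\emph{Step 2 (conditional variance).} Under the same conditioning, the constants $\gamma_s+\beta_0$ do not affect the variance. This leaves
\[
v_s=\Var\Bigl(\textstyle\sum_{j=1}^k f_j(X_j)\,\Big|\,S=s\Bigr).
\]
Expanding by bilinearity of the conditional covariance yields the diagonal terms $v_{j,s}$. It also yields the off-diagonal terms $c_{ij,s}+c_{ji,s}=2c_{ij,s}$ for $i<j$, which is \eqref{eq:gam-vs}. Finite conditional second moments of each $f_j(X_j)$ are needed for these quantities to exist, consistent with $v_{j,s}$ being defined.

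\emph{Step 3 (between-group variance).} By Step 1, $M=\gamma_S+\sum_j M_j+\beta_0$ as a function of $S$. The operators $\Var_S$ and $\Cov_S$ are the variance and covariance under the discrete law $(\pi_s)$, so they are bilinear and annihilate the constant $\beta_0$. Expanding the variance of the sum of the $k+1$ random terms $\gamma_S,M_1,\dots,M_k$ produces three groups of terms:
\begin{itemize}
\item the variances $\Var_S(\gamma_S)$ and $\Var_S(M_j)$;
\item the cross terms $2\Cov_S(\gamma_S,M_j)$;
\item the cross terms $2\Cov_S(M_i,M_j)$ for $i<j$.
\end{itemize}
Together these give \eqref{eq:gam-VarM}.

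There is no genuine obstacle in this argument. The only point needing care is bookkeeping in the double sums: each unordered pair $i<j$ must be counted once, carrying the factor $2$. As a consistency check, setting $f_j(x_j)=\beta_jx_j$ should recover \eqref{eq:glm-latent-mean-path}. In that case $\sum_j\Var_S(M_j)+2\sum_{i<j}\Cov_S(M_i,M_j)=\Var_S(\langle\bmu^{(S)},\bbeta\rangle)$, and $\sum_j\Cov_S(\gamma_S,M_j)=\Cov_S(\gamma_S,\langle\bmu^{(S)},\bbeta\rangle)$.
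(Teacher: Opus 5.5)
Your proof is correct and follows the paper's own argument. You take conditional expectations and variances of the additive score within each group, then expand the between-group variance of $M=\sum_{j}M_j+\gamma_S+\beta_0$ by bilinearity, and your integrability remarks and consistency check against \eqref{eq:glm-latent-mean-path} are sound but add nothing essential.
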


\begin{proof}
Equations \eqref{eq:gam-ms} and \eqref{eq:gam-vs} follow by taking conditional
expectations and variances in \eqref{eq:gam-eta}. Equation
\eqref{eq:gam-VarM} follows by expanding the between-group variance of
\[
  M=\sum_{j=1}^k M_j+\gamma_S+\beta_0.
\]
\end{proof}

Thus, Theorem~\ref{thm:glm-two-layer} applies to GAMs after replacing the
linear quantities $m_s$ and $b_s$ by the additive expressions
\eqref{eq:gam-ms} and \eqref{eq:gam-vs}. A smooth component contributes
through its between-group mean differences, its within-group variance, and its
covariance with the other components.

\subsection{Estimation and empirical reporting}
\label{subsec:glm-estimation}

The decomposition is evaluated by plug-in after fitting the model. For a fitted
GLM or GAM, compute the empirical analogues of $m_s$, $b_s$, $\bar m$, and
$\bar b$, and substitute them into $D_1(f)$ in
Definition~\ref{def:glm-leading-decomposition}. Penalization changes the
estimated coefficients or smooth components, but not the algebraic form of the
diagnostic. The same applies to ridge, lasso, elastic-net, and penalized spline
estimation.

Sampling uncertainty can be assessed by a nonparametric bootstrap: refit the
model on each bootstrap sample, recompute $U_2(f)$, $\widetilde U_2(f)$,
$D_1(f)$, and all leading components, and report standard deviations or
percentile intervals. This is especially useful when some sensitive groups are
small, since estimates of $\bSigma^{(s)}$ and of the structural component may
then be unstable.

In the numerical sections below, the tables should distinguish three quantities:
the exact empirical output-scale criterion $U_2(f)$, the proxy
$\widetilde U_2(f)$, and the leading decomposition $D_1(f)$. The signed gap
$U_2(f)-\widetilde U_2(f)$ may be positive or negative, whereas the reported
relative discrepancy must be nonnegative, as in \eqref{eq:r-within}.

\section{The Logistic Case}
\label{sec:logistic}

Logistic regression is the canonical GLM for binary actuarial outcomes, such as
claim occurrence, hospitalization, default, lapse, or renewal. It is also the
simplest nonlinear case after the identity link. The prediction is bounded
between zero and one, and the inverse link has a changing curvature. The
logistic case therefore illustrates how a latent-score disparity can be compressed
or reshaped when it is read on a probability scale.

\subsection{Logistic GLM and inverse-logit derivatives}
\label{subsec:logistic-setup}

Let
\[
  Y\mid(\bX,S)\sim \mathrm{Bernoulli}\bigl(\mu(\bX,S)\bigr),
  \qquad
  \E[Y\mid\bX,S]=\mu(\bX,S).
\]
With the canonical logit link,
\[
  g(\mu)=\log\left(\frac{\mu}{1-\mu}\right),
\]
the latent score and the predicted probability are
\[
  \eta
  =
  \langle\bx,\bbeta\rangle+\gamma_s+\beta_0,
  \qquad
  f(\bx,s)=\mu(\bx,s)=h(\eta),
\]
where
\begin{equation}
\label{eq:logistic-inverse-link}
  h(t)=\frac{1}{1+e^{-t}}.
\end{equation}
As in Section~\ref{sec:glm-decomposition},
\[
  m_s
  =
  \E[\eta\mid S=s]
  =
  \langle\bmu^{(s)},\bbeta\rangle+\gamma_s+\beta_0,
  \qquad
  b_s
  =
  \sqrt{\bbeta^\top\bSigma^{(s)}\bbeta},
\]
and
\[
  M:=m_S,
  \qquad
  B:=b_S,
  \qquad
  \bar m:=\E_S[M],
  \qquad
  \bar b:=\E_S[B].
\]

\begin{lemma}[Derivatives of the inverse-logit map]
\label{lem:logistic-derivatives}
For $h(t)=(1+e^{-t})^{-1}$,
\[
  h'(t)=h(t)\bigl(1-h(t)\bigr),
  \qquad
  h''(t)=h(t)\bigl(1-h(t)\bigr)\bigl(1-2h(t)\bigr).
\]
In particular, if
\[
  \bar\mu:=h(\bar m),
  \qquad
  a:=\bar\mu(1-\bar\mu),
\]
then
\begin{equation}
\label{eq:logistic-derivatives-at-mbar}
  h'(\bar m)=a,
  \qquad
  h''(\bar m)=a(1-2\bar\mu).
\end{equation}
\end{lemma}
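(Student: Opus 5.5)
The plan is a direct computation from the definition $h(t)=(1+e^{-t})^{-1}$, organized so that each derivative is expressed back in terms of $h$ itself. This is exactly what makes the evaluation at $\bar m$ in \eqref{eq:logistic-derivatives-at-mbar} immediate.

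First, differentiate the quotient:
\[
  h'(t)=\frac{e^{-t}}{(1+e^{-t})^2}.
\]
Then observe that
\[
  1-h(t)=\frac{e^{-t}}{1+e^{-t}},
\]
so the product $h(t)\{1-h(t)\}$ equals the same expression. This gives the first identity. Equivalently, $h$ solves the logistic ODE $h'=h(1-h)$, and the second derivative follows from this ODE without any further explicit exponentials.

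Second, differentiate $h'=h-h^2$ using the chain and product rules:
\[
  h''=h'-2hh'=h'(1-2h).
\]
Substituting the first identity then yields $h''(t)=h(t)\{1-h(t)\}\{1-2h(t)\}$.

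Finally, evaluate both identities at $t=\bar m$. Set $\bar\mu:=h(\bar m)$ and $a:=\bar\mu(1-\bar\mu)$; this reads off $h'(\bar m)=a$ and $h''(\bar m)=a(1-2\bar\mu)$. No assumption on $\bar m$ is needed beyond its being a real number, since $h$ is smooth on all of $\mathbb R$.

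There is no real obstacle here. The only point requiring care is the algebraic recognition $e^{-t}/(1+e^{-t})=1-h(t)$. Routing the second derivative through the ODE $h'=h(1-h)$, rather than differentiating the explicit quotient twice, avoids messy exponential algebra.
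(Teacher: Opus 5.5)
Your proposal is correct and follows the paper's route: compute $h'(t)=e^{-t}/(1+e^{-t})^2=h(t)\{1-h(t)\}$, differentiate once more to get $h''=h'(1-2h)$, substitute, and evaluate at $t=\bar m$. Nothing is missing.
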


\begin{proof}
Differentiating $h(t)=(1+e^{-t})^{-1}$ gives
\[
  h'(t)
  =
  \frac{e^{-t}}{(1+e^{-t})^2}
  =
  h(t)\bigl(1-h(t)\bigr).
\]
Differentiating once more yields
\[
  h''(t)
  =
  h'(t)\bigl(1-2h(t)\bigr)
  =
  h(t)\bigl(1-h(t)\bigr)\bigl(1-2h(t)\bigr).
\]
Evaluation at $t=\bar m$ gives
\eqref{eq:logistic-derivatives-at-mbar}.
\end{proof}

The within-group proxy moments of Section~\ref{subsec:glm-within-group} are
therefore
\begin{equation}
\label{eq:logistic-proxy-moments}
  \widetilde\mu_f(s)
  =
  \frac{1}{1+e^{-m_s}},
  \qquad
  \widetilde\sigma_f(s)
  =
  \widetilde\mu_f(s)\bigl(1-\widetilde\mu_f(s)\bigr)b_s.
\end{equation}

\subsection{Leading decomposition on the probability scale}
\label{subsec:logistic-decomposition}

Substituting \eqref{eq:logistic-derivatives-at-mbar} into the general
decomposition gives the logistic specialization.

\begin{theorem}[Leading logistic decomposition]
\label{thm:logistic-decomposition}
Under the assumptions of Theorem~\ref{thm:glm-two-layer}, the output-scale
criterion satisfies
\[
  U_2(f)
  =
  D_{1,\mathrm{logit}}(f)
  +
  \Delta_{\mathrm{within}}(f)
  +
  \Delta_{\mathrm{across}}(f),
\]
where the leading logistic decomposition is
\begin{align}
D_{1,\mathrm{logit}}(f)
&=
a^2
\Big[
\underbrace{\Var_S(\gamma_S)}_{\textnormal{direct mean}}
+
\underbrace{\Var_S\bigl(\langle\bmu^{(S)},\bbeta\rangle\bigr)}
_{\textnormal{indirect mean}}
\notag\\
&\qquad\qquad
+
\underbrace{2\,\Cov_S\bigl(\gamma_S,
\langle\bmu^{(S)},\bbeta\rangle\bigr)}_{\textnormal{interaction}}
+
\underbrace{\Var_S(B)}_{\textnormal{indirect structural}}
\Big]
\notag\\
&\quad+
\underbrace{
2a^2(1-2\bar\mu)\bar b\,\Cov_S(M,B)}
_{\textnormal{curvature coupling}}
+
\underbrace{
a^2(1-2\bar\mu)^2\bar b^2\,\Var_S(M)}
_{\textnormal{curvature amplification}}.
\label{eq:logistic-leading-decomposition}
\end{align}
\end{theorem}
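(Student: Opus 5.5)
The proof is a direct specialization of Theorem~\ref{thm:glm-two-layer} to the inverse-logit map, so the plan is to check the hypotheses, invoke that theorem, and then substitute the derivatives from Lemma~\ref{lem:logistic-derivatives} into $D_1(f)$.

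\textbf{Step 1: hypotheses.} The theorem is stated under the assumptions of Theorem~\ref{thm:glm-two-layer}, namely Assumptions~\ref{ass:glm-smoothness} and~\ref{ass:glm-across-smoothness}. For $h(t)=(1+e^{-t})^{-1}$ these are easy to verify:
\begin{itemize}
\item $h$ is $\mathcal C^\infty$ on $\mathcal D_h=J=\mathbb R$;
\item $h'(t)=h(t)(1-h(t))>0$ everywhere;
\item $h''$ is bounded on $\mathbb R$, with $|h''|\le 1/(6\sqrt3)$, so every $L_{2,s}$ is finite regardless of $I_s$.
\end{itemize}
The only remaining requirements are the conditional moment conditions $\rho_s,\kappa_s<\infty$. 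These are part of the assumed hypotheses and involve nothing specific to the logit link.

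\textbf{Step 2: invoke the general result.} Theorem~\ref{thm:glm-two-layer} then gives
\[
U_2(f)=D_1(f)+\Delta_{\mathrm{within}}(f)+\Delta_{\mathrm{across}}(f),
\]
with both error terms defined exactly as in the general case. It therefore suffices to show that $D_1(f)=D_{1,\mathrm{logit}}(f)$.

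\textbf{Step 3: substitute the derivatives.} Using \eqref{eq:logistic-derivatives-at-mbar}, put $h'(\bar m)=a$ and $h''(\bar m)=a(1-2\bar\mu)$ into \eqref{eq:glm-D1}:
\begin{itemize}
\item the prefactor of the bracketed linear latent decomposition becomes $\{h'(\bar m)\}^2=a^2$;
\item the curvature coupling term becomes $2h'(\bar m)h''(\bar m)\bar b\,\Cov_S(M,B)=2a^2(1-2\bar\mu)\bar b\,\Cov_S(M,B)$;
\item the curvature amplification term becomes $\{h''(\bar m)\}^2\bar b^2\Var_S(M)=a^2(1-2\bar\mu)^2\bar b^2\Var_S(M)$.
\end{itemize}
This is exactly \eqref{eq:logistic-leading-decomposition}.

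There is no real obstacle. The one point needing care is Step 1: the score support $I_s$ may be unbounded, so the finiteness of $L_{2,s}$ has to come from global boundedness of $h''$ rather than from compactness of $I_s$. That boundedness follows from $|h''|\le h(1-h)|1-2h|\le 1/4$.
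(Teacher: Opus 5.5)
Your proposal is correct and follows essentially the paper's route: invoke the two-layer GLM decomposition and substitute $h'(\bar m)=a$ and $h''(\bar m)=a(1-2\bar\mu)$ from the inverse-logit derivative lemma into $D_1(f)$. Your extra check of the hypotheses, including the global bound $|h''|\le 1/(6\sqrt3)$, is valid but not needed, because the statement takes those assumptions as given.
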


\begin{proof}
By Lemma~\ref{lem:logistic-derivatives},
\[
  h'(\bar m)=a,
  \qquad
  h''(\bar m)=a(1-2\bar\mu).
\]
Substituting these identities into
Definition~\ref{def:glm-leading-decomposition} and
Theorem~\ref{thm:glm-two-layer} gives the result.
\end{proof}

\begin{proposition}[Logistic link geometry]
\label{prop:logistic-geometry}
Let $a=\bar\mu(1-\bar\mu)$. Then:
\begin{enumerate}
\item $0<a\le 1/4$, with equality if and only if $\bar\mu=1/2$;
\item the curvature amplification term in
\eqref{eq:logistic-leading-decomposition} is nonnegative;
\item the sign of the curvature coupling term is the sign of
\[
  (1-2\bar\mu)\Cov_S(M,B).
\]
\end{enumerate}
\end{proposition}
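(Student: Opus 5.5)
The three claims follow from elementary properties of the inverse-logit map and the explicit form of the two curvature terms in \eqref{eq:logistic-leading-decomposition}. I would treat them in order.

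For item 1, set $\bar\mu=h(\bar m)=(1+e^{-\bar m})^{-1}$. Since $\bar m\in\mathbb R$ and $0<e^{-\bar m}<\infty$, we have $\bar\mu\in(0,1)$ strictly. Hence both $\bar\mu$ and $1-\bar\mu$ are strictly positive, and so $a>0$. For the upper bound I would use the identity
\[
  \bar\mu(1-\bar\mu)=\tfrac14-\bigl(\bar\mu-\tfrac12\bigr)^2 .
\]
It gives $a\le 1/4$, with equality if and only if $\bar\mu=1/2$, that is, if and only if $\bar m=0$. I would note that the only point needing care is strict positivity, which relies on $h$ never attaining $0$ or $1$.

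For item 2, the curvature amplification term is $a^2(1-2\bar\mu)^2\bar b^2\Var_S(M)$. It is a product of three squares and of $\Var_S(M)$. The last factor is nonnegative because, by its definition in Section~\ref{subsec:lm-prediction-disparities}, it is a $\pi$-weighted sum of squares with $\pi_s>0$. The product is therefore nonnegative. I would add that it vanishes exactly when $\bar\mu=1/2$, $\bar b=0$, or all $m_s$ coincide; this remark is not required.

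For item 3, the curvature coupling term is $2a^2(1-2\bar\mu)\bar b\,\Cov_S(M,B)$. By item 1, $2a^2>0$. The factor $\bar b=\E_S[B]$ is a weighted average of the standard deviations $b_s\ge0$, so $\bar b\ge0$. This is the main subtle point: if $\bar b=0$, the term is zero regardless of the sign of $(1-2\bar\mu)\Cov_S(M,B)$.

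I would resolve it by observing that $\bar b=0$ forces $b_s=0$ for every $s$, since $\pi_s>0$. In that case $B$ is constant and $\Cov_S(M,B)=0$, so both quantities vanish and the sign statement holds with the convention $\operatorname{sign}(0)=0$. When $\bar b>0$, dividing by the positive constant $2a^2\bar b$ shows that the sign of the term equals the sign of $(1-2\bar\mu)\Cov_S(M,B)$. This completes the argument.
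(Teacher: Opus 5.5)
Your proof is correct and takes the same elementary route as the paper, which notes that $u\mapsto u(1-u)$ is maximized at $u=1/2$ with value $1/4$ and reads items 2 and 3 directly off the explicit curvature terms in the logistic decomposition. Your version is more careful: it justifies the strict inequality $a>0$ via $\bar\mu\in(0,1)$, it records that $\Var_S(M)\ge 0$, and it handles the degenerate case $\bar b=0$, all of which the paper's proof leaves implicit.
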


\begin{proof}
The map $u\mapsto u(1-u)$ is maximized on $[0,1]$ at $u=1/2$, with value
$1/4$. This proves the first claim. The second follows from the squared factors
in the amplification term. The third follows directly from
\eqref{eq:logistic-leading-decomposition}.
\end{proof}

\subsection{Actuarial reading}
\label{subsec:logistic-interpretation}

The logistic specificity is governed by the operating point $\bar\mu$. The four
latent channels inherited from the linear decomposition are multiplied by
\[
  a^2=\{\bar\mu(1-\bar\mu)\}^2,
\]
the squared local slope of the inverse-logit map. This factor is largest when
$\bar\mu=1/2$ and decreases near the boundaries of the probability scale. For
rare events, $a\simeq \bar\mu$, so a sizeable logit-scale disparity may correspond
to a small absolute disparity in predicted probabilities.

Curvature adds two effects. At $\bar\mu=1/2$, one has $h''(\bar m)=0$, so the
leading decomposition reduces locally to a slope-rescaled version of the linear
benchmark. Away from this point, the curvature coupling term links differences
in average logit scores to differences in logit-score dispersion. Its sign depends
both on the operating point and on $\Cov_S(M,B)$. The curvature amplification
term is always nonnegative and increases when the latent group means differ.

For binary insurance or health outcomes, this distinction matters. A model may
appear to generate limited disparities on the probability scale because the event
is rare, even though the same model displays visible group differences on the
logit scale. Conversely, the practical meaning of an explicit group coefficient
depends on the baseline event probability. The logistic decomposition should
therefore be interpreted together with the fitted event rate and with the
approximation diagnostics
\[
  \Delta_{\mathrm{within}}(f),
  \qquad
  \Delta_{\mathrm{across}}(f),
  \qquad
  r_{\mathrm{within}}(f)
\]
defined in Section~\ref{subsec:glm-approximation-assessment}.

\subsection{A hospitalization-risk illustration}
\label{sec:logistic-case-study}

We illustrate the logistic decomposition using the Medical Expenditure Panel
Survey Household Component (MEPS-HC), HC-192 file, corresponding to the 2016
Full Year Consolidated Data File. The response is the indicator of having at
least one hospitalization during the year,
\[
  Y
  =
  \mathbf 1\{\texttt{IPDIS16}+\texttt{IPZERO16}>0\},
\]
where \texttt{IPDIS16} is the number of hospital discharges and
\texttt{IPZERO16} is the number of zero-night hospital stays. Details on the
sample, variables, and estimation procedure are reported in
\ref{app:logistic}.

We consider the same two sensitive comparisons as in the linear illustration:
Hispanic versus non-Hispanic individuals, and uninsured versus insured
individuals. The base logistic model includes the sensitive attribute and the
selected non-sensitive covariates. The moment-aligned model applies the
latent-scale correction of Section~\ref{subsec:glm-fair-correction} and then
maps the corrected score through the inverse-logit function.

Figure~\ref{fig:glm:logit:coeff} displays the coefficient shifts induced by this
latent-scale correction. The figure should be read on the logit scale. The
non-sensitive slopes are multiplied by the group-specific factor
$\alpha_s=\bar b/b_s$, while the original sensitive effect is absorbed into
group-specific intercepts. Since the final prediction is
$h(\eta)=(1+\exp(-\eta))^{-1}$, these coefficient shifts do not translate
additively into probability shifts. Their effect depends on the local slope and
curvature of the inverse-logit map.

\begin{figure}[!htbp]
    \centering
    \includegraphics[width=\linewidth]{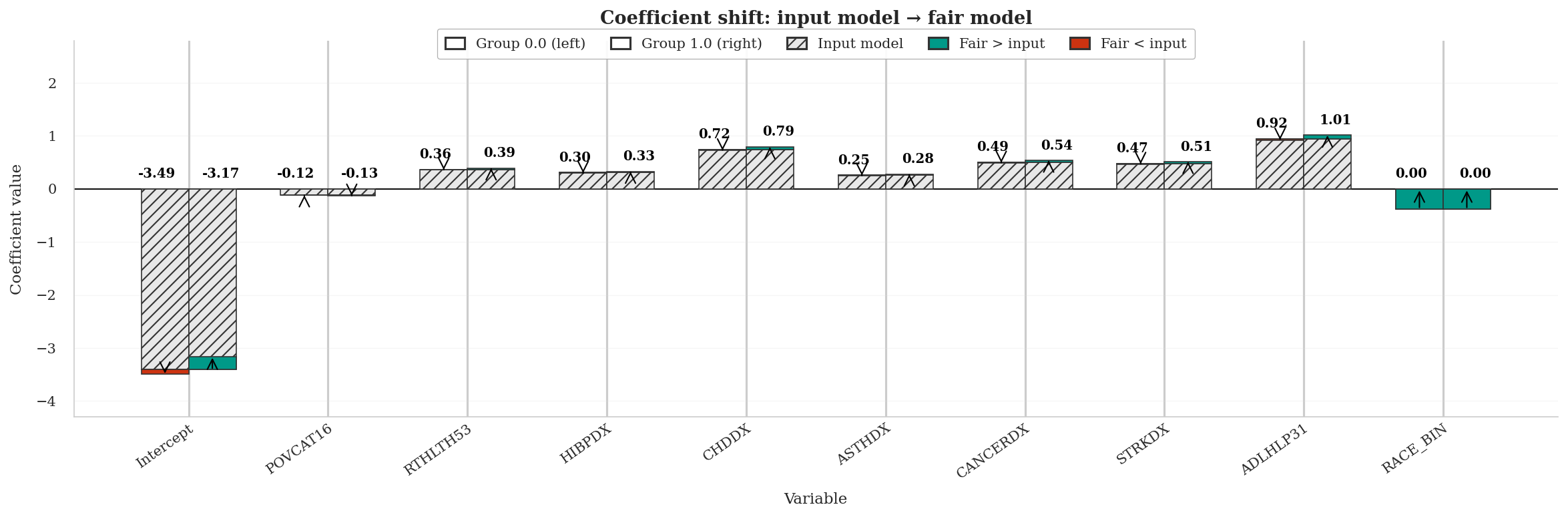}
    \includegraphics[width=\linewidth]{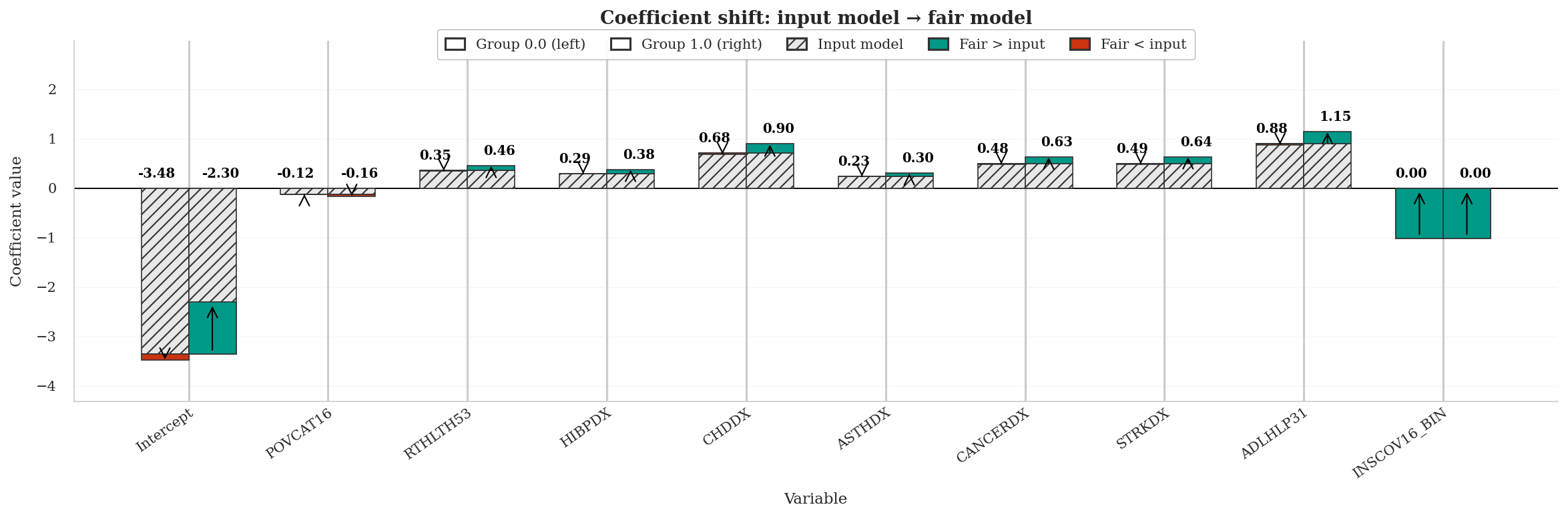}
    \caption{Logistic model: coefficient shifts from the base model to the
    latent moment-aligned model. Hatched bars show input coefficients; colored
    extensions show the post-processing correction. The sensitive-attribute
    coefficient is absorbed into group-specific intercepts. The top panel uses
    the Hispanic/non-Hispanic comparison; the bottom panel uses the
    uninsured/insured comparison.}
    \label{fig:glm:logit:coeff}
\end{figure}

Table~\ref{tab:logistic_decomposition} reports predictive performance,
output-scale moment disparity, and the leading logistic decomposition. The
table distinguishes the exact empirical criterion $U_2(f)$, the within-group
proxy $\widetilde U_2(f)$, and the leading decomposition
$D_{1,\mathrm{logit}}(f)$. The signed gap
$  \Delta_{\mathrm{within}}(f)
  =
  U_2(f)-\widetilde U_2(f)
$
measures the accuracy of the within-group delta approximation. The relative
quantity $r_{\mathrm{within}}$ is unsigned.

\begin{table}[!htbp]
\centering
\footnotesize
\begin{tabular}{lcc}
\toprule
 & \textbf{Hispanic vs.} & \textbf{Uninsured} \\
& \textbf{non-Hispanic} & \textbf{vs.\ insured} \\
\midrule
\multicolumn{3}{l}{\textit{Predictive performance}} \\
\midrule
Base model: deviance
& 11\,732 \,\scriptsize{(274)}
& 11\,675 \,\scriptsize{(275)} \\
Moment-aligned model: deviance
& 11\,788 \,\scriptsize{(275)}
& 11\,823 \,\scriptsize{(273)} \\
Base model: AUC
& .714 \,\scriptsize{(.007)}
& .723 \,\scriptsize{(.008)} \\
Moment-aligned model: AUC
& .709 \,\scriptsize{(.007)}
& .707 \,\scriptsize{(.007)} \\
\midrule
\multicolumn{3}{l}{\textit{Moment-based disparity on the probability scale \;($\times 10^{-4}$)}} \\
\midrule
Base model: $U_2(f)$
& 2.34 \,\scriptsize{(.520)}
& 6.04 \,\scriptsize{(.635)} \\
Moment-aligned model: $U_2(f^{\mathrm{fair}})$
& $<0.01$
& .034 \,\scriptsize{(.022)} \\
Base model: $\widetilde U_2(f)$
& 1.42 \,\scriptsize{(.362)}
& 5.74 \,\scriptsize{(.986)} \\
Leading decomposition: $D_{1,\mathrm{logit}}(f)$
& 1.42 \,\scriptsize{(.362)}
& 5.74 \,\scriptsize{(.986)} \\
Within gap:
$\Delta_{\mathrm{within}}=U_2(f)-\widetilde U_2(f)$
& .916
& .299 \\
Relative within gap:
$r_{\mathrm{within}}$
& 39.3\% \,\scriptsize{(6.8\%)}
& 5.5\% \,\scriptsize{(8.5\%)} \\
Relative reduction in $U_2$
& 99.8\% \,\scriptsize{(0.3\%)}
& 99.4\% \,\scriptsize{(0.3\%)} \\
\midrule
\multicolumn{3}{l}{\textit{Components of $D_{1,\mathrm{logit}}(f)$ \;($\times 10^{-4}$)}} \\
\midrule
Direct mean:
$a^2\operatorname{Var}_S(\gamma_S)$
& .954 \,\scriptsize{(.253)}
& 3.28 \,\scriptsize{(.713)} \\
Indirect mean:
$a^2\operatorname{Var}_S(\langle \bmu^{(S)},\bbeta\rangle)$
& .003 \,\scriptsize{(.003)}
& .011 \,\scriptsize{(.008)} \\
Interaction:
$2a^2\operatorname{Cov}_S(\gamma_S,\langle \bmu^{(S)},\bbeta\rangle)$
& $-$.078 \,\scriptsize{(.060)}
& .361 \,\scriptsize{(.095)} \\
Indirect structural:
$a^2\operatorname{Var}_S(B)$
& .028 \,\scriptsize{(.008)}
& .084 \,\scriptsize{(.012)} \\
Curvature coupling:
$2a^2(1\!-\!2\bar\mu)\bar b\,\operatorname{Cov}_S(M,B)$
& .187 \,\scriptsize{(.038)}
& .664 \,\scriptsize{(.064)} \\
Curvature amplification:
$a^2(1\!-\!2\bar\mu)^2\bar b^2\operatorname{Var}_S(M)$
& .330 \,\scriptsize{(.087)}
& 1.34 \,\scriptsize{(.263)} \\
\midrule
Total leading decomposition:
$D_{1,\mathrm{logit}}(f)$
& 1.42 \,\scriptsize{(.362)}
& 5.74 \,\scriptsize{(.986)} \\
\bottomrule
\end{tabular}
\caption{Logistic regression on the hospitalization indicator
$Y=\mathbf 1\{\texttt{IPDIS16}+\texttt{IPZERO16}>0\}$.
The table reports predictive performance of the base and latent
moment-aligned models, the exact empirical moment disparity $U_2(f)$ on the
probability scale, the proxy $\widetilde U_2(f)$, and the leading logistic
decomposition $D_{1,\mathrm{logit}}(f)$. All disparity and decomposition entries
are in units of $10^{-4}$. Entries are means over $30$ bootstrap replications,
with standard deviations in parentheses when reported.}
\label{tab:logistic_decomposition}
\end{table}

For the Hispanic/non-Hispanic comparison, the leading disparity is mainly driven
by the direct mean component. The indirect mean component is close to zero, and
the interaction term is slightly negative. This negative interaction means that
the explicit group contribution and the covariate-induced mean score differences
partially offset each other on the logit scale; it should not be read as a
normative correction. The two curvature terms are non-negligible, especially the
curvature amplification term, which shows that the inverse-logit transformation
contributes to the probability-scale disparity. The within approximation gap is
also sizeable, with $r_{\mathrm{within}}\simeq 39\%$. Thus, the component-level
decomposition should be read as a leading diagnostic, while the empirical value
of $U_2(f)$ remains the main output-scale measure.

For the uninsured/insured comparison, the direct mean component is again the
dominant source of disparity, but the interaction is now positive: the direct and
indirect mean channels reinforce one another. The curvature terms also add
visible probability-scale disparity. In this case, the within approximation gap is
small relative to $U_2(f)$, so the leading decomposition gives a more faithful
summary of the observed output-scale disparity.

In both comparisons, the latent moment-aligned model strongly reduces
$U_2(f)$ while producing only a modest degradation in predictive performance.
This illustrates the use of the post-processing step as a diagnostic device:
it quantifies how much of the fitted probability disparity can be removed by
aligning the first two moments of the latent score, and how much predictive
performance is lost in doing so.

\section{The Poisson Case}
\label{sec:poisson}

Poisson regression is the standard benchmark for actuarial frequency modelling.
In a pricing context, the quantity of interest is not the latent log-score itself,
but the expected claim count. The Poisson case therefore provides the simplest
unbounded mean-scale example: additive differences on the log-frequency scale
become multiplicative differences on the frequency scale.

\subsection{Poisson GLM and log-link derivatives}
\label{subsec:poisson-setup}

Let
\[
  Y\mid(\bX,S)\sim \mathrm{Poisson}\bigl(\mu(\bX,S)\bigr),
  \qquad
  \E[Y\mid\bX,S]=\Var(Y\mid\bX,S)=\mu(\bX,S).
\]
With the canonical log link,
\[
  g(\mu)=\log \mu,
\]
the latent score and the predicted frequency are
\[
  \eta
  =
  \langle \bx,\bbeta\rangle+\gamma_s+\beta_0,
  \qquad
  f(\bx,s)=\mu(\bx,s)=h(\eta),
\]
where
\begin{equation}
\label{eq:poisson-inverse-link}
  h(t)=e^t.
\end{equation}
As in Section~\ref{sec:glm-decomposition},
\[
  m_s
  =
  \E[\eta\mid S=s]
  =
  \langle \bmu^{(s)},\bbeta\rangle+\gamma_s+\beta_0,
  \qquad
  b_s
  =
  \sqrt{\bbeta^\top\bSigma^{(s)}\bbeta},
\]
and
\[
  M:=m_S,
  \qquad
  B:=b_S,
  \qquad
  \bar m:=\E_S[M],
  \qquad
  \bar b:=\E_S[B].
\]

For the exponential inverse link,
\[
  h'(t)=e^t,
  \qquad
  h''(t)=e^t.
\]
Thus, if
\[
  \bar\mu:=h(\bar m)=e^{\bar m},
\]
then
\begin{equation}
\label{eq:poisson-derivatives-at-mbar}
  h'(\bar m)=\bar\mu,
  \qquad
  h''(\bar m)=\bar\mu.
\end{equation}

The within-group proxy moments of Section~\ref{subsec:glm-within-group} are
therefore
\begin{equation}
\label{eq:poisson-proxy-moments}
  \widetilde\mu_f(s)=e^{m_s},
  \qquad
  \widetilde\sigma_f(s)=e^{m_s}b_s.
\end{equation}

\subsection{Leading decomposition on the frequency scale}
\label{subsec:poisson-decomposition}

Substituting \eqref{eq:poisson-derivatives-at-mbar} into the general GLM
decomposition gives the Poisson specialization.

\begin{theorem}[Leading Poisson decomposition]
\label{thm:poisson-decomposition}
Under the assumptions of Theorem~\ref{thm:glm-two-layer}, the output-scale
criterion satisfies
\[
  U_2(f)
  =
  D_{1,\mathrm{Pois}}(f)
  +
  \Delta_{\mathrm{within}}(f)
  +
  \Delta_{\mathrm{across}}(f),
\]
where the leading Poisson decomposition is
\begin{align}
D_{1,\mathrm{Pois}}(f)
&=
\bar\mu^2
\Big[
\underbrace{\Var_S(\gamma_S)}_{\textnormal{direct mean}}
+
\underbrace{\Var_S\bigl(\langle\bmu^{(S)},\bbeta\rangle\bigr)}
_{\textnormal{indirect mean}}
\notag\\
&\qquad\qquad
+
\underbrace{2\,\Cov_S\bigl(\gamma_S,
\langle\bmu^{(S)},\bbeta\rangle\bigr)}_{\textnormal{interaction}}
+
\underbrace{\Var_S(B)}_{\textnormal{indirect structural}}
\Big]
\notag\\
&\quad+
\underbrace{
2\bar\mu^2\bar b\,\Cov_S(M,B)}
_{\textnormal{curvature coupling}}
+
\underbrace{
\bar\mu^2\bar b^2\,\Var_S(M)}
_{\textnormal{curvature amplification}}.
\label{eq:poisson-leading-decomposition}
\end{align}
\end{theorem}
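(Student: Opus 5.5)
The plan is to obtain the statement as a direct specialization of Theorem~\ref{thm:glm-two-layer}, in the same way the logistic case was derived from Lemma~\ref{lem:logistic-derivatives}. First, I will check that the hypotheses of Theorem~\ref{thm:glm-two-layer} are met, or rather that they are exactly what is assumed. For $h(t)=e^t$ one has $h\in\mathcal C^\infty(\mathbb R)$ and $h'(t)=e^t>0$. Hence Assumption~\ref{ass:glm-across-smoothness} holds with $J=\mathbb R$, and the smoothness part of Assumption~\ref{ass:glm-smoothness} holds with $\mathcal D_h=\mathbb R$. The remaining requirements are finite third and fourth conditional moments of $\delta_s$ and finiteness of $L_{2,s}=\sup_{t\in I_s}e^t$, which needs $I_s$ bounded above. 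These are not automatic for a log link. However, the statement says ``under the assumptions of Theorem~\ref{thm:glm-two-layer}'', so they are simply imposed rather than verified.

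Next, I will invoke Theorem~\ref{thm:glm-two-layer}. It gives
\[
U_2(f)=D_1(f)+\Delta_{\mathrm{within}}(f)+\Delta_{\mathrm{across}}(f),
\]
with $\Delta_{\mathrm{within}}$ and $\Delta_{\mathrm{across}}$ defined exactly as in that theorem. It therefore only remains to show that $D_1(f)=D_{1,\mathrm{Pois}}(f)$. To do this, I will substitute \eqref{eq:poisson-derivatives-at-mbar}, namely $h'(\bar m)=h''(\bar m)=\bar\mu=e^{\bar m}$, into Definition~\ref{def:glm-leading-decomposition}. Term by term:
\begin{itemize}
\item the prefactor $\{h'(\bar m)\}^2$ becomes $\bar\mu^2$, multiplying the four linear channels unchanged;
\item the coupling term $2h'(\bar m)h''(\bar m)\bar b\,\Cov_S(M,B)$ becomes $2\bar\mu^2\bar b\,\Cov_S(M,B)$;
\item the amplification term $\{h''(\bar m)\}^2\bar b^2\Var_S(M)$ becomes $\bar\mu^2\bar b^2\Var_S(M)$.
\end{itemize}
This reproduces \eqref{eq:poisson-leading-decomposition}.

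There is no genuine obstacle; the whole argument is bookkeeping. The only point needing care is the first step, since the unbounded support of the log-linear score means the local remainder bounds of Assumption~\ref{ass:glm-smoothness} must be taken as hypotheses. Doing so does not affect the identity itself: the decomposition is an exact algebraic identity once $\Delta_{\mathrm{within}}$ and $\Delta_{\mathrm{across}}$ are defined as differences. I will also remark that, because $h'=h''$ here, the coupling and amplification terms carry the same slope factor $\bar\mu^2$ as the linear channels.
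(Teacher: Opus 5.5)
Your proposal is correct and matches the paper's proof: both invoke Theorem~\ref{thm:glm-two-layer} and substitute $h'(\bar m)=h''(\bar m)=\bar\mu$ into Definition~\ref{def:glm-leading-decomposition}. Your remark is also accurate: for the log link, finiteness of $L_{2,s}$ requires $I_s$ to be bounded above, so it is imposed as a hypothesis rather than verified, consistent with the paper's comment after Assumption~\ref{ass:glm-smoothness}.
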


\begin{proof}
For the log link, \eqref{eq:poisson-derivatives-at-mbar} gives
\[
  h'(\bar m)=\bar\mu,
  \qquad
  h''(\bar m)=\bar\mu.
\]
Substituting these identities into
Definition~\ref{def:glm-leading-decomposition} and
Theorem~\ref{thm:glm-two-layer} gives
\eqref{eq:poisson-leading-decomposition}.
\end{proof}

The exponential link also gives a simple ratio interpretation.

\begin{proposition}[Frequency ratio induced by score differences]
\label{prop:poisson-ratio}
For any two groups $s,t\in[M]$,
\begin{equation}
\label{eq:poisson-ratio}
  \frac{\widetilde\mu_f(s)}{\widetilde\mu_f(t)}
  =
  e^{m_s-m_t}.
\end{equation}
\end{proposition}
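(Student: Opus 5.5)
The claim follows directly from the proxy moments in \eqref{eq:poisson-proxy-moments}. By definition \eqref{eq:glm-proxy-moments}, $\widetilde\mu_f(s)=h(m_s)$. With the exponential inverse link \eqref{eq:poisson-inverse-link}, this gives $\widetilde\mu_f(s)=e^{m_s}$ for every $s\in[M]$.

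The plan has three steps.

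First, I note that $e^{m_t}>0$ for every real $m_t$. Hence the denominator never vanishes, and no restriction on the groups is needed. Unlike Proposition~\ref{prop:glm-delta}, this step does not invoke Assumption~\ref{ass:glm-smoothness}, because the statement concerns only the proxy moments and not the exact output-scale means $\mu_f(s)$.

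Second, I form the quotient and use the multiplicative property of the exponential:
\[
  \frac{\widetilde\mu_f(s)}{\widetilde\mu_f(t)}
  =
  \frac{e^{m_s}}{e^{m_t}}
  =
  e^{m_s-m_t}.
\]

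Third, for interpretation only, I substitute the latent means from \eqref{eq:glm-latent-moments}. This gives
\[
  m_s-m_t=(\gamma_s-\gamma_t)+\langle \bmu^{(s)}-\bmu^{(t)},\bbeta\rangle .
\]
Here $\beta_0$ cancels. The ratio therefore factorizes into a direct factor $e^{\gamma_s-\gamma_t}$ and an indirect factor $e^{\langle \bmu^{(s)}-\bmu^{(t)},\bbeta\rangle}$. This mirrors the additive split in \eqref{eq:glm-latent-mean-path}, now read multiplicatively on the frequency scale.

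None of these steps is a real obstacle. The one point that needs care is that the identity holds for the proxy means and not for the exact group means. For the exact means, $\mu_f(s)=\E[e^{\eta}\mid S=s]$, which picks up a group-dependent factor $\E[e^{\delta_s}\mid S=s]$. That factor equals $e^{v_s/2}$ under Gaussian scores, so the exact ratio differs from \eqref{eq:poisson-ratio} by $e^{(v_s-v_t)/2}$. I would state this caveat rather than try to extend the identity.
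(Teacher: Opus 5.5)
Your proof is correct and follows the paper's own route: the paper likewise reads the identity directly off the proxy moments $\widetilde\mu_f(s)=e^{m_s}$ in \eqref{eq:poisson-proxy-moments}. Your additional points are accurate but go beyond what the paper states: the nonvanishing denominator, the split into a direct factor $e^{\gamma_s-\gamma_t}$ and an indirect factor, and the caveat that under Gaussian scores the exact group means carry an extra factor $e^{(v_s-v_t)/2}$.
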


\begin{proof}
This follows directly from
\eqref{eq:poisson-proxy-moments}.
\end{proof}

\begin{proposition}[Poisson link geometry]
\label{prop:poisson-geometry}
For the log-link Poisson model:
\begin{enumerate}
\item the curvature amplification term in
\eqref{eq:poisson-leading-decomposition} is nonnegative;
\item the sign of the curvature coupling term is the sign of
\[
  \Cov_S(M,B);
\]
\item all leading components are scaled by the square of the baseline frequency
level $\bar\mu^2$, except for the additional factors involving $\bar b$ in the
curvature terms.
\end{enumerate}
\end{proposition}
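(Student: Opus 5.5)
The plan is to read all three claims directly off the explicit form \eqref{eq:poisson-leading-decomposition} in Theorem~\ref{thm:poisson-decomposition}, mirroring the proof of Proposition~\ref{prop:logistic-geometry}. The only structural fact needed is that the exponential inverse link gives $h'(\bar m)=h''(\bar m)=\bar\mu=e^{\bar m}>0$, by \eqref{eq:poisson-derivatives-at-mbar}. Unlike the logistic case, no operating-point factor such as $1-2\bar\mu$ can change sign.

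For the first claim, I will note that the curvature amplification term equals $\bar\mu^2\bar b^2\Var_S(M)$. This is a product of three nonnegative quantities: a square, a square, and a between-group variance. The latter is a $\pi$-weighted sum of squares with $\pi_s>0$, so the term is nonnegative. It vanishes exactly when $\bar b=0$ or $M$ is constant across groups, since $\bar\mu>0$.

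For the second claim, I will write the curvature coupling term as $2\bar\mu^2\bar b\,\Cov_S(M,B)$. Here $\bar\mu^2>0$, and $\bar b=\E_S[B]\ge0$ because each $b_s=\sqrt{v_s}\ge0$. Hence the term has the sign of $\Cov_S(M,B)$ whenever $\bar b>0$. The degenerate case $\bar b=0$ forces all $b_s=0$, so $B$ is constant, $\Cov_S(M,B)=0$, and the term vanishes; this is consistent with the convention $\operatorname{sign}(0)=0$. I expect this degenerate case to be the only point requiring care, and the main obstacle is essentially stating it cleanly rather than any real difficulty.

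For the third claim, I will observe that the bracketed four linear channels in \eqref{eq:poisson-leading-decomposition} carry the common factor $\{h'(\bar m)\}^2=\bar\mu^2$. The curvature terms carry $h'(\bar m)h''(\bar m)=\bar\mu^2$ and $\{h''(\bar m)\}^2=\bar\mu^2$, multiplied respectively by $2\bar b\,\Cov_S(M,B)$ and $\bar b^2\Var_S(M)$. Therefore $D_{1,\mathrm{Pois}}(f)=\bar\mu^2\bigl[D_{\mathrm{lin}}+2\bar b\,\Cov_S(M,B)+\bar b^2\Var_S(M)\bigr]$, using $D_{\mathrm{lin}}$ from \eqref{eq:glm-latent-linear-disparity}. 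This factorization is exactly the statement, with the extra $\bar b$ factors confined to the curvature terms.
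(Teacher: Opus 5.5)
Your proposal is correct and follows essentially the same route as the paper, which reads each claim directly off \eqref{eq:poisson-leading-decomposition} using $h'(\bar m)=h''(\bar m)=\bar\mu>0$. Your explicit handling of the degenerate case $\bar b=0$, where the coupling term vanishes together with $\Cov_S(M,B)$, is a small refinement that the paper's one-line argument leaves implicit.
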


\begin{proof}
The amplification term is
\[
  \bar\mu^2\bar b^2\Var_S(M),
\]
which is nonnegative. In the coupling term
\[
  2\bar\mu^2\bar b\,\Cov_S(M,B),
\]
all factors except $\Cov_S(M,B)$ are nonnegative. The scaling statement follows
directly from \eqref{eq:poisson-leading-decomposition}.
\end{proof}

\subsection{Actuarial reading}
\label{subsec:poisson-interpretation}

The Poisson decomposition has a direct frequency interpretation. A difference
that is additive on the log-frequency scale becomes multiplicative on the
expected-count scale. Thus, a direct group effect $\gamma_s$ shifts the latent
score but changes the predicted frequency through a frequency ratio. The same
is true for indirect effects: group differences in average non-sensitive
covariates translate into multiplicative differences in expected counts.

The leading terms in \eqref{eq:poisson-leading-decomposition} are scaled by
$\bar\mu^2$. Hence, the numerical size of the frequency-scale diagnostic depends
on the baseline frequency level. In a low-frequency portfolio, a visible
log-score disparity may correspond to a small absolute difference in predicted
counts. In a portfolio with higher baseline frequency, the same log-score
disparity can have a larger mean-scale effect.

The curvature terms have a simpler sign structure than in the logistic case.
Because the exponential link is everywhere convex, the amplification term is
always nonnegative, and the coupling term has the sign of $\Cov_S(M,B)$. If
groups with larger average log-scores also have larger score dispersion, mean and
structural differences reinforce each other on the frequency scale. If the
covariance is negative, the curvature coupling term partially offsets the other
leading components.

For actuarial frequency modelling, the direct component corresponds to an
explicit sensitive effect in the log-frequency score. The indirect component
reflects differences in average rating profiles. The structural component
reflects differences in within-group score dispersion. The curvature terms
measure how the exponential link converts these latent differences into
frequency-scale disparities.

\subsection{Exposure and count-model extensions}
\label{subsec:poisson-exposure-extensions}

In actuarial applications, frequency models are often fitted with an exposure
offset:
\begin{equation}
\label{eq:poisson-offset}
  \log \mu(\bx,s)
  =
  \log \omega
  +
  \langle \bx,\bbeta\rangle
  +
  \gamma_s
  +
  \beta_0,
\end{equation}
where $\omega>0$ denotes contract duration, observation time, or insured
exposure. If exposure is fixed by design, it simply shifts the latent score by a
constant and does not create an additional between-group component. If exposure
varies across observations and is unevenly distributed across sensitive groups,
then $\log\omega$ enters the latent score. It may contribute to the mean path,
through group differences in average exposure, and to the structural path,
through group differences in exposure dispersion.

The mean-scale part of the diagnostic is unchanged for several standard
extensions of the Poisson model. Quasi-Poisson and negative-binomial models
preserve the log-link mean structure while allowing overdispersion, for example
\[
  \Var(Y\mid\bX,S)=\phi\,\mu(\bX,S)
  \text{ or }
  \Var(Y\mid\bX,S)=\mu(\bX,S)+\alpha\mu(\bX,S)^2.
\]
As long as the fitted conditional mean satisfies
\[
  \log \mu(\bx,s)=\eta,
\]
the link-induced transport in
\eqref{eq:poisson-leading-decomposition} remains the same. The response
variance function affects estimation and predictive uncertainty, but not the
leading decomposition of the fitted mean predictions.

Zero-inflated and hurdle models require a separate treatment. They introduce
at least two prediction mechanisms: one for structural zeros and one for the
positive-count intensity. In such models, group disparities may arise through
the zero-generating component, through the count intensity, or through their
combination. A full decomposition would therefore have to be multi-component,
rather than a direct application of \eqref{eq:poisson-leading-decomposition}.

\subsection{An office-visit frequency illustration}
\label{sec:poisson-case-study}

We illustrate the Poisson decomposition using the Medical Expenditure Panel
Survey Household Component (MEPS-HC), HC-192 file. The response is the number
of office-based provider visits during the year,
\[
  Y=\texttt{OBTOTV16}.
\]
The model is fitted with a log link. Details on the sample, variables, and
estimation procedure are reported in \ref{app:poisson}. We consider
the same two sensitive comparisons as before: Hispanic versus non-Hispanic
individuals, and uninsured versus insured individuals.

The base model includes the sensitive attribute and the selected non-sensitive
covariates. The moment-aligned model applies the latent-scale correction of
Section~\ref{subsec:glm-fair-correction} and then maps the corrected score
through the exponential inverse link. Thus, the correction aligns the first two
moments of the log-frequency score; its effect on predicted counts is
multiplicative.

Figure~\ref{fig:ex:poisson:beta} displays the coefficient shifts induced by the
latent-scale correction. The figure should be read on the log-frequency scale:
the non-sensitive slopes are multiplied by the group-specific factor
$\alpha_s=\bar b/b_s$, while the sensitive-attribute contribution is absorbed
into group-specific intercepts.

\begin{figure}[!htbp]
    \centering
    \includegraphics[width=\linewidth]{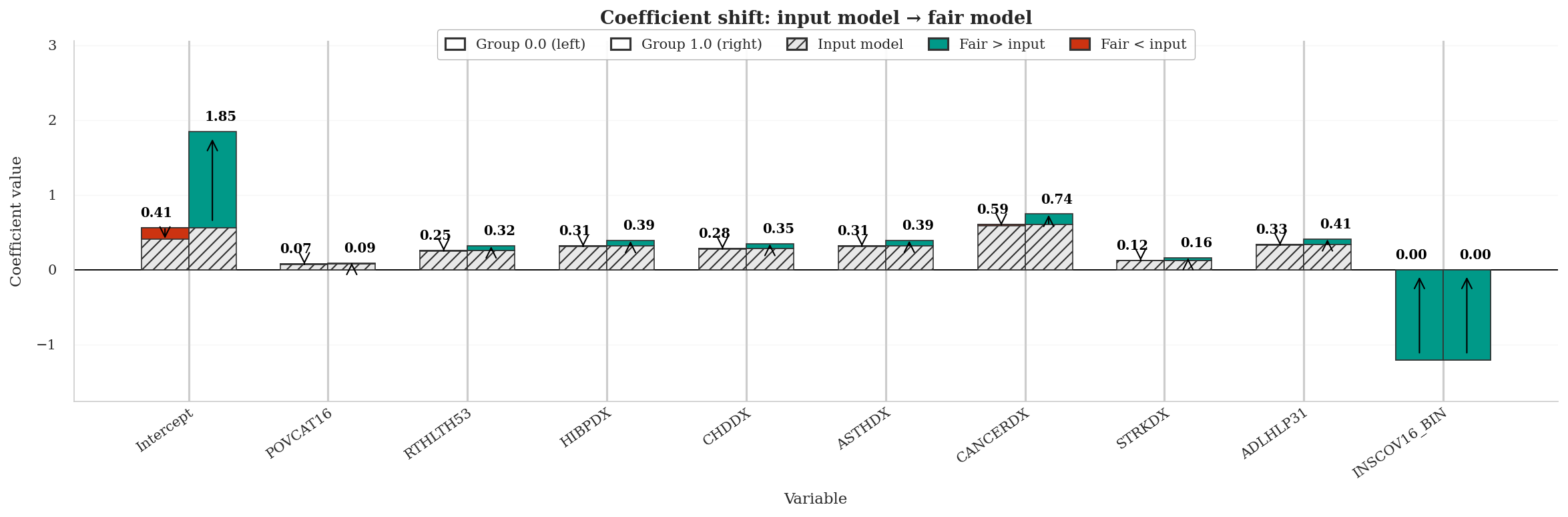}
    \includegraphics[width=\linewidth]{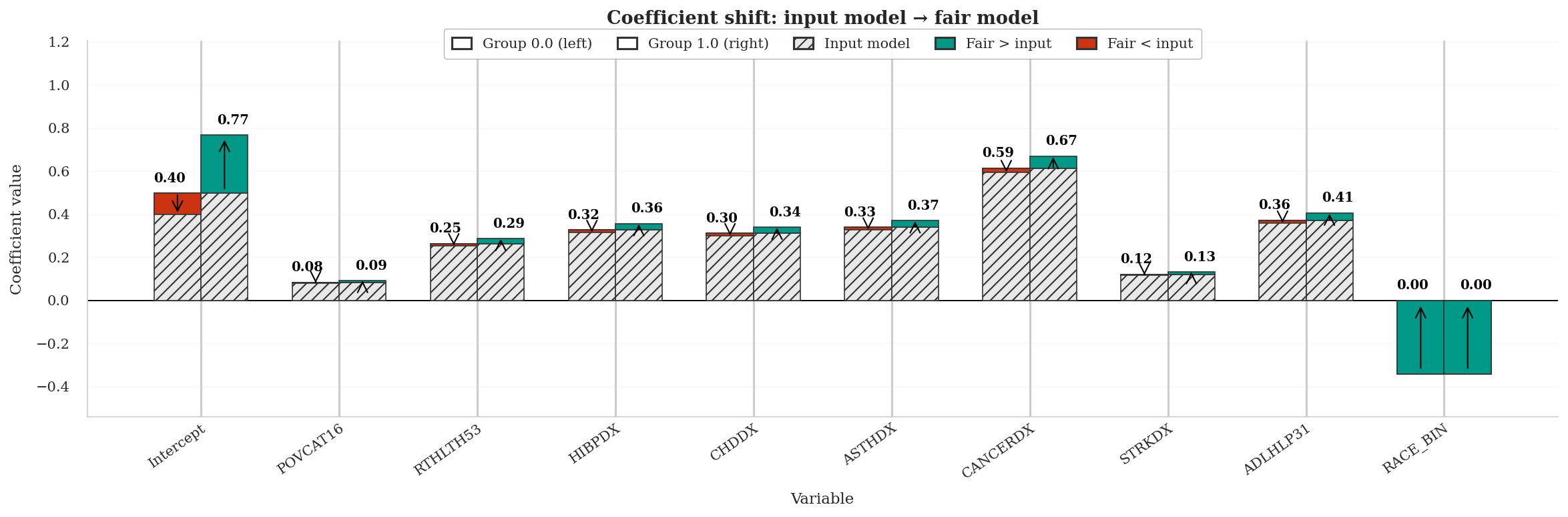}
    \caption{Poisson model: coefficient shifts from the base model to the
    latent moment-aligned model. Hatched bars show input coefficients; colored
    extensions show the post-processing correction. The sensitive-attribute
    coefficient is absorbed into group-specific intercepts. The top panel uses
    the uninsured/insured comparison; the bottom panel uses the
    Hispanic/non-Hispanic comparison.}
    \label{fig:ex:poisson:beta}
\end{figure}

Figure~\ref{fig:ex:poisson:dist} shows the conditional distributions of predicted
office-based visits before and after latent moment alignment. In the base model,
the predicted count distributions differ across sensitive groups both in location
and dispersion. After correction, the distributions are much closer, illustrating
how latent score alignment reduces moment-based disparities on the count scale.

\begin{figure}[!htbp]
    \centering
    \includegraphics[width=\linewidth]{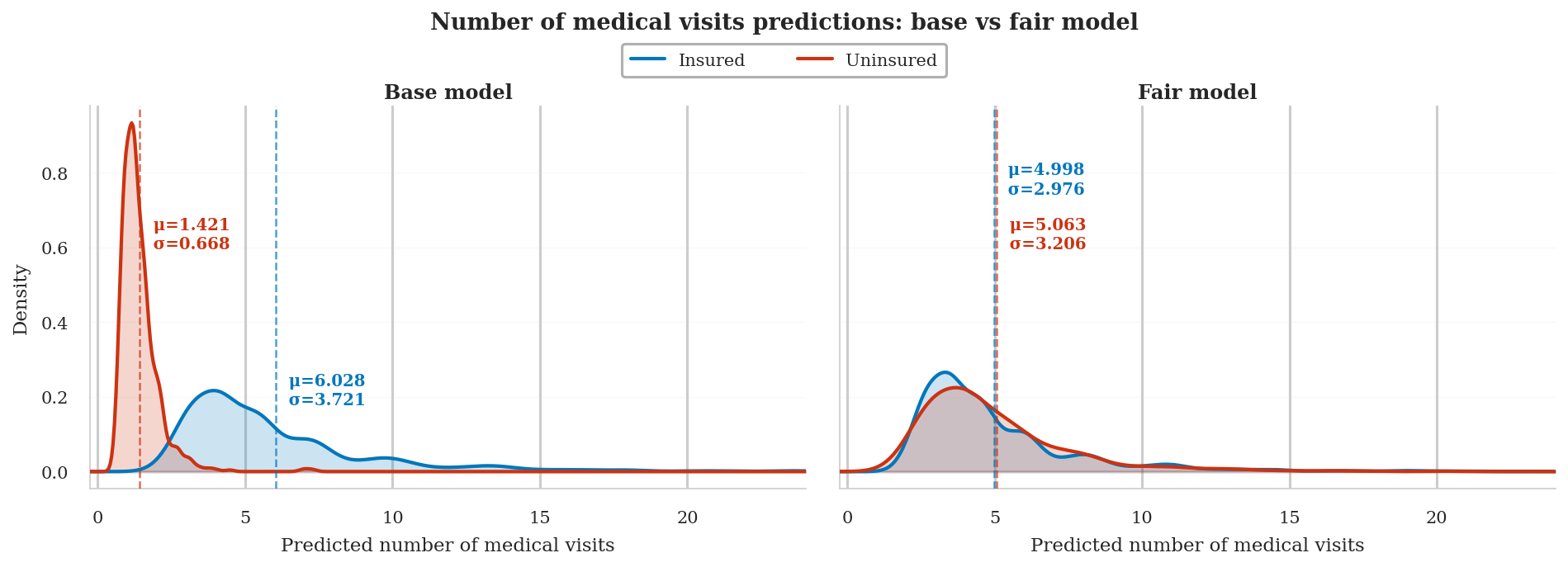}
    \includegraphics[width=\linewidth]{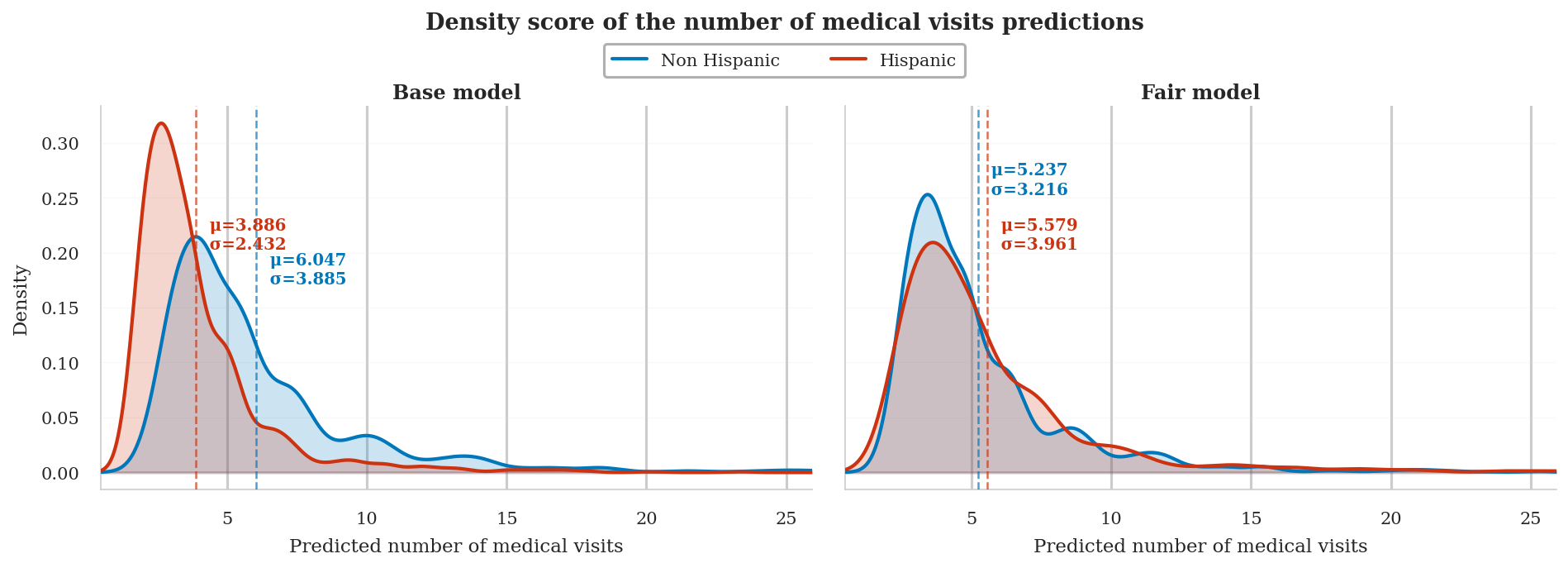}
    \caption{Poisson model: conditional density of predicted office-based visits
    (\texttt{OBTOTV16}) by sensitive group, under the base Poisson model
    (left) and the latent moment-aligned Poisson model (right). The top panel
    uses the uninsured/insured comparison; the bottom panel uses the
    Hispanic/non-Hispanic comparison.}
    \label{fig:ex:poisson:dist}
\end{figure}

Table~\ref{tab:poisson_decomposition} reports predictive performance,
moment-based disparity on the count scale, and the leading Poisson
decomposition. As in the logistic case, the table distinguishes the exact
empirical criterion $U_2(f)$, the proxy $\widetilde U_2(f)$, and the leading
decomposition $D_{1,\mathrm{Pois}}(f)$. The signed within gap
$  \Delta_{\mathrm{within}}(f)
  =
  U_2(f)-\widetilde U_2(f)$
may be positive or negative. The relative within gap
\[
  r_{\mathrm{within}}(f)
  =
  \frac{|U_2(f)-\widetilde U_2(f)|}{U_2(f)}
\]
is nonnegative by definition.

\begin{table}[!htbp]
\centering
\footnotesize
\begin{tabular}{lcc}
\toprule
 & \textbf{Hispanic vs.} & \textbf{Uninsured} \\
& \textbf{non-Hispanic} & \textbf{vs.\ insured} \\
\midrule
\multicolumn{3}{l}{\textit{Predictive performance}} \\
\midrule
Base model: deviance
& 229\,674 \,\scriptsize{(5\,122)}
& 223\,758 \,\scriptsize{(5\,145)} \\
Moment-aligned model: deviance
& 235\,199 \,\scriptsize{(5\,062)}
& 238\,880 \,\scriptsize{(5\,072)} \\
Base model: RMSE
& 10.49 \,\scriptsize{(.31)}
& 10.43 \,\scriptsize{(.31)} \\
Moment-aligned model: RMSE
& 10.51 \,\scriptsize{(.30)}
& 10.55 \,\scriptsize{(.30)} \\
\midrule
\multicolumn{3}{l}{\textit{Moment-based disparity on the count scale}} \\
\midrule
Base model: $U_2(f)$
& 2.00 \,\scriptsize{(.22)}
& 3.60 \,\scriptsize{(.15)} \\
Moment-aligned model: $U_2(f^{\mathrm{fair}})$
& .007 \,\scriptsize{(.006)}
& .001 \,\scriptsize{(.002)} \\
Base model: $\widetilde U_2(f)$
& 1.39 \,\scriptsize{(.20)}
& 5.49 \,\scriptsize{(.36)} \\
Leading decomposition: $D_{1,\mathrm{Pois}}(f)$
& 1.39 \,\scriptsize{(.20)}
& 5.49 \,\scriptsize{(.36)} \\
Within gap:
$\Delta_{\mathrm{within}}=U_2(f)-\widetilde U_2(f)$
& .60
& $-$1.89 \\
Relative within gap:
$r_{\mathrm{within}}$
& 30.3\% \,\scriptsize{(4.1\%)}
& 52.5\% \,\scriptsize{(6.6\%)} \\
Relative reduction in $U_2$
& 99.6\% \,\scriptsize{(.4\%)}
& $\approx$ 100\% \\
\midrule
\multicolumn{3}{l}{\textit{Components of $D_{1,\mathrm{Pois}}(f)$}} \\
\midrule
Direct mean:
$\bar\mu^2\operatorname{Var}_S(\gamma_S)$
& .568 \,\scriptsize{(.108)}
& 3.07 \,\scriptsize{(.238)} \\
Indirect mean:
$\bar\mu^2\operatorname{Var}_S(\langle \bmu^{(S)},\bbeta\rangle)$
& .072 \,\scriptsize{(.010)}
& .095 \,\scriptsize{(.012)} \\
Interaction:
$2\bar\mu^2\operatorname{Cov}_S(\gamma_S,\langle \bmu^{(S)},\bbeta\rangle)$
& .401 \,\scriptsize{(.051)}
& 1.08 \,\scriptsize{(.081)} \\
Indirect structural:
$\bar\mu^2\operatorname{Var}_S(B)$
& .010 \,\scriptsize{(.003)}
& .023 \,\scriptsize{(.003)} \\
Curvature coupling:
$2\bar\mu^2\bar b\,\operatorname{Cov}_S(M,B)$
& .099 \,\scriptsize{(.015)}
& .289 \,\scriptsize{(.023)} \\
Curvature amplification:
$\bar\mu^2\bar b^2\operatorname{Var}_S(M)$
& .245 \,\scriptsize{(.035)}
& .928 \,\scriptsize{(.067)} \\
\midrule
Total leading decomposition:
$D_{1,\mathrm{Pois}}(f)$
& 1.39 \,\scriptsize{(.20)}
& 5.49 \,\scriptsize{(.36)} \\
\bottomrule
\end{tabular}
\caption{Poisson regression on the number of office-based visits
(\texttt{OBTOTV16}). The table reports predictive performance of the base and
latent moment-aligned models, the exact empirical moment disparity $U_2(f)$ on
the count scale, the proxy $\widetilde U_2(f)$, and the leading Poisson
decomposition $D_{1,\mathrm{Pois}}(f)$. Entries are means over $30$ bootstrap
replications, with standard deviations in parentheses when reported.}
\label{tab:poisson_decomposition}
\end{table}

For the Hispanic/non-Hispanic comparison, the leading disparity is mainly
carried by the direct mean component, with a sizeable positive interaction. This
means that the explicit group contribution and the covariate-induced mean
log-frequency differences reinforce each other. The curvature terms are smaller
than the direct mean component but are not negligible; the exponential link
therefore adds visible count-scale disparity. The within approximation gap is
moderate, with $r_{\mathrm{within}}\simeq 30\%$, so the leading decomposition
should be interpreted alongside the full empirical value of $U_2(f)$.

For the uninsured/insured comparison, the direct mean component dominates even
more strongly, and the positive interaction is also large. The proxy $\widetilde U_2(f)$ exceeds the empirical output-scale criterion $U_2(f)$, which explains the negative signed gap
\[
  \Delta_{\mathrm{within}}=U_2(f)-\widetilde U_2(f)<0.
\]
The relative discrepancy is nevertheless positive by construction and is large, about $52.5\%$. In this case, the component-level decomposition remains informative about the main leading mechanisms, but the empirical criterion $U_2(f)$ should be treated as the primary measure of count-scale disparity.

In both comparisons, the latent moment-aligned model almost eliminates the moment-based disparity on the count scale, while the deterioration in RMSE is small. The deviance increase is more visible, especially for the uninsured/insured comparison. This illustrates the risk--disparity trade-off induced by enforcing moment alignment on the latent log-frequency score.

\section{The Tweedie Case}
\label{sec:tweedie}

Tweedie generalized linear models extend the GLM framework by combining a
mean model with a power variance function. For a Tweedie response,
\[
  Y\mid(\bX,S)\sim \mathrm{Tw}_p(\mu(\bX,S),\phi),
\]
one has
\[
  \E[Y\mid\bX,S]=\mu(\bX,S),
  \text{ and }
  \Var(Y\mid\bX,S)=\phi\,\mu(\bX,S)^p.
\]
The parameter $p$ controls how the conditional variance increases with the
mean. For $p\in(1,2)$, the Tweedie family corresponds to a compound
Poisson--Gamma model, with a mass at zero and a continuous positive component.
This makes it useful for non-Gaussian outcomes with many zeros and strong
right-skewness.

From the viewpoint of the present decomposition, the important distinction is
not the variance function itself, but the inverse link $h$. The leading
decomposition of fitted mean predictions depends on $h$, on the first two
conditional moments of the latent score, and on the fitted coefficients. The
Tweedie variance power affects the fitted model and its predictive performance,
but, under a fixed link, it does not change the algebraic form of the
link-induced transport. This is why a log-link Tweedie model has the same
leading decomposition as a log-link Poisson model, even though the two models
use different response distributions.

We first present the canonical-link Tweedie model as a theoretical benchmark,
because it isolates curvature effects specific to the Tweedie inverse link. We
then focus on the log-link specification, which is more common in applications
and is used in the numerical illustration.

\subsection{Canonical-link Tweedie model}
\label{subsec:tweedie-canonical}

Assume first that
\[
  Y\mid(\bX,S)\sim \mathrm{Tw}_p(\mu(\bX,S),\phi),
  \qquad
  p\in(1,2),
\]
with canonical link
\[
  g(\mu)=\frac{\mu^{1-p}}{1-p}.
\]
The latent score is
\[
  \eta
  =
  \langle\bx,\bbeta\rangle+\gamma_s+\beta_0
  =
  g(\mu(\bx,s)).
\]
The inverse link is therefore
\begin{equation}
\label{eq:tweedie-canonical-inverse-link}
  f(\bx,s)=\mu(\bx,s)=h(\eta),
  \qquad
  h(t)=\bigl((1-p)t\bigr)^{1/(1-p)}.
\end{equation}
Since $p>1$, the canonical domain is $t<0$. We assume in this subsection
that the fitted latent score takes values in an interval contained in
$(-\infty,0)$, so that $h$ is well defined and smooth.

As in Section~\ref{sec:glm-decomposition},
\[
  m_s
  =
  \E[\eta\mid S=s]
  =
  \langle\bmu^{(s)},\bbeta\rangle+\gamma_s+\beta_0,
  \qquad
  b_s
  =
  \sqrt{\bbeta^\top\bSigma^{(s)}\bbeta},
\]
and
\[
  M:=m_S,
  \qquad
  B:=b_S,
  \qquad
  \bar m:=\E_S[M],
  \qquad
  \bar b:=\E_S[B].
\]

\begin{lemma}[Derivatives of the canonical Tweedie inverse link]
\label{lem:tweedie-derivatives}
Let
\[
  h(t)=\bigl((1-p)t\bigr)^{1/(1-p)},
  \qquad t<0,
  \qquad p\in(1,2).
\]
Then
\[
  h'(t)=h(t)^p,
  \qquad
  h''(t)=p\,h(t)^{2p-1}.
\]
In particular, if
\[
  \bar\mu:=h(\bar m),
\]
then
\begin{equation}
\label{eq:tweedie-derivatives-at-mbar}
  h'(\bar m)=\bar\mu^p,
  \qquad
  h''(\bar m)=p\,\bar\mu^{2p-1}.
\end{equation}
\end{lemma}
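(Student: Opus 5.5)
The proof is a direct differentiation, organized so that every derivative is written back in terms of $h(t)$ itself. This mirrors the argument for Lemma~\ref{lem:logistic-derivatives}.

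Set $q:=1/(1-p)$. For $p\in(1,2)$ we have $q<-1$. On $t<0$ the base $(1-p)t$ is strictly positive, so $h(t)=\{(1-p)t\}^{q}$ is a well-defined, smooth, positive function there. This confirms the smoothness requirements of Assumptions~\ref{ass:glm-smoothness} and~\ref{ass:glm-across-smoothness} on any interval contained in $(-\infty,0)$.

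The chain rule gives
\[
  h'(t)=q(1-p)\{(1-p)t\}^{q-1}=\{(1-p)t\}^{q-1},
\]
since $q(1-p)=1$. The key step is to express this power through $h$. Because $q-1=p/(1-p)=pq$, we get $\{(1-p)t\}^{q-1}=\bigl(\{(1-p)t\}^{q}\bigr)^{p}=h(t)^p$. Positivity of the base justifies manipulating real powers this way.

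An equivalent and perhaps cleaner route, which I would mention as a check, is to differentiate the identity $g(h(t))=t$. Since $g'(\mu)=\mu^{-p}$, this gives $h(t)^{-p}h'(t)=1$, so $h'=h^p$ again.

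For the second derivative, differentiate $h'=h^p$ as a composition:
\[
  h''(t)=p\,h(t)^{p-1}h'(t)=p\,h(t)^{p-1}h(t)^p=p\,h(t)^{2p-1}.
\]
Finally, evaluate at $t=\bar m$, which lies in the canonical domain by the standing assumption of this subsection. With $\bar\mu=h(\bar m)$ this gives \eqref{eq:tweedie-derivatives-at-mbar}.

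There is no real obstacle here. The only point needing care is the exponent bookkeeping, namely $q(1-p)=1$ and $q-1=pq$. One should also state explicitly that $\bar m<0$, which follows because $\bar m$ is a convex combination of the $m_s$, all of which lie in the negative interval.
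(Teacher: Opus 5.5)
Your proof is correct and follows the paper's argument essentially step for step: the paper writes $h=u^r$ with $u(t)=(1-p)t$ and $r=(1-p)^{-1}$, uses $r(1-p)=1$ and $r-1=p/(1-p)$ to get $h'=h^p$, and then differentiates $h'=h^p$ to obtain $h''=p\,h^{2p-1}$. Your inverse-function check via $g'(\mu)=\mu^{-p}$ and your remark that $\bar m<0$ (as a convex combination of the negative $m_s$) are sound additions that the paper leaves implicit.
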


\begin{proof}
Set $u(t)=(1-p)t$ and $r=(1-p)^{-1}$, so that $h(t)=u(t)^r$. Then
\[
  h'(t)
  =
  r(1-p)u(t)^{r-1}
  =
  u(t)^{r-1}.
\]
Since $r-1=p/(1-p)$, we obtain
\[
  h'(t)
  =
  u(t)^{p/(1-p)}
  =
  h(t)^p.
\]
Differentiating once more,
\[
  h''(t)
  =
  p\,h(t)^{p-1}h'(t)
  =
  p\,h(t)^{2p-1}.
\]
Evaluating at $t=\bar m$ gives
\eqref{eq:tweedie-derivatives-at-mbar}.
\end{proof}

The within-group proxy moments are
\begin{equation}
\label{eq:tweedie-canonical-proxy-moments}
  \widetilde\mu_f(s)
  =
  \bigl((1-p)m_s\bigr)^{1/(1-p)},
  \qquad
  \widetilde\sigma_f(s)
  =
  \widetilde\mu_f(s)^p b_s.
\end{equation}

\begin{theorem}[Leading decomposition for the canonical Tweedie link]
\label{thm:tweedie-canonical-decomposition}
Under the assumptions of Theorem~\ref{thm:glm-two-layer}, the output-scale
criterion satisfies
\[
  U_2(f)
  =
  D_{1,\mathrm{Tw-can}}(f)
  +
  \Delta_{\mathrm{within}}(f)
  +
  \Delta_{\mathrm{across}}(f),
\]
where the leading canonical-link Tweedie decomposition is
\begin{align}
D_{1,\mathrm{Tw-can}}(f)
&=
\bar\mu^{2p}
\Big[
\underbrace{\Var_S(\gamma_S)}_{\textnormal{direct mean}}
+
\underbrace{\Var_S\bigl(\langle\bmu^{(S)},\bbeta\rangle\bigr)}
_{\textnormal{indirect mean}}
\notag\\
&\qquad\qquad
+
\underbrace{2\,\Cov_S\bigl(\gamma_S,
\langle\bmu^{(S)},\bbeta\rangle\bigr)}_{\textnormal{interaction}}
+
\underbrace{\Var_S(B)}_{\textnormal{indirect structural}}
\Big]
\notag\\
&\quad+
\underbrace{
2p\,\bar\mu^{3p-1}\bar b\,\Cov_S(M,B)}
_{\textnormal{curvature coupling}}
+
\underbrace{
p^2\bar\mu^{4p-2}\bar b^2\,\Var_S(M)}
_{\textnormal{curvature amplification}}.
\label{eq:tweedie-canonical-leading-decomposition}
\end{align}
\end{theorem}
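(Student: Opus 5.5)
The proof follows the pattern used for the logistic and Poisson specializations, Theorems~\ref{thm:logistic-decomposition} and~\ref{thm:poisson-decomposition}. It substitutes the derivatives of the canonical Tweedie inverse link into the general two-layer identity of Theorem~\ref{thm:glm-two-layer}. First, I will check that the assumptions of Theorem~\ref{thm:glm-two-layer} hold for $h(t)=((1-p)t)^{1/(1-p)}$. On the open interval $(-\infty,0)$, $h$ is $\mathcal C^\infty$. Lemma~\ref{lem:tweedie-derivatives} gives $h'(t)=h(t)^p>0$, so the positivity of $h'$ required in Assumption~\ref{ass:glm-smoothness} holds. The standing hypothesis that the fitted scores lie in an interval inside $(-\infty,0)$ implies that $\bar m$ and all $m_s$ lie there as well, because they are averages of such scores. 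This ensures that an open interval $J\subset(-\infty,0)$ as in Assumption~\ref{ass:glm-across-smoothness} exists. The moment and remainder-bound conditions are simply inherited from the hypothesis ``under the assumptions of Theorem~\ref{thm:glm-two-layer}''.

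Second, I will invoke Theorem~\ref{thm:glm-two-layer}, which yields $U_2(f)=D_1(f)+\Delta_{\mathrm{within}}(f)+\Delta_{\mathrm{across}}(f)$. Here $D_1(f)$ is given by Definition~\ref{def:glm-leading-decomposition}, and the error terms have the general forms already established. Nothing in either error term needs to be recomputed, since both are defined intrinsically from $U_2$, $\widetilde U_2$ and $D_1$.

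Third, I will evaluate the coefficients of $D_1(f)$ using \eqref{eq:tweedie-derivatives-at-mbar}, namely $h'(\bar m)=\bar\mu^p$ and $h''(\bar m)=p\bar\mu^{2p-1}$. This gives the following:
\[
\{h'(\bar m)\}^2=\bar\mu^{2p},\qquad
2h'(\bar m)h''(\bar m)=2p\,\bar\mu^{3p-1},\qquad
\{h''(\bar m)\}^2=p^2\bar\mu^{4p-2}.
\]
Inserting these into \eqref{eq:glm-D1} produces exactly \eqref{eq:tweedie-canonical-leading-decomposition}. The bracket of four linear channels is multiplied by $\bar\mu^{2p}$, and the coupling and amplification terms carry the factors above times $\bar b\,\Cov_S(M,B)$ and $\bar b^2\Var_S(M)$, respectively.

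The only step needing care is the domain issue in the first step, since the derivation is otherwise a direct substitution. It must be confirmed that the Taylor intermediate points $\xi_s$ and $\zeta_s$ used in Proposition~\ref{prop:glm-across}, and the values $\xi$ used in Proposition~\ref{prop:glm-delta}, stay inside $(-\infty,0)$. I will handle this by convexity: each such point lies between $\bar m$ and some $m_s$, or between $m_s$ and a score value. All of these lie in the interval contained in $(-\infty,0)$ that is assumed for the fitted scores, so all expansions are legitimate.
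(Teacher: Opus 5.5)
Your proposal is correct and follows the paper's proof. The paper substitutes $h'(\bar m)=\bar\mu^p$ and $h''(\bar m)=p\,\bar\mu^{2p-1}$ from Lemma~\ref{lem:tweedie-derivatives} into Definition~\ref{def:glm-leading-decomposition} and invokes Theorem~\ref{thm:glm-two-layer}, and your coefficients $\bar\mu^{2p}$, $2p\,\bar\mu^{3p-1}$, $p^2\bar\mu^{4p-2}$ are right. Your domain checks are extra care that the paper leaves to the subsection's standing hypothesis. You are also right that finiteness of $L_{2,s}$ must be taken from the hypothesis rather than derived, since $h''(t)=p\,h(t)^{2p-1}\to\infty$ as $t\to 0^-$.
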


\begin{proof}
By Lemma~\ref{lem:tweedie-derivatives},
\[
  h'(\bar m)=\bar\mu^p,
  \qquad
  h''(\bar m)=p\bar\mu^{2p-1}.
\]
Substitution into Definition~\ref{def:glm-leading-decomposition} and
Theorem~\ref{thm:glm-two-layer} gives
\eqref{eq:tweedie-canonical-leading-decomposition}.
\end{proof}

\begin{proposition}[Canonical Tweedie link geometry]
\label{prop:tweedie-geometry}
For $p\in(1,2)$ and $\bar\mu=h(\bar m)>0$:
\begin{enumerate}
\item the curvature amplification term in
\eqref{eq:tweedie-canonical-leading-decomposition} is nonnegative;
\item the sign of the curvature coupling term is the sign of $\Cov_S(M,B)$;
\item curvature never vanishes on the admissible domain;
\item the curvature terms scale with powers $3p-1$ and $4p-2$ of
$\bar\mu$.
\end{enumerate}
\end{proposition}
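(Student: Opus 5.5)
The plan is to read all four claims off the explicit formula \eqref{eq:tweedie-canonical-leading-decomposition}, using the derivative identities of Lemma~\ref{lem:tweedie-derivatives}. The first step is positivity of the operating point: $\bar\mu>0$. We assume the fitted latent score lies in an interval inside $(-\infty,0)$, so $\bar m<0$ and $(1-p)\bar m>0$ because $p>1$. Hence $\bar\mu=h(\bar m)=((1-p)\bar m)^{1/(1-p)}$ is a well-defined positive real number. Every real power of $\bar\mu$ is therefore strictly positive. Since $\bar b=\E_S[B]$ is an average of nonnegative standard deviations, we also have $\bar b\ge 0$.

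For claim~1, the amplification term is $p^2\bar\mu^{4p-2}\bar b^2\Var_S(M)$. Each factor is nonnegative: $p^2>0$, $\bar\mu^{4p-2}>0$, $\bar b^2\ge0$, and $\Var_S(M)\ge0$. For claim~2, the coupling term is $2p\,\bar\mu^{3p-1}\bar b\,\Cov_S(M,B)$, whose prefactor $2p\,\bar\mu^{3p-1}\bar b$ is nonnegative. So the term has the sign of $\Cov_S(M,B)$. I would add the caveat that this holds in the weak sense when $\bar b=0$. In that degenerate case $B\equiv 0$, because $\bar b=0$ together with $b_s\ge 0$ forces every $b_s=0$. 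Then $\Cov_S(M,B)=0$ as well, so the statement remains consistent.

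For claim~3, take any admissible $t<0$. Lemma~\ref{lem:tweedie-derivatives} gives $h''(t)=p\,h(t)^{2p-1}$ with $h(t)>0$ and $p>0$, so $h''(t)>0$. Thus $h$ is strictly convex on the admissible domain and never has an inflection point. This contrasts with the logistic case, where $h''$ vanishes at $\bar\mu=1/2$ (Proposition~\ref{prop:logistic-geometry}). Claim~4 is bookkeeping. We have $h'(\bar m)h''(\bar m)=p\,\bar\mu^{p}\bar\mu^{2p-1}=p\,\bar\mu^{3p-1}$ and $\{h''(\bar m)\}^2=p^2\bar\mu^{4p-2}$. These exponents match the factors in Definition~\ref{def:glm-leading-decomposition}, as substituted in Theorem~\ref{thm:tweedie-canonical-decomposition}.

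None of these steps is a real obstacle. The only points needing care are the well-definedness and positivity of $\bar\mu$, which rest on the domain restriction $\bar m<0$, and the degenerate case $\bar b=0$ in the sign claim.
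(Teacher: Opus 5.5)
Your proposal is correct and follows the paper's own argument: read each claim off \eqref{eq:tweedie-canonical-leading-decomposition}, using that every power of $\bar\mu>0$ is positive, that $\bar b\ge 0$, and that $h''>0$ by Lemma~\ref{lem:tweedie-derivatives}. Two of your additions go slightly beyond the paper. You check $h''(t)>0$ at every admissible $t<0$, whereas the paper only evaluates at $\bar m$. You also handle the degenerate case $\bar b=0$, which uses $\pi_s>0$.
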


\begin{proof}
The amplification term is nonnegative because
\[
  p^2\bar\mu^{4p-2}\bar b^2\Var_S(M)\ge 0.
\]
For the coupling term, all factors in
\[
  2p\,\bar\mu^{3p-1}\bar b\,\Cov_S(M,B)
\]
are nonnegative except possibly $\Cov_S(M,B)$. Since
\[
  h''(\bar m)=p\bar\mu^{2p-1}>0,
\]
curvature does not vanish on the admissible domain. The polynomial scaling
follows directly from
\eqref{eq:tweedie-canonical-leading-decomposition}.
\end{proof}

The canonical-link case shows how the Tweedie power $p$ can enter the
prediction-scale decomposition through the curvature of the inverse link. This
is useful as a theoretical benchmark. In practice, however, Tweedie GLMs are
often fitted with a log link. We now turn to that case.

\subsection{Log-link Tweedie model}
\label{subsec:tweedie-loglink}

Assume that
\[
  Y\mid(\bX,S)\sim \mathrm{Tw}_p(\mu(\bX,S),\phi),
  \qquad
  p\in(1,2),
\]
with
\[
  \log\mu(\bx,s)
  =
  \eta
  =
  \langle\bx,\bbeta\rangle+\gamma_s+\beta_0.
\]
Then
\[
  h(t)=e^t,
  \qquad
  h'(t)=e^t,
  \qquad
  h''(t)=e^t.
\]
Thus, if
\[
  \bar\mu=e^{\bar m},
\]
the within-group proxy moments are
\begin{equation}
\label{eq:tweedie-loglink-proxy-moments}
  \widetilde\mu_f(s)=e^{m_s},
  \qquad
  \widetilde\sigma_f(s)=e^{m_s}b_s.
\end{equation}

\begin{theorem}[Leading decomposition for the log-link Tweedie model]
\label{thm:tweedie-loglink-decomposition}
Under the assumptions of Theorem~\ref{thm:glm-two-layer}, the output-scale
criterion satisfies
\[
  U_2(f)
  =
  D_{1,\mathrm{Tw-log}}(f)
  +
  \Delta_{\mathrm{within}}(f)
  +
  \Delta_{\mathrm{across}}(f),
\]
where the leading log-link Tweedie decomposition is
\begin{align}
D_{1,\mathrm{Tw-log}}(f)
&=
\bar\mu^2
\Big[
\underbrace{\Var_S(\gamma_S)}_{\textnormal{direct mean}}
+
\underbrace{\Var_S\bigl(\langle\bmu^{(S)},\bbeta\rangle\bigr)}
_{\textnormal{indirect mean}}
\notag\\
&\qquad\qquad
+
\underbrace{2\,\Cov_S\bigl(\gamma_S,
\langle\bmu^{(S)},\bbeta\rangle\bigr)}_{\textnormal{interaction}}
+
\underbrace{\Var_S(B)}_{\textnormal{indirect structural}}
\Big]
\notag\\
&\quad+
\underbrace{
2\bar\mu^2\bar b\,\Cov_S(M,B)}
_{\textnormal{curvature coupling}}
+
\underbrace{
\bar\mu^2\bar b^2\,\Var_S(M)}
_{\textnormal{curvature amplification}}.
\label{eq:tweedie-loglink-leading-decomposition}
\end{align}
\end{theorem}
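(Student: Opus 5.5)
The proof is a direct specialization of Theorem~\ref{thm:glm-two-layer}. The two-layer identity $U_2(f)=D_1(f)+\Delta_{\mathrm{within}}(f)+\Delta_{\mathrm{across}}(f)$ holds for any inverse link satisfying Assumptions~\ref{ass:glm-smoothness} and~\ref{ass:glm-across-smoothness}. So it suffices to evaluate $h'(\bar m)$ and $h''(\bar m)$ for $h(t)=e^t$ and substitute them into Definition~\ref{def:glm-leading-decomposition}.

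\textbf{Step 1 (derivatives).} Since $h'(t)=h''(t)=e^t$, setting $\bar\mu=e^{\bar m}$ gives $h'(\bar m)=h''(\bar m)=\bar\mu$. This is the same as \eqref{eq:poisson-derivatives-at-mbar}.

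\textbf{Step 2 (substitution).} The prefactor $\{h'(\bar m)\}^2$ becomes $\bar\mu^2$ and multiplies the four linear channels. The curvature coupling term becomes $2h'(\bar m)h''(\bar m)\bar b\,\Cov_S(M,B)=2\bar\mu^2\bar b\,\Cov_S(M,B)$. The curvature amplification term becomes $\{h''(\bar m)\}^2\bar b^2\Var_S(M)=\bar\mu^2\bar b^2\Var_S(M)$. This is exactly \eqref{eq:tweedie-loglink-leading-decomposition}. Alternatively, since $D_1$ depends on the model only through $h$, and the log-link Tweedie model has the same $h$ as the Poisson model, one can cite Theorem~\ref{thm:poisson-decomposition} verbatim. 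The Tweedie variance power $p$ and the dispersion $\phi$ never enter Definition~\ref{def:glm-U2} or Definition~\ref{def:glm-leading-decomposition}; they affect only the fitted $\bbeta$, $\gamma_s$ and $\beta_0$.

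\textbf{Step 3 (remainders).} The error terms $\Delta_{\mathrm{within}}$ and $\Delta_{\mathrm{across}}$ are inherited unchanged from Theorem~\ref{thm:glm-two-layer}. Their explicit forms are \eqref{eq:glm-delta-within-definition} and $R_{\FMD}+R_{\SMD}$ from Proposition~\ref{prop:glm-across}. Assumption~\ref{ass:glm-across-smoothness} holds trivially because $e^t\in\mathcal C^\infty(\mathbb R)$.

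\textbf{Main obstacle.} The only genuine point is Assumption~\ref{ass:glm-smoothness}, which requires $L_{2,s}=\sup_{I_s}e^t<\infty$. This fails if the score support $I_s$ is unbounded above. The theorem is stated ``under the assumptions of Theorem~\ref{thm:glm-two-layer}'', so I would simply take bounded $I_s$ (or finite moments) as part of the hypothesis. I would also note, as the paper does after Assumption~\ref{ass:glm-smoothness}, that the identity itself is algebraic: $\Delta_{\mathrm{within}}$ is defined as $U_2-\widetilde U_2$, so the decomposition holds whenever the moments exist. Boundedness is needed only for the bounds of Proposition~\ref{prop:glm-delta}.
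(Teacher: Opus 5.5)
Your proposal is correct and follows the paper's proof: evaluate $h'(\bar m)=h''(\bar m)=\bar\mu$ for $h(t)=e^t$, substitute into Definition~\ref{def:glm-leading-decomposition}, and carry the remainders over unchanged from Theorem~\ref{thm:glm-two-layer}. Your further remark is also accurate and matches the paper's caveat after Assumption~\ref{ass:glm-smoothness}: $L_{2,s}<\infty$ requires each $I_s$ to be bounded above, and this is needed only for the bounds of Proposition~\ref{prop:glm-delta}, not for the identity itself.
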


\begin{proof}
For the log link, $h(t)=e^t$, hence $h'(\bar m)=h''(\bar m)=\bar\mu$.
Substituting these identities into
Definition~\ref{def:glm-leading-decomposition} and
Theorem~\ref{thm:glm-two-layer} gives
\eqref{eq:tweedie-loglink-leading-decomposition}.
\end{proof}

Under a log-link Tweedie specification, the leading decomposition coincides
formally with the Poisson decomposition in
Theorem~\ref{thm:poisson-decomposition}. This does not mean that the two fitted
models are equivalent. The Poisson model imposes
\[
  \Var(Y\mid\bX,S)=\mu(\bX,S),
\]
whereas the Tweedie model uses
\[
  \Var(Y\mid\bX,S)=\phi\,\mu(\bX,S)^p.
\]
The variance function affects estimation, fitted coefficients, dispersion, and
predictive performance. But once a fitted mean predictor
\[
  f(\bx,s)=\exp\{\eta(\bx,s)\}
\]
is obtained, the leading decomposition of its fitted predictions depends on the
log link and on the latent score moments, not directly on the variance function.

\subsection{Interpretation}
\label{subsec:tweedie-interpretation}

The Tweedie case separates two roles of the model. The first role is statistical:
the response family and the power variance function determine how the model is
estimated and how uncertainty increases with the mean. The second role is
geometric: the inverse link determines how latent score disparities are
transported to the mean scale. The decomposition developed in this paper
concerns the second role.

For the canonical link, the Tweedie power $p$ enters the link geometry itself.
The local slope is $\bar\mu^p$, and the curvature terms scale as
\[
  2p\,\bar\mu^{3p-1}\bar b\,\Cov_S(M,B)
  \qquad\text{and}\qquad
  p^2\bar\mu^{4p-2}\bar b^2\Var_S(M).
\]
Thus, two fitted models with similar latent-score disparities may generate
different mean-scale diagnostics if they operate at different levels of
$\bar\mu$ or use different Tweedie powers.

For the log link, the situation is different. The mean-scale transport is
exponential, exactly as in the Poisson case. The power parameter $p$ does not
enter the leading decomposition algebraically, except through its effect on the
estimated coefficients and latent score distribution. Therefore, when comparing
Poisson and log-link Tweedie fits, differences in the decomposition should be
read as differences induced by the fitted model, not by a different inverse-link
geometry.

\subsection{A variance-power illustration on office visits}
\label{subsec:tweedie-case-study}

We illustrate the log-link Tweedie decomposition using the Medical Expenditure
Panel Survey Household Component (MEPS-HC), HC-192 file. The response is the
number of office-based provider visits during the year,
\[
  Y=\texttt{OBTOTV16},
\]
on the untruncated sample. Details on the sample, variables, and estimation
procedure are reported in \ref{app:tweedie}.

Since $\texttt{OBTOTV16}$ is a count variable, this illustration should not be
read as a generative compound Poisson--Gamma model for continuous claim
amounts. Its role is more limited: it evaluates a log-link variance-power mean
model for a zero-inflated and overdispersed utilization outcome, and checks how
the decomposition behaves when the response distribution is changed while the
inverse link remains exponential. The algebraic decomposition remains valid for
the fitted mean predictions because it depends on the inverse link and on the
latent score moments.

We consider the same two sensitive comparisons as before: Hispanic versus
non-Hispanic individuals, and uninsured versus insured individuals. The base
model includes the sensitive attribute and the selected non-sensitive covariates.
The moment-aligned model applies the latent-scale correction of
Section~\ref{subsec:glm-fair-correction} and then maps the corrected score
through the exponential inverse link.

Figure~\ref{fig:ex:tweedie:beta} displays the coefficient shifts induced by this
latent-scale correction. As in the Poisson case, the shifts are read on the
log-mean scale. The non-sensitive slopes are multiplied by the group-specific
factor $\alpha_s=\bar b/b_s$, while the sensitive-attribute contribution is
absorbed into group-specific intercepts.

\begin{figure}[!htbp]
    \centering
    \includegraphics[width=\linewidth]{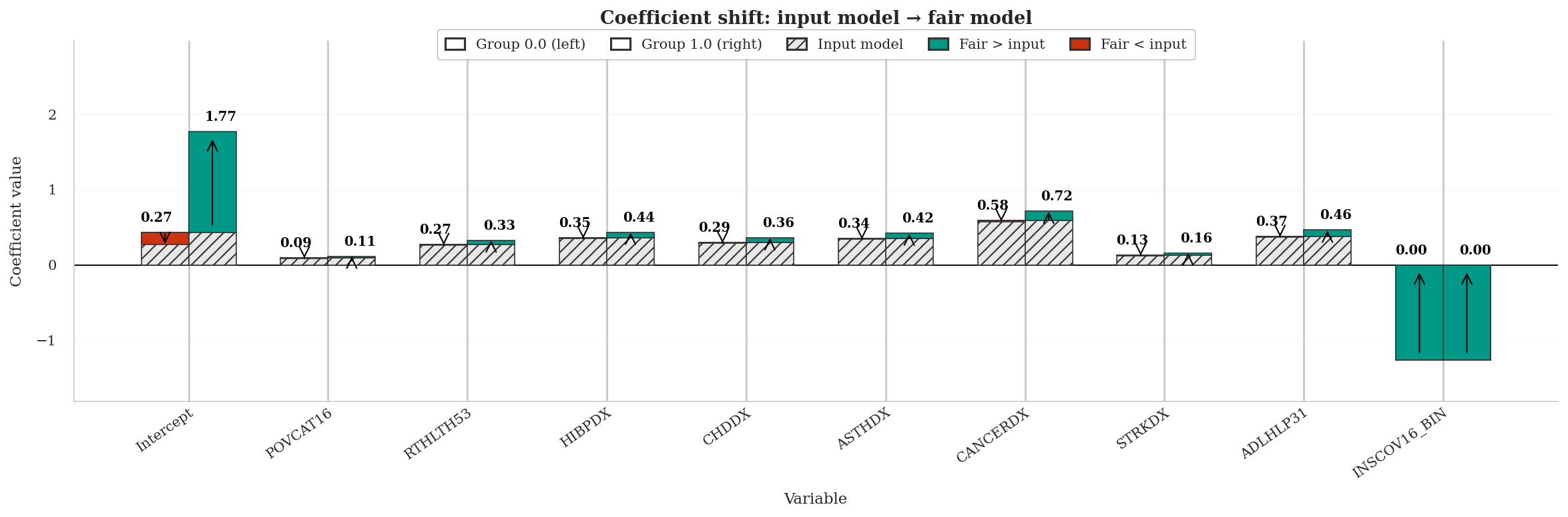}
    \caption{Tweedie log-link model: coefficient shifts from the base model to
    the latent moment-aligned model. Hatched bars show input coefficients;
    colored extensions show the post-processing correction. The correction is
    applied on the log-mean scale, before the exponential inverse link.}
    \label{fig:ex:tweedie:beta}
\end{figure}

Figure~\ref{fig:ex:tweedie:dist} displays the conditional distributions of
predicted office-based visits under the base and moment-aligned Tweedie
log-link models. The post-processing reduces the systematic group differences
visible in the base fitted predictions.

\begin{figure}[!htbp]
    \centering
    \includegraphics[width=\linewidth]{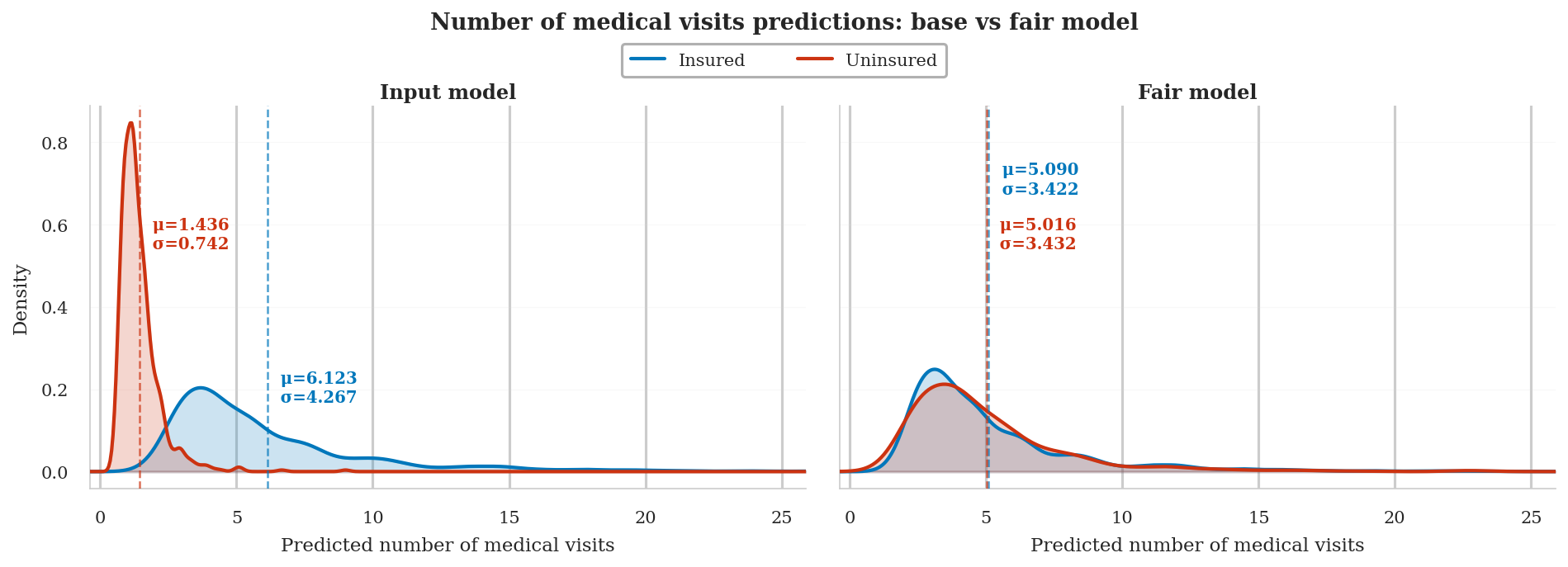}
    \caption{Tweedie log-link model: conditional density of predicted
    office-based visits (\texttt{OBTOTV16}) by sensitive group, under the base
    model and the latent moment-aligned model.}
    \label{fig:ex:tweedie:dist}
\end{figure}

Table~\ref{tab:tweedie_decomposition} reports predictive performance,
moment-based disparity on the mean scale, and the leading log-link Tweedie
decomposition. The table distinguishes the exact empirical criterion $U_2(f)$,
the proxy $\widetilde U_2(f)$, and the leading decomposition
$D_{1,\mathrm{Tw-log}}(f)$. The signed within gap
\[
  \Delta_{\mathrm{within}}(f)
  =
  U_2(f)-\widetilde U_2(f)
\]
may be positive or negative. The relative within gap
\[
  r_{\mathrm{within}}(f)
  =
  \frac{|U_2(f)-\widetilde U_2(f)|}{U_2(f)}
\]
is nonnegative by definition.

\begin{table}[!htbp]
\centering
\footnotesize
\begin{tabular}{lcc}
\toprule
& \textbf{Hispanic vs.} & \textbf{Uninsured} \\
& \textbf{non-Hispanic} & \textbf{vs.\ insured} \\
\midrule
\multicolumn{3}{l}{\textit{Predictive performance}} \\
\midrule
Base model: Tweedie deviance
& 116\,096 \,\scriptsize{(1\,278)}
& 112\,732 \,\scriptsize{(1\,347)} \\
Moment-aligned model: Tweedie deviance
& 119\,108 \,\scriptsize{(1\,268)}
& 120\,593 \,\scriptsize{(1\,276)} \\
Base model: RMSE
& 10.56 \,\scriptsize{(0.31)}
& 10.47 \,\scriptsize{(0.31)} \\
Moment-aligned model: RMSE
& 10.55 \,\scriptsize{(0.31)}
& 10.57 \,\scriptsize{(0.30)} \\
\midrule
\multicolumn{3}{l}{\textit{Moment-based disparity on the mean scale}} \\
\midrule
Base model: $U_2(f)$
& 2.83 \,\scriptsize{(0.26)}
& 4.15 \,\scriptsize{(0.19)} \\
Moment-aligned model: $U_2(f^{\mathrm{fair}})$
& .014 \,\scriptsize{(.012)}
& .002 \,\scriptsize{(.003)} \\
Base model: $\widetilde U_2(f)$
& 1.83 \,\scriptsize{(0.21)}
& 6.03 \,\scriptsize{(0.38)} \\
Leading decomposition: $D_{1,\mathrm{Tw-log}}(f)$
& 1.83 \,\scriptsize{(0.21)}
& 6.03 \,\scriptsize{(0.38)} \\
Within gap:
$\Delta_{\mathrm{within}}=U_2(f)-\widetilde U_2(f)$
& 1.00
& $-$1.88 \\
Relative within gap:
$r_{\mathrm{within}}$
& 35.4\% \,\scriptsize{(4.3\%)}
& 45.2\% \,\scriptsize{(7.1\%)} \\
Relative reduction in $U_2$
& 99.5\% \,\scriptsize{(0.5\%)}
& 100.0\% \,\scriptsize{(0.06\%)} \\
\midrule
\multicolumn{3}{l}{\textit{Components of $D_{1,\mathrm{Tw-log}}(f)$}} \\
\midrule
Direct mean:
$\bar\mu^2\operatorname{Var}_S(\gamma_S)$
& .753 \,\scriptsize{(.114)}
& 3.227 \,\scriptsize{(.239)} \\
Indirect mean:
$\bar\mu^2\operatorname{Var}_S(\langle \bmu^{(S)},\bbeta\rangle)$
& .081 \,\scriptsize{(.010)}
& .110 \,\scriptsize{(.013)} \\
Interaction:
$2\bar\mu^2\operatorname{Cov}_S(\gamma_S,\langle \bmu^{(S)},\bbeta\rangle)$
& .493 \,\scriptsize{(.050)}
& 1.18 \,\scriptsize{(.084)} \\
Indirect structural:
$\bar\mu^2\operatorname{Var}_S(B)$
& .012 \,\scriptsize{(.003)}
& .025 \,\scriptsize{(.003)} \\
Curvature coupling:
$2\bar\mu^2\bar b\,\operatorname{Cov}_S(M,B)$
& .128 \,\scriptsize{(.018)}
& .340 \,\scriptsize{(.025)} \\
Curvature amplification:
$\bar\mu^2\bar b^2\operatorname{Var}_S(M)$
& .363 \,\scriptsize{(.042)}
& 1.14 \,\scriptsize{(.079)} \\
\midrule
Total leading decomposition:
$D_{1,\mathrm{Tw-log}}(f)$
& 1.83 \,\scriptsize{(0.21)}
& 6.03 \,\scriptsize{(0.38)} \\
\bottomrule
\end{tabular}
\caption{Tweedie regression with log link on the number of office-based visits
(\texttt{OBTOTV16}). The table reports predictive performance of the base and
latent moment-aligned models, the exact empirical moment disparity $U_2(f)$
on the mean scale, the proxy $\widetilde U_2(f)$, and the leading log-link
Tweedie decomposition $D_{1,\mathrm{Tw-log}}(f)$. Entries are means over
$30$ bootstrap replications, with standard deviations in parentheses when
reported.}
\label{tab:tweedie_decomposition}
\end{table}

For the Hispanic/non-Hispanic comparison, the leading disparity is mainly
carried by the direct mean component and by the positive interaction between
the direct and indirect mean channels. The curvature amplification term is also
visible, reflecting the effect of the exponential inverse link. The within
approximation gap is moderate, with $r_{\mathrm{within}}\simeq 35\%$. Thus,
the leading components should be interpreted together with the exact empirical
criterion $U_2(f)$.

For the uninsured/insured comparison, the direct mean component is again the dominant term, and the interaction is positive. The proxy $\widetilde U_2(f)$ exceeds the empirical output-scale criterion $U_2(f)$, which explains the negative signed gap
\[
  \Delta_{\mathrm{within}}=U_2(f)-\widetilde U_2(f)<0.
\]
The relative discrepancy is nevertheless positive by construction and is substantial, about $45.2\%$. The component-level decomposition therefore identifies the main leading mechanisms, but the empirical $U_2(f)$ should remain the primary measure of mean-scale disparity.

In both comparisons, latent moment alignment almost eliminates the moment-based disparity, while producing a visible increase in Tweedie deviance and only a small change in RMSE. This is consistent with the preceding log-link examples: moment alignment substantially reduces group disparities in fitted predictions, but it should be reported together with predictive performance and approximation diagnostics.

\subsection{Extension to aggregate losses and pure-premium models}
\label{subsec:tweedie-pure-premium-opening}

The preceding illustration uses a utilization count, so it should be read as a variance-power sensitivity analysis rather than as a pure-premium application.
The same formulas, however, apply directly to aggregate loss or pure-premium modelling when the response is an annual claim amount with many zeros and a skewed positive component. In that setting, a Tweedie GLM with $p\in(1,2)$ has a natural compound Poisson--Gamma interpretation: the mass at zero corresponds to policies with no claim, while the positive component represents aggregate claim severity.

For a log-link pure-premium model,
\[
  \log \E[Y\mid\bX,S]
  =
  \langle\bx,\bbeta\rangle+\gamma_s+\beta_0,
\]
the leading decomposition remains \eqref{eq:tweedie-loglink-leading-decomposition}. Its interpretation changes: the mean-scale components are now disparities in predicted aggregate cost, or pure premium, rather than disparities in predicted utilization. The direct mean component corresponds to an explicit group contribution in the tariff score; the indirect mean component captures group differences in average rating profiles; the structural component captures differences in within-group tariff-score dispersion; and the curvature terms quantify how the exponential link converts latent tariff-score differences into premium-scale disparities.

For a canonical-link Tweedie model, the power $p$ also enters the link geometry through
\[
  h'(\bar m)=\bar\mu^p,
  \qquad
  h''(\bar m)=p\bar\mu^{2p-1}.
\]
The decomposition then depends more explicitly on both the fitted average cost level and the variance-power parameter. This makes the canonical case useful as a theoretical benchmark, whereas the log-link specification is often more natural in actuarial implementations. A pure-premium extension of the empirical application would therefore replace $\texttt{OBTOTV16}$ by an aggregate expenditure or claim amount, retain the zero observations, and report the same diagnostic quantities $U_2(f)$, $\widetilde U_2(f)$, $D_{1,\mathrm{Tw-log}}(f)$, and $r_{\mathrm{within}}$.

\section{Discussion and Conclusion}
\label{sec:conclusion}

This paper develops a moment-based decomposition framework for diagnosing
direct and indirect sources of group disparity in generalized linear models. The
motivation is actuarial: many insurance and health-risk predictions are not read
on a Gaussian or identity-link scale, but on probability, frequency, utilization,
or cost scales obtained through a nonlinear inverse link. In such settings,
linear-model intuitions are incomplete. A group difference in latent scores may
be compressed, amplified, or reshaped once it is transported to the prediction
scale.

The starting point is an exact linear-Gaussian benchmark. Under Gaussian
group-conditional score distributions, the Wasserstein barycentric measure of
distributional demographic-parity violation reduces to a two-moment criterion.
This criterion decomposes exactly into four channels: a direct mean component,
an indirect mean component, an interaction component, and an indirect structural
component. These terms describe, respectively, the explicit fitted contribution
of the sensitive attribute, the effect of group differences in average
non-sensitive covariates, the reinforcement or offset between these two mean
channels, and differences in within-group score dispersion.

The main contribution is to extend this decomposition logic to GLMs. The
extension requires separating three objects. The first is the empirical
output-scale criterion $U_2(f)$, based on the first two conditional moments of
the fitted predictions. The second is the within-group proxy
$\widetilde U_2(f)$, obtained by transporting the latent score moments through
a first-order delta approximation. The third is the leading decomposition
$D_1(f)$, obtained by expanding the proxy criterion across sensitive groups
around the global latent mean $\bar m$. This distinction makes the status of
the approximation explicit:
\[
  U_2(f)
  =
  D_1(f)
  +
  \Delta_{\mathrm{within}}(f)
  +
  \Delta_{\mathrm{across}}(f).
\]
The leading term $D_1(f)$ preserves the four linear channels and adds two
link-induced curvature components: a curvature coupling term, which links
group-wise latent means and score dispersions, and a curvature amplification
term, which converts latent mean heterogeneity into additional prediction-scale
dispersion.

The three GLM examples illustrate different aspects of this mechanism. In the
logistic case, the bounded probability scale can compress latent disparities,
especially for rare events, while the curvature changes sign around a baseline
probability of one half. In the Poisson case, additive differences on the
log-frequency scale become multiplicative differences on the count scale, and
the exponential link creates nonnegative curvature amplification. In the Tweedie
case, the distinction between the response variance function and the inverse
link becomes important. With a log link, the leading decomposition has the same
formal expression as in the Poisson case, even though the fitted model, response
distribution, and predictive performance may differ. With the canonical Tweedie
link, the variance-power parameter also enters the link geometry through the
curvature of the inverse link.

The framework also clarifies the limits of fairness through unawareness.
Removing the sensitive attribute from a model can remove the explicit direct
component, but it does not remove disparities transmitted through correlated
non-sensitive covariates, through differences in within-group covariance
structures, or through nonlinear link effects. These mechanisms have different
actuarial interpretations. A direct group coefficient, a proxy-mediated profile
effect, a structural dispersion effect, and a curvature-induced amplification do
not point to the same modelling issue and should not lead to the same
governance response.

The moment-aligning post-processing used in the numerical illustrations should
be understood in the same diagnostic spirit. It is applied on the latent score
scale, not directly on the prediction scale. By construction, it aligns the first
two conditional moments of the latent score across groups, and it eliminates the
first-order proxy disparity. In nonlinear GLMs, however, the exact output-scale
criterion $U_2(f)$ may remain positive when higher-order features of the
conditional score distributions differ across groups. Thus, the post-processing
is not a legal or normative definition of a fair tariff. It is a way to measure how
much prediction disparity is associated with the first two moments of the fitted
score, and how much predictive performance is lost when those moments are
aligned.

The proposed diagnostic has three main limitations. First, outside the
linear-Gaussian benchmark, $U_2(f)$ is not a full characterization of
distributional demographic parity. It tracks differences in conditional means and
standard deviations of fitted predictions, not equality of the full conditional
prediction laws. Second, the leading GLM decomposition is first-order. Its
accuracy depends on within-group score dispersion, separation of group-wise
latent means, and the curvature of the inverse link over the fitted score range.
For this reason, empirical applications should report $U_2(f)$,
$\widetilde U_2(f)$, $D_1(f)$, the signed approximation gaps, and the
nonnegative relative discrepancy $r_{\mathrm{within}}$. When the discrepancy
is large, the components of $D_1(f)$ remain useful directional diagnostics, but
the empirical value of $U_2(f)$ should be treated as the primary output-scale
measure. Third, the framework is statistical rather than legal. The words direct
and indirect are used to describe fitted-score mechanisms, not to establish a
legal finding of discrimination.

The diagnostic can also be adapted to conditional uses. In some insurance
applications, the relevant question is not whether predictions are equal across
sensitive groups unconditionally, but whether disparities remain after
conditioning on admissible rating information or on legally accepted risk
classes. The same decomposition can be applied within such strata, or after
constructing residualized score components relative to admissible covariates,
and then aggregated across strata. This does not remove the need for a
substantive admissibility analysis. It simply provides a way to apply the same
moment-based diagnostic to conditional comparisons.

Several extensions are natural. Multi-component insurance models, such as
zero-inflated, hurdle, or frequency--severity models, would require
decompositions that separate disparities arising from occurrence, frequency,
severity, and aggregation mechanisms. Another direction is statistical inference
for the decomposition components, for instance through bootstrap percentile
intervals or asymptotic plug-in approximations. A third direction is empirical:
the same portfolio can be analysed under occurrence, frequency, utilization,
and aggregate-cost targets to study how a disparity changes when the actuarial
decision scale changes.

The central message is that fairness diagnostics in actuarial GLMs should be
read on the scale on which predictions are used, but with a clear account of how
the model transports latent disparities to that scale. The proposed decomposition
provides such an account. It does not replace legal analysis, model validation, or actuarial judgement. It gives actuaries a structured way to ask where a prediction disparity comes from: from an explicit sensitive effect, from proxy-mediated covariate profiles, from structural differences in score dispersion, or from the nonlinear geometry of the link function.

\section*{Acknowledgments}

The authors thank seminar participants at the University of Hong Kong, ESSEC Business School, Université Claude Bernard Lyon 1, and Nanyang Technological University for helpful comments on an earlier version of this paper.

\subsection*{Declaration of conflicting interest}

The authors declared no potential conflicts of interest with respect to the research, authorship, and/or publication of this article.

\subsection*{Funding statement}

AC acknowledges funding from the Natural Sciences and Engineering Research Council of Canada (NSERC) and the SCOR Foundation for Sciences.

\section*{Data availability}

The empirical design and the computation of the diagnostic quantities are described in \ref{app:rep} and ref{app:computation}. Replication code and
notebooks will be made available in a public repository upon acceptance.

\bibliographystyle{elsarticle-harv}
\bibliography{biblio}

\appendix

\section{Empirical Design and Reproducibility}
\label{app:rep}

This appendix describes the empirical design used in
Sections~\ref{subsec:lm-gaussian-illustration},
\ref{sec:logistic-case-study},
\ref{sec:poisson-case-study}, and
\ref{subsec:tweedie-case-study}. The numerical results can be reproduced by
following the preprocessing, model-fitting, and bootstrap steps described below.
Replication notebooks will be made available online at
\texttt{[repository link to be inserted]}.

\subsection{Data source}
\label{app:data-source}

We use the Medical Expenditure Panel Survey Household Component (MEPS-HC),
file HC-192, corresponding to the 2016 Full Year Consolidated Data File. The
file contains person-level information on demographics, insurance coverage,
health status, functional limitations, chronic conditions, health-care use, and
annual expenditures. The raw file contains $34{,}655$ individuals.

\subsection{Preprocessing}
\label{app:preprocessing}

The same preprocessing principles are applied across all empirical
illustrations. We first restrict the sample to adults by keeping individuals with
\[
  \texttt{AGE16X}\geq 18.
\]
MEPS negative integer codes, corresponding to refusal, inapplicable answers, or
other missing-value categories, are recoded as missing values. Listwise deletion
is then applied separately for each modelling task, using the outcome and
predictors required for that task. Valid zeros in the outcome variables are
retained unless an explicit sample restriction is stated below.

After adult restriction and common preprocessing on the baseline variables, the
working sample contains $n=24{,}549$ individuals. The final estimation sample
may differ slightly across models because the relevant outcome variable changes.

\subsection{Predictors}
\label{app:predictors}

The four modelling tasks use the same baseline set of non-sensitive predictors.
They are chosen to capture self-perceived health, chronic conditions, and severe
functional limitation. The predictors are:

\begin{itemize}
\item general health rating, $\texttt{RTHLTH53}$;
\item mental health rating, $\texttt{MNHLTH53}$;
\item hypertension diagnosis, $\texttt{HIBPDX}$;
\item coronary heart disease diagnosis, $\texttt{CHDDX}$;
\item asthma diagnosis, $\texttt{ASTHDX}$;
\item cancer diagnosis, $\texttt{CANCERDX}$;
\item help with activities of daily living, $\texttt{ADLHLP31}$.
\end{itemize}

The variables $\texttt{RTHLTH53}$ and $\texttt{MNHLTH53}$ are originally
recorded on a five-point scale from excellent to poor. In the empirical
implementation, they are treated as categorical health-status variables. The
diagnosis and limitation variables are treated as binary indicators after
valid-code recoding.

\subsection{Sensitive variables}
\label{app:sensitive-variables}

Each empirical model is fitted twice, once for each of two binary sensitive
comparisons.

The first comparison is based on insurance coverage status. The original MEPS
variable $\texttt{INSCOV16}$ distinguishes individuals with any private
insurance, individuals with public insurance only, and individuals uninsured
for the entire year. We define
\[
  \texttt{INSCOV16\_BIN}
  =
  \mathbf 1\{\texttt{INSCOV16}=3\},
\]
so that $\texttt{INSCOV16\_BIN}=1$ denotes uninsured individuals and
$\texttt{INSCOV16\_BIN}=0$ denotes insured individuals.

The second comparison is based on ethnicity. The original variable
$\texttt{RACETHX}$ distinguishes Hispanic individuals, non-Hispanic White
individuals, non-Hispanic Black individuals, non-Hispanic Asian individuals, and
individuals reporting another or multiple racial categories. We define
\[
  \texttt{RACE\_BIN}
  =
  \mathbf 1\{\texttt{RACETHX}=1\},
\]
so that $\texttt{RACE\_BIN}=1$ denotes Hispanic individuals and
$\texttt{RACE\_BIN}=0$ denotes non-Hispanic individuals.

\subsection{Linear model}
\label{app:linear}

The linear benchmark in
Section~\ref{subsec:lm-gaussian-illustration} uses annual total health-care
expenditures. The response is
\[
  Y
  =
  \log\bigl(1+\texttt{TOTEXP16}\bigr).
\]
For this identity-link benchmark only, the regression is restricted to
individuals with positive annual expenditures, implemented as
\[
  \texttt{TOTEXP16}>1.
\]
This restriction is used to obtain a simple Gaussian-style benchmark on a
log-expenditure scale. It is not used in the count and hospitalization models.

\subsection{Logistic model}
\label{app:logistic}

The logistic illustration in Section~\ref{sec:logistic-case-study} models the
probability of having at least one hospitalization during the year. The response
is
\[
  Y
  =
  \mathbf 1\{\texttt{IPDIS16}+\texttt{IPZERO16}>0\},
\]
where $\texttt{IPDIS16}$ is the number of hospital discharges and
$\texttt{IPZERO16}$ is the number of zero-night hospital stays. The model is a
logistic regression with canonical logit link.

\subsection{Poisson model}
\label{app:poisson}

The Poisson illustration in Section~\ref{sec:poisson-case-study} models the
number of office-based provider visits during the year:
\[
  Y=\texttt{OBTOTV16}.
\]
The model is a Poisson GLM with log link. Since the MEPS file records annual
utilization over the same calendar year for all individuals, no exposure offset
is used in the reported application. Valid zero counts are retained. Negative
MEPS missing-value codes are treated as missing and are not recoded as valid
zeros.

\subsection{Tweedie model}
\label{app:tweedie}

The Tweedie illustration in Section~\ref{subsec:tweedie-case-study} uses the
same utilization outcome as the Poisson illustration,
\[
  Y=\texttt{OBTOTV16},
\]
on the untruncated valid sample. Valid zeros are retained, and no upper-tail
trimming is applied. The model is fitted with a log link and a Tweedie
variance-power specification.

Because $\texttt{OBTOTV16}$ is a count variable, this empirical illustration
is not interpreted as a generative compound Poisson--Gamma model for continuous
claim amounts. It is used as a variance-power sensitivity analysis for a
zero-inflated and overdispersed utilization outcome. The decomposition remains
well defined for the fitted mean predictions because the diagnostic depends on
the inverse link and on the latent score moments.

A direct pure-premium application would replace $\texttt{OBTOTV16}$ by an
aggregate expenditure or claim amount, retain the zero observations, and apply
the same diagnostic quantities to fitted expected annual costs.

\section{Computation of the Diagnostic Quantities}
\label{app:computation}

This appendix summarizes how the empirical criteria and decomposition
components are computed after fitting a GLM.

\subsection{Group-wise fitted quantities}
\label{app:groupwise-quantities}

For each fitted model, let
\[
  \widehat\eta_i
  =
  \langle \bx_i,\widehat\bbeta\rangle
  +
  \widehat\gamma_{S_i}
  +
  \widehat\beta_0
\]
be the fitted latent score, and let
\[
  \widehat f_i=h(\widehat\eta_i)
\]
be the fitted prediction on the mean scale. For each sensitive group $s$, we
compute
\[
  \widehat\pi_s
  =
  \frac{1}{n}\sum_{i=1}^n \mathbf 1\{S_i=s\},
\]
\[
  \widehat m_s
  =
  \frac{1}{n_s}\sum_{i:S_i=s}\widehat\eta_i,
  \qquad
  \widehat b_s
  =
  \left\{
  \frac{1}{n_s}\sum_{i:S_i=s}
  (\widehat\eta_i-\widehat m_s)^2
  \right\}^{1/2},
\]
and
\[
  \widehat\mu_f(s)
  =
  \frac{1}{n_s}\sum_{i:S_i=s}\widehat f_i,
  \qquad
  \widehat\sigma_f(s)
  =
  \left\{
  \frac{1}{n_s}\sum_{i:S_i=s}
  (\widehat f_i-\widehat\mu_f(s))^2
  \right\}^{1/2}.
\]
The empirical between-group operators $\widehat\E_S$, $\widehat\Var_S$, and
$\widehat\Cov_S$ are computed using the weights $\widehat\pi_s$.

\subsection{Exact empirical criterion}
\label{app:u2-computation}

The empirical output-scale moment disparity is
\[
  \widehat U_2(f)
  =
  \widehat\Var_S\bigl(\widehat\mu_f(S)\bigr)
  +
  \widehat\Var_S\bigl(\widehat\sigma_f(S)\bigr).
\]
This is the main empirical criterion reported in the tables.

\subsection{Within-group proxy}
\label{app:utilde-computation}

The within-group proxy moments are
\[
  \widehat{\widetilde\mu}_f(s)
  =
  h(\widehat m_s),
  \qquad
  \widehat{\widetilde\sigma}_f(s)
  =
  h'(\widehat m_s)\widehat b_s.
\]
The proxy criterion is
\[
  \widehat{\widetilde U}_2(f)
  =
  \widehat\Var_S\bigl(\widehat{\widetilde\mu}_f(S)\bigr)
  +
  \widehat\Var_S\bigl(\widehat{\widetilde\sigma}_f(S)\bigr).
\]
The signed within-group approximation gap is
\[
  \widehat\Delta_{\mathrm{within}}
  =
  \widehat U_2(f)-\widehat{\widetilde U}_2(f),
\]
and the reported relative within gap is
\[
  \widehat r_{\mathrm{within}}
  =
  \frac{
  \left|
  \widehat U_2(f)-\widehat{\widetilde U}_2(f)
  \right|}
  {\widehat U_2(f)}
\]
whenever $\widehat U_2(f)>0$. The signed gap may be positive or negative,
whereas $\widehat r_{\mathrm{within}}$ is nonnegative by definition.

\subsection{Leading decomposition}
\label{app:d1-computation}

Let
\[
  \widehat{\bar m}
  =
  \widehat\E_S[\widehat m_S],
  \qquad
  \widehat{\bar b}
  =
  \widehat\E_S[\widehat b_S].
\]
The leading decomposition is
\begin{align*}
\widehat D_1(f)
&=
\{h'(\widehat{\bar m})\}^2
\Big[
\widehat\Var_S(\widehat\gamma_S)
+
\widehat\Var_S
\bigl(
\langle \widehat\bmu^{(S)},\widehat\bbeta\rangle
\bigr)
\\
&\qquad\qquad+
2\widehat\Cov_S
\bigl(
\widehat\gamma_S,
\langle \widehat\bmu^{(S)},\widehat\bbeta\rangle
\bigr)
+
\widehat\Var_S(\widehat b_S)
\Big]
\\
&\quad+
2h'(\widehat{\bar m})h''(\widehat{\bar m})
\widehat{\bar b}\,
\widehat\Cov_S(\widehat m_S,\widehat b_S)
\\
&\quad+
\{h''(\widehat{\bar m})\}^2
\widehat{\bar b}^{\,2}
\widehat\Var_S(\widehat m_S).
\end{align*}

The six reported components are the six summands in this expression:
direct mean, indirect mean, interaction, indirect structural, curvature
coupling, and curvature amplification.

The across-group approximation gap is
\[
  \widehat\Delta_{\mathrm{across}}
  =
  \widehat{\widetilde U}_2(f)-\widehat D_1(f).
\]
In the reported binary examples, this quantity is numerically negligible after
the local expansion used in the tables. When it is not negligible, it should be
reported together with $\widehat\Delta_{\mathrm{within}}$.

\subsection{Latent moment alignment}
\label{app:moment-alignment-computation}

For each observation in group $s$, the latent moment-aligned score is
\[
  \widehat\eta_i^{\mathrm{fair}}
  =
  \widehat{\bar m}
  +
  \frac{\widehat{\bar b}}{\widehat b_s}
  \left(
  \widehat\eta_i-\widehat m_s
  \right).
\]
The corresponding fitted prediction is
\[
  \widehat f_i^{\mathrm{fair}}
  =
  h\bigl(\widehat\eta_i^{\mathrm{fair}}\bigr).
\]
All performance measures and disparity diagnostics for the moment-aligned model
are computed from the fitted values $\widehat f_i^{\mathrm{fair}}$.

\subsection{Bootstrap}
\label{app:bootstrap}

Sampling uncertainty is assessed by a nonparametric bootstrap. For each
replication $b=1,\ldots,B$, we sample $n$ observations with replacement
from the working dataset, refit the model, recompute the group-wise quantities,
and recompute all diagnostics:
\[
  U_2(f),\qquad
  \widetilde U_2(f),\qquad
  D_1(f),\qquad
  \Delta_{\mathrm{within}},\qquad
  r_{\mathrm{within}}.
\]
The reported tables use $B=30$ bootstrap replications and display bootstrap
means with standard deviations in parentheses. Percentile intervals can be
reported in supplementary material when a more detailed uncertainty assessment
is desired.

\end{document}